\newtheorem{theorem}{Theorem}[section]
\newtheorem{lemma}[theorem]{Lemma}
\newtheorem{proposition}[theorem]{Proposition}
\newtheorem{coro}[theorem]{Corollary}
\theoremstyle{definition}
\newtheorem {definition}[theorem]{Definition}
\theoremstyle{remark}
\newtheorem{remark}[theorem]{Remark}
\newtheorem*{acknow}{Acknowledgments}
\numberwithin{equation}{section}
\numberwithin{theorem}{section}
\def\iy{\infty}
\def\be{\begin{equation}}
\def\ee{\end{equation}}
\def\bae{\begin{eqnarray}}
\def\eae{\end{eqnarray}}
\DeclareMathOperator{\Ai}{Ai}
\def\LBE{\mathrm{L}\beta\mathrm{E}}
\def\JBE{\mathrm{J}\beta\mathrm{E}}
\def\HBE{\mathrm{H}\beta\mathrm{E}}
\def\BE{\beta\mathrm{E}}
\begin{document}

\title[Asymptotics for   products of characteristic polynomials]{Asymptotics for products of characteristic polynomials in classical  $\beta$-Ensembles}

\author{Patrick Desrosiers} \address{Instituto Matem\'atica y F\'isica,
Universidad de Talca, 2 Norte 685, Talca,
Chile}\email{Patrick.Desrosiers@inst-mat.utalca.cl}

\author{Dang-Zheng Liu} \address{School of Mathematical Sciences, University of Science and Technology of China, Hefei,
230026, P.R. China \& Instituto Matem\'atica y F\'isica,
Universidad de Talca, 2 Norte 685, Talca,
Chile}\email{dzliu@ustc.edu.cn}

\date{June 2013}

\keywords{Random matrices, beta-ensembles,  Jack polynomials,  multivariate  hypergeometric functions, steepest descent method}

 \subjclass[2010]{15B52,41A60, 05E05, 33C70}

\begin{abstract}
We study the local    properties of eigenvalues for the Hermite (Gaussian), Laguerre (Chiral) and Jacobi $\beta$-ensembles of $N\times N$ random matrices.  More specifically, we calculate scaling  limits of the expectation value of products  of characteristic polynomials as $N\to\infty$.     In the bulk of the spectrum of each $\beta$-ensemble, the same scaling limit is found to be $e^{p_{1}}{}_1F_{1}$  whose exact expansion in terms of Jack polynomials is well known.  The scaling limit at the soft edge of the spectrum for the Hermite and Laguerre $\beta$-ensembles is shown to be a multivariate Airy function, which is defined as a generalized Kontsevich integral. As corollaries, when $\beta$ is  even,   scaling  limits  of the $k$-point correlation functions for the three ensembles are obtained.  The asymptotics of the multivariate Airy function for large and small arguments is also  given.   All the asymptotic results rely on a generalization of Watson's lemma and the steepest descent method for integrals of Selberg type.
\end{abstract}

\maketitle

\small
\tableofcontents
\normalsize

\newpage
\section{Introduction}
\subsection{$\beta$-Ensembles of random matrices}
In this article  we  consider three classical $\beta$-ensembles of Random Matrix Theory, namely the  Hermite (Gaussian), Laguerre (Chiral), and  Jacobi $\beta$-ensembles (H$\beta$E, L$\beta$E, and J$\beta$E for short).
Their eigenvalue probability density functions are equal to
\begin{eqnarray}\label{eqbetadensities}
\frac{1}{G_{\beta,N}}\prod_{1\leq i\leq N}e^{-\beta x^{2}_{i}/2} \,\prod_{1\leq j<k\leq
N}|x_{j}-x_{k}|^{\beta}, &\quad x_i\in \mathbb{R}, &\quad\text{H$\beta$E},\label{PDFforgauss}\\
\frac{1}{W_{\lambda_1, \beta,N}}\prod_{1\leq i\leq N}x_{i}^{\lambda_1}e^{-\beta x_{i}/2} \,\prod_{1\leq j<k\leq
N}|x_{j}-x_{k}|^{\beta},  &\quad x_i\in \mathbb{R}_+,& \quad \text{L$\beta$E},\label{PDFforlaguerre}\\
\frac{1}{S_{N}(\lambda_{1},\lambda_{2},\beta/2)}\prod_{1\leq i\leq N}x_{i}^{\lambda_{1}}(1-x_{i})^{\lambda_{2}} \,\prod_{1\leq j<k\leq
N}|x_{j}-x_{k}|^{\beta}, & \quad x_i\in (0,1),& \quad \text{J$\beta$E}.\label{PDFforjacobi}
\end{eqnarray}
The normalization constants are all special cases of Selberg's celebrated formula \cite{forrester} and are given in the appendix.

For special values of the Dyson index $\beta$, we recover more conventional random matrix ensembles \cite{forrester,mehta}. The $\beta=1, 2, 4$--ensembles indeed correspond to the ensembles of random  matrices whose respective probability measures  exhibit orthogonal, unitary or symplectic symmetry.

For general $\beta>0$, Dumitriu and Edelman \cite{due}
constructed tri-diagonal real symmetric matrices with independent entries randomly drawn from some specific distributions  and whose eigenvalues are distributed according to \eqref{PDFforgauss} and \eqref{PDFforlaguerre}.  Killip and Nenciu \cite{KN} later obtained a similar construction  for the J$\beta$E.  These explicit constructions play a key role in connecting the $\beta$-ensembles  with one-dimensional  stochastic differential equations in the limit $N\to \infty$ \cite{es,rr,rrv}.   Many probabilistic quantities of interest such as the global fluctuations, the gap probabilities, and the distribution of the largest eigenvalues  were also studied in the limit $N\to \infty$ (see for instance \cite{dv,due3,kill,rrv,rrz,vv,vv2,caer,FS}).  More recently, some universality results  concerning the general $\beta>0$ case have been obtained \cite{bey,bey2}. It was shown that as $N\to \infty$, the eigenvalues in the bulk (middle) of the spectrum of any $\beta$-ensemble defined by density  $\prod_{i=1}e^{-\beta V(x_i)}\prod_{1\leq i<j\leq N}|x_i-x_j|^\beta$, where $V$ is a real-valued function on $\mathbb{R}$, (thus excluding the circular ensembles) are correlated, when appropriately rescaled and after a small averaging,  as the eigenvalues of the  Hermite $\beta$-ensemble.

Apart from being related to the eigenvalues of random matrices, the densities \eqref{PDFforgauss}--\eqref{PDFforjacobi} have alternative physical interpretations.  Indeed, these densities  appeared very recently in theoretical high energy physics \cite{bempf,cem,mmm,sulkovski}. Moreover,  densities such as  \eqref{PDFforgauss}--\eqref{PDFforjacobi}  are equivalent to the Boltzmann factor for classical log-potential  Coulomb gas and to the ground state wave functions squared for Calogero-Sutherland $N$-body quantum systems of the  type $A_{N-1}$, $B_N$ and $BC_N$.   We refer the reader to Forrester's lectures \cite{forrester1} for more details.  The wave functions of the Calogero-Sutherland models  are typically  written in terms of a very special family  of symmetric polynomials, namely the Jack polynomials \cite{kadell,macdonald,stanley}. This connection between  the $\beta$-ensembles and the Jack polynomials has been exploited by many authors and has shown to be very fruitful \cite{forrester0,nagao,bf, due, due2, due3, duk, ER, ke, df, matsumoto1, des, matsumoto2, matsumoto3, dl,for2012}.
\subsection{Products of characteristic polynomials}
Now let $X$ be an $N\times N$ random matrix in some $\beta$-ensemble.  Our aim is to find exact and explicit expressions for the large $N$ limit of the expectation value of $\prod_{j=1}^n\det(X-s_j)$.  We will thus study the following expectation value:
\be
K_N(s_1,\ldots,s_n)=\Big\langle \prod_{i=1}^N\prod_{j=1}^n (x_i-s_j)\Big\rangle_{x\in \BE}
\ee
where   $x=(x_1,\ldots, x_N)$ denotes the eigenvalues of the random matrix $X$ and where the angle brackets  stand for the expectation value.   More explicitly, for the densities \eqref{PDFforgauss}--\eqref{PDFforjacobi},
\begin{multline}
K_N(s_1,\ldots, s_n)=\\ \frac{1}{Z_{N}}\int\cdots \int  \prod_{i=1}^N\prod_{j=1}^n(x_i-s_j) \exp\Big\{ - \frac{\beta}{2}\sum_{j=1}^{ N}V(x_j)\Big\} \prod_{1\leq i<j\leq
 N}|x_{i}-x_{j}|^{\beta} \,dx_1\cdots dx_N,
\end{multline}
where $Z_{N}$ is some normalization constant and
\be V(x_j)=\begin{cases} x^{2}_{j} & \text{H$\beta$E} \\ x_{j}-(2/\beta)\lambda_{1}\ln x_{j} & \text{L$\beta$E} \\
-(2/\beta)\lambda_{1}\ln x_{j} -(2/\beta)\lambda_{2}\ln (1-x_{j})& \text{L$\beta$E} \end{cases} \ee
Actually, we will see that it is more convenient to consider the weighted quantity
\be \label{weightedquantity}
\varphi_{ N}(s_1,\ldots,s_n)= \exp\Big\{ - \frac{1}{2}\sum_{j=1}^{n}V(s_j)\Big\} \, K_N(s_1,\ldots,s_n).
\ee

At this point, it is worth stressing  that if $\beta$ is even and if we let $n=k\beta$, then $\varphi_N(s_1,\ldots,s_n)$ gives access to the $k$-point correlation function \cite{forrester,mehta}:
\begin{multline} R_{k,N}(x_1,\ldots,x_k)= \\  \frac{(k+N)!}{N!}\frac{1}{Z_{k+N}}\int\cdots \int \exp\Big\{ - \frac{\beta}{2}\sum_{j=1}^{k+N}V(x_j)\Big\} \prod_{1\leq i<j\leq
k+N}|x_{i}-x_{j}|^{\beta} d x_{k+1} \cdots d x_{k+N}
\end{multline}
Indeed,
\be R_{k,N}(x_1,\ldots,x_k)=\frac{(k+N)!}{N!}\frac{Z_{N}}{Z_{k+N}}   \prod_{1\leq i<j\leq
k}(x_{i}-x_{j})^{\beta}\big[\varphi_N(s_1,\ldots,s_n)\big]_{\{s\}\mapsto \{x\}}.
\ee  Here the notation $\{s\}\mapsto \{x\}$ means that the variables $s_i$ are evaluated as follows:
\be s_{(i-1)\beta+j}=x_{i}\quad\text{for}\quad i=1,\ldots,k \quad\text{and}\quad j=1,\ldots,\beta .\ee

For the special values $\beta=1,2,4$,  the scaling limits of products of characteristic polynomials  are already well established  (see \cite{af,bds,bs} and references therein).  Thanks to the orthogonal polynomial method, they can be expressed as determinants or Pfaffians of one-variable special functions and their derivatives.  The functions in question depend on the bulk or edge of the spectrum we are looking at \cite{foredge}.  Close to the hard edge of the spectrum (which correspond to $s_i=0$ for Laguerre and $s_i=0,1$ for Jacobi ensembles), one gets a Bessel function or equivalently a $_0F_1(z)$ function.  In the bulk of the spectrum of each ensemble (for instance, about $s_i=1/2$, $2N$, and $0$ for Jacobi, Laguerre, and Hermite, respectively), the formulas involve trigonometric functions or complex functions of exponential type, such as ${}_0F_0(iz)= {}_1F_1(a;a;iz)$.  Finally, at the soft edge of the spectrum (i.e., about $s_j=4N$ for Laguerre and $s_j=\sqrt{2N}$ for Hermite) the scaling limits contain the Airy function $\Ai$.  In fact, these three regimes of large $N$ asymptotics (Hard-Bessel, Bulk-Trigonometric, Soft-Airy) correspond to the three most common universality classes for ensembles of random matrices with $\beta=1,2,4$ (e.g., see Chapter 7 in  \cite{forrester}).

Much less is known about the general $\beta>0$ case. For the $\HBE$,  Aomoto \cite{aom2} and more recently Su \cite{su}, obtained the limiting expectation value of the product of $n=2$ characteristic polynomials respectively in the bulk and at the soft edge of the spectrum.  When both $n$ and $N$ are finite but arbitrary, Baker and Forrester \cite{bf} proved that $K_N(s_1,\ldots,s_n)$ is either a multivariate Jacobi, Laguerre
or Hermite polynomials with parameter $\alpha=\beta/2$  depending on whether the density considered is \eqref{PDFforjacobi}, \eqref{PDFforlaguerre} or \eqref{PDFforgauss}. They also used the theory of multivariate hypergeometric functions developed by Kaneko \cite{kaneko} and Yan \cite{yan}, to express the expectation value $K_N(s_1,\ldots,s_n)$ in the Jacobi and Laguerre $\beta$-ensembles as an $n$-dimensional integral:
\be \label{genint} K_N(s_1,\ldots,s_n)=C_N\int_\mathscr{C}\ldots \int_\mathscr{C} \prod_{j=1}^n e^{-Np(t_j)} \,\prod_{1\leq j<k\leq n}|t_j-t_k|^{4/\beta} q_N(t;s)dt_1\cdots dt_n,
\ee
where $\mathscr{C}$ and $ q_N(t;s)$ respectively denote the circle in the complex plane (or intervals of real numbers)  and multivariate hypergeometric function whose precise forms depend on the ensemble.

Such passage from an $N$-dimensional integral with Dyson parameter $\beta$ to an $n$-dimensional integral with Dyson parameter $\beta'=4/\beta$ is an example of duality relation, which turns out to be very useful since the second $n$-dimensional integral allows us,  in principle at least, to take the limit $N\to \infty$. As observed in \cite{bf}, when $s_1=\cdots=s_n$, the function $q_N(t;s)$ greatly simplifies.  This was exploited in \cite{df} for determining the asymptotic behavior of the eigenvalue marginal density  $\rho_N(x)$.

Other dualities for the Hermite and Laguerre  $\beta$-Ensembles were obtained in \cite{des}.  One particular duality was used to prove that   as $N\to \infty$, for $\beta=1,2,4$, and for an appropriate choice of $A$ and $B$, the expectation $K_N(A+Bs_1,\ldots,A+Bs_n)$ in the $\HBE$ is proportional to the folowing generalized Airy function (see Section \ref{hypergeometric} for more details about the notation):
\be\label{Airydef} \mathrm{Ai}^{(\beta/2)}(s_1,\ldots,s_n)=\frac{1}{(2\pi)^n}\int_{\mathbb{R}^n}e^{ip_3(w)/3} |\Delta(w)|^{4/\beta}{\phantom{j}}_0\mathcal{F}_0^{(\beta/2)}(s_1,\ldots,s_n;iw) d^{n}w,  \ee
which is absolutely convergent for all $(s_1,\ldots,s_n)\in\mathbb{R}^n$ and $\beta\in\mathbb{R}_+$.   The case $\beta=2$ is proportional  to Kontsevich's matrix Airy function \cite{kon}.

\subsection{Main results}\label{sectionmainresults}

We prove that at the soft edge, the expectation value of products of characteristic polynomials in both $\LBE$ and  $\HBE$ actually lead to the same multivariate Airy function.  Note that the simplest asymptotics for the multivariate Airy function is given in Proposition \ref{propAiry}.

\begin{theorem}[Soft edge expectation] \label{theosoft} 
Let
\be \label{softAB} A,\, B=\begin{cases}(2N)^{1/2},\; 2^{-1/2}N^{-1/6} & \text{for the $\HBE$}, \\  4N, \qquad \; 2(2N)^{1/3} & \text{for the $\LBE$}. \end{cases}\ee   Then as $N\to \infty$,
\be \label{eqlimitingsoft}   \Phi_{\!^{N,n}}^{-1}  \,\varphi_N(A+Bs_1,\ldots,A+Bs_n)\sim(2\pi)^{n}(\Gamma_{4\!/\beta\!,n})^{-1}\,\mathrm{Ai}^{(\beta/2)}(s_1,\ldots,s_n),
\ee
where the coefficients $\Phi_{\!^{N,n}}$ and $\Gamma_{4\!/\beta\!,n}$ are given in the appendix.
\end{theorem}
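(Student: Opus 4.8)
The plan is to start from the duality formula \eqref{genint}, which expresses $K_N(s_1,\ldots,s_n)$ in the $\LBE$ as an $n$-dimensional integral of Selberg type, and then apply the steepest descent method announced in the abstract. First I would substitute the soft-edge scaling $s_j = A + B\,\sigma_j$ from \eqref{softAB} into the weighted quantity $\varphi_N$ defined in \eqref{weightedquantity}, pulling out the scalar prefactor $\Phi_{\!^{N,n}}$ that accounts for the $V$-weight and the Jacobian of the rescaling. The goal is to show that after this change of variables the integrand factorizes into (i) a cubic phase $e^{ip_3(w)/3}$ coming from the large-$N$ expansion of the exponential factor $\prod_j e^{-N p(t_j)}$, (ii) the Vandermonde $|\Delta(w)|^{4/\beta}$, and (iii) the limiting multivariate hypergeometric kernel ${}_0\mathcal{F}_0^{(\beta/2)}(s;iw)$, thereby reproducing exactly the definition \eqref{Airydef} of $\mathrm{Ai}^{(\beta/2)}$.

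\emph{Phase analysis.} The core of the argument is the local expansion of the potential $p(t)$ near its relevant critical point at the soft edge. I would write $t_j = t_* + c\,N^{-\gamma} w_j$ for the appropriate soft-edge critical point $t_*$ and exponent $\gamma$, and Taylor-expand $N\,p(t_j)$ about $t_*$. The first derivative vanishes by the defining property of the critical point; the second derivative vanishes as well (this is precisely what distinguishes a \emph{soft edge} from a bulk point), so the leading nontrivial contribution is cubic. Matching the power of $N$ carried by the cubic term against the scaling $N^{-\gamma}$ fixes both $\gamma$ and the constant $c$, and produces the phase $e^{ip_3(w)/3}$ after rotating the contour $\mathscr{C}$ to the steepest-descent direction (which is why the $w$-integral in \eqref{Airydef} runs over $\mathbb{R}^n$ with an $i$ in the exponent). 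The Vandermonde $\prod_{j<k}|t_j-t_k|^{4/\beta}$ transforms covariantly under the linear rescaling, yielding $|\Delta(w)|^{4/\beta}$ times a power of $N$ that is collected into $\Gamma_{4\!/\beta\!,n}$ and $\Phi_{\!^{N,n}}$.

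\emph{Hypergeometric kernel.} In parallel I would track the multivariate hypergeometric factor $q_N(t;s)$. Under the simultaneous scaling of $t$ and $s$ near the soft edge, I expect $q_N$ to converge to the confluent limit ${}_0\mathcal{F}_0^{(\beta/2)}(s_1,\ldots,s_n; iw)$; this should follow from the Jack-polynomial series expansion of $q_N$ by a term-by-term limit, controlling the ratios of the relevant generalized Pochhammer symbols as $N\to\infty$. Once the factorization of the integrand is established and the scalar normalizations are identified with $\Phi_{\!^{N,n}}$ and $\Gamma_{4\!/\beta\!,n}$, assembling the pieces gives the right-hand side of \eqref{eqlimitingsoft}. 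Finally, I would argue that the $\HBE$ case follows either by the same computation applied to its own duality integral, or by reduction to the $\LBE$ via the standard quadratic map relating Hermite and Laguerre weights, which is consistent with the two lines of \eqref{softAB}.

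The main obstacle will be the rigorous justification of the steepest descent passage: one must dominate the tails of the rescaled integrand uniformly in $N$ so as to exchange limit and integral, and the $|t_j-t_k|^{4/\beta}$ interaction together with the (generally non-polynomial) hypergeometric kernel makes a naive dominated-convergence bound delicate, especially for non-integer $4/\beta$ where $\mathscr{C}$ consists of real intervals rather than closed contours. I anticipate that the promised generalization of Watson's lemma is the technical device that supplies the required uniform estimates and controls the coalescence of the $n$ integration variables near the critical point; the remaining work is bookkeeping to match the explicit constants.
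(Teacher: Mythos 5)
Your plan follows the paper's own route essentially step for step: the duality integral representation, the coalescence of the two bulk saddle points into a single order-$2$ saddle at the soft edge (so $p'(x_0)=p''(x_0)=0$, $p'''(x_0)\neq 0$) with the $N^{-1/3}$ local scaling and cubic phase, rotation of the contours onto $\mathbb{R}^n$, covariance of the Vandermonde, and identification of the limiting integral with the definition \eqref{Airydef}, with the tail and absolute-value issues handled exactly as you anticipate by the generalized Watson's lemma and Laplace machinery of Section \ref{laplace} (in particular Remark \ref{remarkmu}, whose matching $\bar\mu=\mu=3$ is what absorbs the $N^{1/3}$-dilated hypergeometric kernel into the local variable, together with the $H_\nu$ representation \eqref{integralrepofabsolutevaluepowerfunction1} and Corollary \ref{onesaddle}). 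The only deviations are inessential: in the representations the paper actually uses (the Hermite duality of \cite{des} and the contour formula \eqref{forrester11} for the $\LBE$) the kernel is exactly a ${}_0\mathcal{F}_0$, so no term-by-term confluence of Pochhammer ratios is needed — the rescaling and recentering are exact via the translation formulas of Proposition \ref{proppractical} and homogeneity — and the $\HBE$ is treated directly by its own duality integral rather than by a quadratic map to the $\LBE$.
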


The above  theorem suggests that the multivariate Airy function is the universal expectation value at the soft edge.  In other words, for any  $\beta$-ensemble characterized by a potential  $V$,  the average of the  product of $n$ characteristic polynomials, when appropriately rescaled and re-centered at the soft edge, should become independent of $V$  and should be proportional to $\mathrm{Ai}^{(\beta/2)}(s_1,\ldots,s_n)$ as $N\to\infty$.

The indication for universality is even stronger in the bulk of the spectrum.  We indeed find  that the three classical $\beta$-ensembles possess the same asymptotic limit for the weighted expectation value  $\varphi_N$ in the bulk, which turns out to be a multivariate hypergeometric function of exponential type. We only state the result in the case of $n=2m$, for simplicity; for $n=2m-1$,  the combination of $\varphi_N$ and $\varphi_{N-1}$ exhibits a universal pattern, which is given in Theorem \ref{odduniversality}.
\begin{theorem}[Bulk expectation] \label{theobulk} For the $\HBE$, $\LBE$, and $\JBE$, let $A$ be equal to $\sqrt{2N}$, $4N$, and $1$, respectively.   Let
\be
\rho(u)=\begin{cases} \tfrac{2}{\pi}\sqrt{1-u^{2}},\quad  u\in (-1,1), &\HBE, \\
  \tfrac{2}{\pi}\sqrt{\tfrac{1-u}{u}}\phantom{tt},\quad  u\in (0,1), &\LBE, \\
	\tfrac{1}{\pi}\tfrac{1}{\sqrt{u(1-u)}},\quad  u\in (0,1), & \JBE. \end{cases}\ee
 Assume  moreover that  $n=2m$ is even. Then  as $N\to \infty$,
\begin{equation}\label{eqlimitingbulk}\frac{1}{\Psi_{\!^{N,2m}}}\varphi_N\Big(A u+\frac{A s_1}{\rho\!(u) N},\ldots,A u+\frac{A s_n}{\rho\!(u) N}\Big)\,\sim\,
\gamma_{m}\!(\!4\!/\!\beta\!)\, e^{-i\pi p_1(s)}\!\!{\phantom{j}}_1F_1^{(\beta/2)}(2m/\beta;2n/\beta;2i\pi s)\end{equation}
where
$\Psi_{\!^{N,2m}}$ and $\gamma_{m}\!(\!4\!/\!\beta\!)$  respectively stand for the coefficient given in \eqref{eqPsi} and \eqref{univcoefficientevenbulk}.
\end{theorem}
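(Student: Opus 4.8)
The plan is to start from the dual integral representation \eqref{genint} (together with its Hermite analogue), which expresses $K_N$, and hence $\varphi_N$, as a \emph{fixed} $n$-dimensional contour integral with Dyson parameter $4/\beta$, and then to extract its large-$N$ asymptotics by the steepest descent method. The decisive structural feature is that $n$ stays fixed while $N\to\infty$, so that the factor $\prod_{j=1}^n e^{-Np(t_j)}$ forces each variable $t_j$ to localize at a critical point of the effective potential $p$. First I would locate these critical points for each of the three ensembles. In the bulk, where the equilibrium density $\rho(u)$ is strictly positive, $p'$ possesses a complex-conjugate pair of simple zeros $t_\pm=t_\pm(u)$ whose imaginary part is governed by $\rho(u)$; these two saddles are responsible for the bulk oscillations and for the eventual universality across $\HBE$, $\LBE$, and $\JBE$.

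Next I would insert the bulk scaling $s_j\mapsto Au+As_j/(\rho(u)N)$ and determine the leading behaviour of the multivariate hypergeometric kernel $q_N(t;s)$ in a neighbourhood of the saddles. This is where the generalization of Watson's lemma enters: it controls the large-$N$ behaviour of $q_N$ after rescaling and lets one replace it, near the saddles, by the confluent multivariate hypergeometric data that ultimately assemble into the $_1F_1^{(\beta/2)}$ of the statement. I would then deform the contours $\mathscr{C}$ so that they pass through $t_+$ and $t_-$ along paths of steepest descent.

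Because $n=2m$ is even, the dominant contribution comes from the $\binom{2m}{m}$ configurations in which exactly $m$ of the variables collapse onto $t_+$ and the remaining $m$ onto $t_-$. For each such configuration the weight $\prod_{j<k}|t_j-t_k|^{4/\beta}$ factorizes into two intra-cluster Selberg-type factors (one per saddle) and an inter-cluster factor $\prod|t_j^{+}-t_k^{-}|^{4/\beta}$: the former, combined with the Gaussian fluctuations coming from the quadratic part of $p$, produce the normalization $\Psi_{\!^{N,2m}}$ and the universal coefficient $\gamma_m(4/\beta)$, while the latter, evaluated at the saddles, supplies the coupling that binds the two conjugate clusters. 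Assembling the clusters, summing over configurations, and matching the resulting Jack-polynomial series to the definition of $_1F_1^{(\beta/2)}$ yields the indices $2m/\beta$ and $2n/\beta$ (reflecting the cluster sizes $m$ and $n=2m$), while the opposite orientations and phases of the two conjugate saddles produce the overall factor $e^{-i\pi p_1(s)}$ and the argument $2i\pi s$.

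The hardest part will be the rigorous control of the steepest descent for this Selberg-type integral: one must justify the two-cluster decomposition uniformly, bound the error terms despite the possibly singular weight $|\Delta(t)|^{4/\beta}$, and verify that the rescaled kernel $q_N(t;s)$ converges to its confluent limit uniformly on the relevant contours. Establishing this controlled steepest descent for integrals of Selberg type, rather than the subsequent bookkeeping that matches the Jack series, is the genuine technical obstacle, and it is exactly the generalized Watson's lemma and the saddle-point analysis announced in the abstract that I would rely on to carry it through.
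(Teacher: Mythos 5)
Your proposal follows essentially the same route as the paper's proof: the duality/integral representation with Dyson parameter $\beta'=4/\beta$, steepest descent through the complex-conjugate pair of simple saddles $x_\pm$, the dominant $\binom{2m}{m}$ balanced two-cluster configurations of Theorem \ref{teo2saddle} and Corollary \ref{twosaddle} (the exponent count $l_\nu+(n-l)_\nu$ being minimized at $l=m$), with the intra-cluster Gaussian--Selberg integrals producing $\Psi_{\!^{N,2m}}$ and $\gamma_m(4/\beta)$, the inter-cluster factor $(x_+-x_-)^{\nu m^2}$, and the kernel evaluated at $(x_+^m,x_-^m)$ assembling into $e^{-i\pi p_1(s)}\,{}_1F_1^{(\beta/2)}(2m/\beta;2n/\beta;2i\pi s)$. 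The only inessential misattribution is that the kernel's limit is controlled not by the generalized Watson's lemma but by the translation formulas of Proposition \ref{proppractical} and Corollary \ref{practical} (plus, in the Jacobi case only, the elementary confluence ${}_1\mathcal{F}_0^{(\beta/2)}(-N;s/(\rho N);\cdot)\to{}_0\mathcal{F}_0^{(\beta/2)}$), while the singular weight $|\Delta(t)|^{4/\beta}$ is tamed via the integral representation \eqref{integralrepofabsolutevaluepowerfunction1} together with condition (vi), rather than by a uniform two-cluster error analysis performed directly on the absolute value.
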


 It's  worth emphasizing that the universal coefficient $\gamma_{m}\!(\!4\!/\!\beta\!)$, when $\beta=2$,
	 is conjectured  to be closely related to the moments of the Riemann's $\zeta$-function \cite{bh,ks}.

The hard edge also involves a single hypergeometric series, which is \!\!${\phantom{j}}_0F^{(\beta/2)}_{1}$.  The latter can be seen as a multivariate Bessel function.  We may thus surmise once more that the asymptotic expectation at the hard edge is universal.

\begin{theorem}[Hard edge expectation]\label{theohard}  Let $B$ be  equal to $ N^{-1}$ and $N^{-2}$ for  the $\LBE$ and $\JBE$, respectively.
Then as $N\to\infty$,
\be\label{eqlimitinghard}\frac{1}{\xi_{N,n}}K_N(B s_1,\ldots,B s_n)\sim \!{\phantom{j}}_0 F^{(\beta/2)}_{1}\left((2/\beta)(\lambda_1 +n);-s_1,\ldots,-s_n\right).
\ee
The coefficient $\xi_{N,n}$ is given in \eqref{eqxi}.
\end{theorem}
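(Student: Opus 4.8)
The plan is to avoid the $n$-fold integral \eqref{genint} and instead exploit the fact that $K_N$ is already known in closed form. By Baker and Forrester \cite{bf}, for the $\LBE$ (resp. $\JBE$) the average $K_N(s_1,\ldots,s_n)$ equals, up to the explicit constant $\xi_{N,n}$, the multivariate Laguerre (resp. Jacobi) polynomial in the variables $s_1,\ldots,s_n$ with Jack parameter $\alpha=\beta/2$, indexed by the rectangular partition $(N^n)$. Using the hypergeometric form of these polynomials due to Kaneko \cite{kaneko} and Yan \cite{yan}, together with the simplification available for a rectangular index, I would write, for the $\LBE$,
\be
\xi_{N,n}^{-1}\,K_N(s_1,\ldots,s_n)={}_1F_1^{(\beta/2)}(-N;c;s_1,\ldots,s_n)=\sum_{\kappa}\frac{[-N]^{(\beta/2)}_\kappa}{[c]^{(\beta/2)}_\kappa}\frac{C^{(\beta/2)}_\kappa(s)}{|\kappa|!},
\ee
where $c=(2/\beta)(\lambda_1+n)$ and the sum is effectively over partitions $\kappa\subseteq(N^n)$, since $[-N]^{(\beta/2)}_\kappa$ vanishes as soon as $\kappa_1>N$. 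For the $\JBE$ I would use the analogous terminating ${}_2F_1^{(\beta/2)}(-N,b_N;c;s)$, with the same denominator parameter $c$ and a second numerator parameter $b_N$ that grows linearly in $N$.

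Next I would substitute the hard-edge scaling $s_j\mapsto Bs_j$ and use the homogeneity $C^{(\beta/2)}_\kappa(Bs)=B^{|\kappa|}C^{(\beta/2)}_\kappa(s)$ to factor $B^{|\kappa|}$ out of the $\kappa$-th term. The arithmetic core is then a termwise limit of the Pochhammer ratio. With $B=N^{-1}$ for the $\LBE$ and $[-N]^{(\beta/2)}_\kappa=\prod_{(i,j)\in\kappa}\big(-N+(j-1)-(2/\beta)(i-1)\big)$, one gets
\be
B^{|\kappa|}[-N]^{(\beta/2)}_\kappa=\prod_{(i,j)\in\kappa}\Big(-1+\tfrac{j-1}{N}-\tfrac{2(i-1)}{\beta N}\Big)\xrightarrow[N\to\infty]{}(-1)^{|\kappa|},
\ee
while $[c]^{(\beta/2)}_\kappa$ is independent of $N$. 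For the $\JBE$ the choice $B=N^{-2}$ together with $b_N\sim\mathrm{const}\cdot N$ produces $B^{|\kappa|}[-N]^{(\beta/2)}_\kappa[b_N]^{(\beta/2)}_\kappa\to(-1)^{|\kappa|}$, the two numerator symbols jointly supplying the factor $N^{2|\kappa|}$ that cancels $B^{|\kappa|}$; this is exactly the multivariate version of the double confluence ${}_2F_1(a,b;c;z/(ab))\to{}_0F_1(c;z)$. In either case each term tends to $(-1)^{|\kappa|}C^{(\beta/2)}_\kappa(s)/([c]^{(\beta/2)}_\kappa|\kappa|!)$, and summing these recovers ${}_0F^{(\beta/2)}_1(c;-s_1,\ldots,-s_n)$ after one more use of homogeneity to absorb $(-1)^{|\kappa|}$ into the argument.

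The one genuinely analytic point is the interchange of the limit with the sum over the growing set of partitions, which I would settle by dominated convergence on the poset of partitions. For $(i,j)\in\kappa\subseteq(N^n)$ and $N$ large each factor above has absolute value at most $1+2(n-1)/(\beta N)\le 2$, so $|B^{|\kappa|}[-N]^{(\beta/2)}_\kappa|\le 2^{|\kappa|}$ uniformly in $N$; since the Jack polynomials have nonnegative monomial coefficients for $\beta>0$, one also has $|C^{(\beta/2)}_\kappa(s)|\le C^{(\beta/2)}_\kappa(|s_1|,\ldots,|s_n|)$. Because $c=(2/\beta)(\lambda_1+n)$ exceeds $2(n-1)/\beta$, every factor of $[c]^{(\beta/2)}_\kappa$ is positive for partitions with at most $n$ parts, so each term is dominated by the corresponding term of the convergent entire series ${}_0F^{(\beta/2)}_1(c;2|s_1|,\ldots,2|s_n|)$, and dominated convergence applies. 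I expect this domination, together with the bookkeeping needed to match the constant $\xi_{N,n}$ on the nose rather than only up to a scalar, to be the main obstacle; the $\JBE$ adds only the mild extra task of pinning down the growth rate of $b_N$ from the explicit Baker--Forrester formula.
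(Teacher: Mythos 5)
Your proposal follows essentially the same route as the paper: the authors likewise invoke the exact terminating closed forms $\xi_{N,n}^{-1}K_N(s)={}_1F_1^{(\beta/2)}(-N;(2/\beta)(\lambda_1+n);s)$ for the $\LBE$ and the analogous ${}_2F_1^{(\beta/2)}$ with second parameter $(2/\beta)(\lambda_1+\lambda_2+n+1)+N-1$ for the $\JBE$ (from Kaneko and Baker--Forrester, cf.\ Proposition 13.2.5 of \cite{forrester}), substitute $s_i\mapsto s_i/N$ resp.\ $s_i/N^2$, and pass to the termwise limit of the Pochhammer ratios. Your dominated-convergence bound (uniform geometric control of $B^{|\kappa|}[-N]^{(\beta/2)}_\kappa$ over $\kappa\subseteq(N^n)$, positivity of $[c]^{(\beta/2)}_\kappa$ for $\ell(\kappa)\le n$, monomial-positivity of Jack polynomials) correctly fills in the interchange of limit and sum that the paper dismisses with ``one readily proves.''
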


Before explaining how we prove these theorems, let us give some immediate consequences. As previously mentioned, if $\beta$ is even and if  $n=2m=\beta k$, then  scaling limits of  the $k$-point correlation functions  immediately follow  from  the above theorems.  Since the hard-edge case for the $\LBE$ and $\JBE$ is already  known (see Section 13.2.5 in \cite{forrester}),  we only  display below the  results  at the soft edge and in the bulk.

\begin{coro}[Soft edge correlations]\label{theosoftCF} Assume that $\beta$ is even and  $n=\beta k$.
Let $A$ and $B$ be as in Theorem \ref{theosoft}. Then  as $N\to \infty$ in the $\HBE$ and   $\LBE$,
\be \label{softuniversal}  B^{k}R_{k,N}(A+Bx_1,\ldots,A+Bx_k)\;\sim \;a_{k}(\beta)\; |\Delta(x)|^{\beta} \big[ \mathrm{Ai}^{\!(\beta/2)}(s)\big]_{\{s\}\mapsto \{x\}}\; ,
\ee
where $a_{k}(\beta)$ is given in \eqref{univcoefficienta}.
\end{coro}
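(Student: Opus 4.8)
The plan is to derive Corollary \ref{theosoftCF} directly from the soft-edge expectation result, Theorem \ref{theosoft}, by exploiting the exact identity relating the correlation function $R_{k,N}$ to the weighted expectation $\varphi_N$ that is recorded in the introduction. Recall that for even $\beta$ and $n=\beta k$ one has
\be
R_{k,N}(x_1,\ldots,x_k)=\frac{(k+N)!}{N!}\frac{Z_N}{Z_{k+N}}\,\prod_{1\leq i<j\leq k}(x_i-x_j)^{\beta}\,\big[\varphi_N(s_1,\ldots,s_n)\big]_{\{s\}\mapsto\{x\}},
\ee
where the substitution $\{s\}\mapsto\{x\}$ collapses each block of $\beta$ consecutive $s$-variables onto a single $x_i$. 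First I would insert the soft-edge scaling $s_{(i-1)\beta+j}\mapsto(x_i-A)/B$, i.e.\ set each argument of $\varphi_N$ equal to $A+Bx_i$ within the $i$-th block, and apply Theorem \ref{theosoft} with the coefficients $A,B$ given in \eqref{softAB}.

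The core of the argument is then a bookkeeping of the prefactors. Applying Theorem \ref{theosoft} replaces $\varphi_N(A+Bs_1,\ldots,A+Bs_n)$ by its asymptotic equivalent $\Phi_{N,n}(2\pi)^n(\Gamma_{4/\beta,n})^{-1}\mathrm{Ai}^{(\beta/2)}(s_1,\ldots,s_n)$, and I would then perform the collapse $\{s\}\mapsto\{x\}$ in the limiting multivariate Airy function to obtain $\big[\mathrm{Ai}^{(\beta/2)}(s)\big]_{\{s\}\mapsto\{x\}}$. The Vandermonde factor $\prod_{i<j}(x_i-x_j)^\beta$ already present in the exact identity must be matched against the rescaling: writing $x_i-x_j$ in terms of the rescaled arguments produces a factor $B^{\binom{k}{2}\beta}|\Delta(x)|^\beta$ up to signs, since for even $\beta$ the power is even and the absolute value is harmless. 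Combining this with the overall factor $B^k$ on the left-hand side of \eqref{softuniversal} and with the asymptotics of the ratio $Z_N/Z_{k+N}$ and of the combinatorial factor $(k+N)!/N!$, all of which are explicit through Selberg's formula in the appendix, collects into the single constant $a_k(\beta)$ defined in \eqref{univcoefficienta}. I would verify that the $N$-dependence of $\Phi_{N,n}$, of $Z_N/Z_{k+N}$, and of the various powers of $B$ cancels exactly, leaving an $N$-independent coefficient; this is precisely what makes the stated limit finite and nonzero.

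The main obstacle I expect is the careful asymptotic evaluation and cancellation of the $N$-dependent normalization constants. The ratio $Z_N/Z_{k+N}$ for the Selberg-type normalizations, together with the factor $\Phi_{N,n}$ and the powers of $B=2^{-1/2}N^{-1/6}$ (Hermite) or $2(2N)^{1/3}$ (Laguerre), must be expanded to leading order in $N$ and shown to conspire so that the explicit divergent or vanishing powers of $N$ annihilate. This requires using the precise large-$N$ form of the Selberg constants $G_{\beta,N}$ and $W_{\lambda_1,\beta,N}$ via Stirling's formula for the Gamma functions appearing in the appendix, and tracking the exponents meticulously; an off-by-a-power error here would spoil the finiteness of the limit. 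A secondary, milder point is to confirm that the sign issues in $\prod_{i<j}(x_i-x_j)^\beta$ versus $|\Delta(x)|^\beta$ are immaterial because $\beta$ is even, and that the block-collapse commutes with taking the limit, which is justified since the convergence in Theorem \ref{theosoft} is locally uniform on $\mathbb{R}^n$ owing to the absolute convergence of the integral defining $\mathrm{Ai}^{(\beta/2)}$.
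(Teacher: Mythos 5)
Your proposal is correct and follows essentially the same route as the paper: the proofs of Corollary \ref{theosoftCF} for the $\HBE$ and $\LBE$ proceed exactly by combining the exact identity \eqref{eqcorr} (and its Laguerre analogue) with the soft-edge asymptotics \eqref{softHE}, \eqref{softLE} of $\varphi_N$, the large-$N$ expansion of the ratios $G_{\beta,N}/G_{\beta,k+N}$ and $W_{\lambda_1,\beta,N}/W_{\lambda_1,\beta,k+N}$ from Selberg's formula, and the extraction of $B^{\beta k(k-1)/2}|\Delta(x)|^\beta$ from the rescaled Vandermonde (harmless since $\beta$ is even). The only detail you leave implicit is the simplification of the resulting constant into $a_k(\beta)$, which the paper carries out using Gauss's multiplication formula for the Gamma function.
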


\begin{coro}[Bulk correlations]\label{theobulkCF} Assume that $\beta$ is even and  $n=\beta k$.  Let $A$ and $\rho(u)$ be as in Theorem \ref{theobulk}.
Then  as $N\to \infty$ in the $\HBE$,   $\LBE$  and   $\JBE$,
\begin{multline}\label{bulkuniversal}   \Big(\frac{A}{\rho(u) N}\Big)^{k}R_{k,N}\Big(Au+\frac{A x_1}{\rho(u) N},\ldots, Au+\frac{A x_k}{\rho(u) N}\Big)\;\sim \;\\
b_{k}(\beta)\; |\Delta(2\pi x)|^{\beta}\; \Big[e^{-i\pi p_1(s)} \!{\phantom{j}}_1F_1^{(\beta/2)}(n/\beta;2n/\beta;2i\pi s)\Big]_{\{s\}\mapsto \{x\}},
\end{multline}
where $b_{k}(\beta)$ is given in \eqref{univcoefficientb}.
\end{coro}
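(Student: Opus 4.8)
The plan is to deduce the corollary directly from Theorem \ref{theobulk} by means of the exact identity relating the $k$-point correlation function to the weighted average $\varphi_N$ recorded in the introduction, namely
\[
R_{k,N}(y_1,\ldots,y_k)=\frac{(k+N)!}{N!}\,\frac{Z_{N}}{Z_{k+N}}\prod_{1\le i<j\le k}(y_i-y_j)^{\beta}\,\big[\varphi_N(s_1,\ldots,s_n)\big]_{\{s\}\mapsto\{y\}},
\]
valid for even $\beta$ and $n=\beta k$. Since $\beta$ is even, $n=\beta k=2m$ is automatically even, so the hypothesis $n=2m$ of Theorem \ref{theobulk} is met with $m=\beta k/2$, and the first index $2m/\beta$ of the confluent function equals $n/\beta=k$, matching the statement. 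First I would substitute the rescaled and re-centred arguments $y_i=Au+A x_i/(\rho(u)N)$ into this identity; under the collision rule $\{s\}\mapsto\{y\}$, setting $s_{(i-1)\beta+j}=x_i$ turns the shifted argument into $Au+A s_{(i-1)\beta+j}/(\rho(u)N)$ evaluated at $s_\ell=x_i$, so the bracket is exactly the left-hand side of \eqref{eqlimitingbulk} followed by the substitution $\{s\}\mapsto\{x\}$.

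Next I would treat the two $N$-dependent pieces separately. The Vandermonde factorizes cleanly, $y_i-y_j=\tfrac{A}{\rho(u)N}(x_i-x_j)$, so $\prod_{i<j}(y_i-y_j)^{\beta}=\big(\tfrac{A}{\rho(u)N}\big)^{\beta\binom{k}{2}}\prod_{i<j}(x_i-x_j)^{\beta}$; because $\beta$ is even this last product equals $|\Delta(x)|^{\beta}=(2\pi)^{-\beta\binom{k}{2}}|\Delta(2\pi x)|^{\beta}$, which produces the factor $|\Delta(2\pi x)|^{\beta}$ appearing on the right of \eqref{bulkuniversal} together with an explicit power of $2\pi$. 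For the bracket I would invoke Theorem \ref{theobulk}, which gives $\varphi_N(Au+As_1/(\rho(u)N),\ldots)\sim \Psi_{N,2m}\,\gamma_m(4/\beta)\,e^{-i\pi p_1(s)}{}_1F_1^{(\beta/2)}(n/\beta;2n/\beta;2i\pi s)$. One point to check here is that the asymptotic equivalence persists after the collision $\{s\}\mapsto\{x\}$: both $\varphi_N$ and the limiting product are symmetric and jointly continuous in the $s$-variables, so provided the convergence in Theorem \ref{theobulk} is uniform on compact subsets of $\mathbb{R}^n$—which I expect it to be, being obtained by steepest descent—the limit may legitimately be evaluated on the diagonal where the $s$'s coincide in blocks of size $\beta$.

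With these substitutions the left-hand side of \eqref{bulkuniversal} becomes, up to the limiting bracket $\big[e^{-i\pi p_1(s)}{}_1F_1^{(\beta/2)}(n/\beta;2n/\beta;2i\pi s)\big]_{\{s\}\mapsto\{x\}}$ and the factor $|\Delta(2\pi x)|^{\beta}$, the scalar combination
\[
\Big(\tfrac{A}{\rho(u)N}\Big)^{k+\beta\binom{k}{2}}\frac{(k+N)!}{N!}\,\frac{Z_{N}}{Z_{k+N}}\,\Psi_{N,2m}\,\gamma_m(4/\beta)\,(2\pi)^{-\beta\binom{k}{2}}.
\]
The remaining and genuinely computational task—the main obstacle—is to show that this product tends to a finite nonzero limit as $N\to\infty$ and to identify that limit with $b_k(\beta)$ of \eqref{univcoefficientb}. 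This is pure bookkeeping of constants, but it requires the explicit Selberg values of $Z_N$ and $Z_{k+N}$ from the appendix, the explicit form of $\Psi_{N,2m}$ in \eqref{eqPsi}, and Stirling-type asymptotics for the ratios of Gamma functions that appear; the powers of $N$ hidden in $A$, in the Selberg ratio, and in $\Psi_{N,2m}$ must cancel exactly against $(A/(\rho(u)N))^{k+\beta\binom{k}{2}}$, and it is precisely this cancellation that both confirms the scaling exponents chosen in the statement and fixes the universal constant $b_k(\beta)$. I would carry out this last step ensemble by ensemble ($\HBE$, $\LBE$, $\JBE$), since $A$, $\rho(u)$, and the relevant Selberg constants differ, and then verify that the three computations produce one and the same $b_k(\beta)$, thereby establishing the claimed universality.
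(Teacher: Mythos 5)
Your proposal follows essentially the same route as the paper: the proofs in Section 4 deduce the corollary ensemble by ensemble from the exact identity $R_{k,N}=\frac{(k+N)!}{N!}\frac{Z_N}{Z_{k+N}}\prod_{j<l}(x_j-x_l)^{\beta}\big[\varphi_N\big]_{\{s\}\mapsto\{x\}}$, the bulk asymptotics \eqref{evenbulkHE}, \eqref{evenbulkLE}, \eqref{evenbulkJE} together with asymptotics of the Selberg-constant ratios, and then carry out precisely the constant bookkeeping you defer, using Gauss's multiplication formula to identify $\gamma_m(4/\beta)$ times the residual factors with $b_k(\beta)$ and Corollary \ref{practical} to convert ${}_0\mathcal{F}_0(i\pi s;1^m,(-1)^m)$ into the ${}_1F_1$ form. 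Your worry about uniform convergence at the collision $\{s\}\mapsto\{x\}$ is unnecessary: Theorem \ref{theobulk} holds pointwise at any fixed $s\in\mathbb{R}^n$, including points with repeated coordinates, so one simply applies it at the collided point.
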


We stress that the function on the right-hand side of \eqref{bulkuniversal} already appeared in Random Matrix Theory: it is exactly the same as the limiting $k$-point correlation function of the circular $\beta$-ensemble with $\beta$ even, which is equal to the function $\rho_{(k)}^\text{bulk}(x_1,\ldots,x_k)$ given in Proposition 13.2.3 of \cite{forrester}.  Moreover, as mentioned previously,  the limiting $k$-point correlation function in the bulk of the  $\HBE$ was recently shown to be  universal \cite{bey,bey2} .  Note that the latter references do not give however,  the  explicit form  of this universal   $k$-point correlation function.   As a consequence of Corollary  \ref{theobulkCF},   we now know that the universal $k$-point correlation function,   when $\beta$ is even, is equal to the hypergeometric function on the right-hand side of \eqref{bulkuniversal}.

It is worth noting that in \cite{jv}, the point process limit of the Laguerre $\beta$-ensemble in the bulk was shown to be the same
(for a general choice of parameters which may  depend on $N$) as the Hermite bulk limit.  The fact that the point process limits of the Jacobi $\beta$-ensemble at the soft edge and  the hard edge
are the same (for a general choice of parameters) as the corresponding limits for the Laguerre
ensemble was later proved in \cite{hmf}. In these cases our results above for the  Laguerre and Jacobi ensembles hold as well, see subsection \ref{parametervarying}.

To the best of our knowledge, there is still no universality theorem regarding the limiting $k$-point correlation function at the soft edge of the spectrum in $\beta$-ensembles with general potential.  However, assuming that such a universal correlation function exits for $\beta$ even, we see from Corollary \ref{theosoftCF} that  it must be equal to the $k$-variable function on the right-hand side of \eqref{softuniversal}, and as a consequence, it must involve  the multivariate Airy function $\mathrm{Ai}^{\!(\beta/2)}$.

\subsection{Organization of the article and proofs} As explained in Section \ref{hypergeometric},   the multivariate hypergeometric functions of the form ${}_p F^{(\alpha)}_{q}$ are defined as series of Jack symmetric polynomials.  Although apparently complicated, these series can be evaluated efficiently by simple  numerical methods\cite{ke}.

The proof of Theorem \ref{theohard} becomes trivial once we know that the expectations of products of characteristic polynomials in the $\LBE$ and $\JBE$ are respectively given by hypergeometric functions of the form and ${\phantom{|}}_1 F^{(\beta/2)}_{1}$ and ${\phantom{|}}_2 F^{(\beta/2)}_{1}$ .  This is known since \cite{forrester0,bf}.

The proof of  Theorems \ref{theosoft} and \ref{theobulk} is not so simple as that of Theorem \ref{theohard}.   Indeed, the calculation of scaling limits in the bulk and at the soft edge requires the asymptotic evaluation of integrals of Selberg type, such as \eqref{genint}.   The whole Section \ref{laplace} is devoted to this task.  We generalize   the Laplace or steepest descent method for high-dimensional integrals that contain the absolute value of the Vandermonde determinant.

In Section \ref{rmt}, we finally apply these results to our three classical $\beta$-ensembles and demonstrate the theorems previously stated in Section \ref{sectionmainresults}. Note that generalizations of Theorems \ref{theohard} and \ref{theohard}, which treat the cases where the parameters $\lambda_1$ and $\lambda_2$ linearly depend upon $N$,  are given in Section 4.4.   Note also that the explicit  calculations abundantly make use of simple transformations of multivariate hypergeometric series, which will be given in the next section. 

Some remarks on PDEs satisfied by the scaling limits are finally given in Section \ref{pdes}.

\section{Jack polynomials and hypergeometric functions}
\label{hypergeometric}
This section first provides a brief review of some aspects of symmetric polynomials and especially Jack polynomials.  The classical references  on the subject are Macdonald's book \cite{macdonald}
and Stanley's article \cite{stanley}.  More recent textbook treatments of Jack polynomials can be found in \cite[Chap. 12]{forrester} and \cite{kato}.  This will allow us to introduce  the multivariate hypergeometric functions \cite{kaneko,koranyi,yan}.
A few results proved here will be used later in the article.

\subsection{Partitions}

A partition $\kappa = (\kappa_1,\kappa_2,\ldots,\kappa_i,\ldots)$ is a sequence of non-negative integers $\kappa_i$ such that
\begin{equation*}
    \kappa_1\geq\kappa_2\geq\cdots\geq\kappa_i\geq\cdots
\end{equation*}
and only a finite number of the terms $\kappa_i$ are non-zero. The number of non-zero terms is referred to as the length of $\kappa$, and is denoted $\ell(\kappa)$. We shall not distinguish between two partitions that differ only by a string of zeros. The weight of a partition $\kappa$ is the sum
\begin{equation*}
    |\kappa|:= \kappa_1+\kappa_2+\cdots
\end{equation*}
of its parts, and its diagram is the set of points $(i,j)\in\mathbb{N}^2$ such that $1\leq j\leq\kappa_i$.
Reflection in the diagonal produces the conjugate partition
$\kappa^\prime=(\kappa_1',\kappa_2',\ldots)$.

The set of all partitions of a given weight are partially ordered
by the dominance order: $\kappa\leq \sigma $ if and only if $\sum_{i=1}^k\kappa_i\leq \sum_{i=1}^k \sigma_i$ for all $k$.
One easily verifies that $\kappa\leq\sigma$ if and only if $\sigma'\leq\kappa'$. 

\subsection{Jack  polynomials}

Let $\Lambda_n(x)$ denote the algebra of symmetric polynomials in $n$ variables $x_1,\ldots,x_n$ and with coefficients in the field $\mathbb{F}$.    In this article,   $\mathbb{F}$    is assumed to be the field of  rational functions in the parameter $\alpha$.  As a ring, $\Lambda_n(x)$ is generated by the power-sums:
\be \label{powersums} p_k(x):=x_1^k+\cdots+x_n^k. \ee
The ring of symmetric polynomials is naturally graded: $\Lambda_n(x)=\oplus_{k\geq 0}\Lambda^k_n(x)$, where $\Lambda^k_n(x)$ denotes the set of homogeneous polynomials of degree $k$.   As a vector space, $\Lambda^{k}_n(x)$ is equal to the span over  $\mathbb{F}$ of all symmetric monomials $m_\kappa(x)$, where $\kappa$ is a partition of weight $k$ and
\be  m_\kappa(x):=x_1^{\kappa_1}\cdots x_n^{\kappa_n}+\text{distinct permutations}.\nonumber
\ee
Note that if the length of the partition $\kappa$ is larger than $n$,  we set $m_\kappa(x)=0$.

The whole ring $\Lambda_n(x)$ is invariant under the action of  homogeneous differential operators related to the Calogero-Sutherland models \cite{bf}:
\be \label{diffops} E_k=\sum_{i=1}^n x_i^k\frac{\partial}{\partial x_i},\qquad D_k=\sum_{i=1}^n x_i^k\frac{\partial^2}{\partial x_i^2}+\frac{2}{\alpha}\sum_{1\leq i\neq j \leq n}\frac{x_i^k}{x_i-x_j}\frac{\partial}{\partial x_i},\qquad k=0,1,\ldots.
\ee
The operators $E_1$ and $D_2$ are special since they also preserve each $\Lambda^k_n(x)$.  They can be used to define the Jack polynomials.  Indeed, for each partition $\kappa$, there exists a unique symmetric polynomial $P^{(\alpha)}_\kappa(x)$ that satisfies the following two conditions \cite{stanley}:
\begin{align} \label{Jacktriang}(1)\qquad&P^{(\alpha)}_\kappa(x)=m_\kappa(x)+\sum_{\mu<\kappa}c_{\kappa\mu}m_\mu(x)&\text{(triangularity)}\\
\label{Jackeigen}(2)\qquad &\Big(D_2-\frac{2}{\alpha}(n-1)E_1\Big)P^{(\alpha)}_\kappa(x)=\epsilon_\kappa P^{(\alpha)}_\kappa(x)&\text{(eigenfunction)}\end{align}
where the coefficients $\epsilon_\kappa$ and  $c_{\kappa\mu}$ belong to  $\mathbb{F}$.  Because of the triangularity condition, $\Lambda_n(x)$ is also equal to the span over $\mathbb{F}$ of all Jack polynomials  $P^{(\alpha)}_\kappa(x)$, with $\kappa$ a partition of length less or equal to $n$.

\subsection{Hypergeometric series}\label{subsecseries}

Recall that the arm-length  and leg-length  of the box $(i,j)$ in the partition $\kappa$  are respectively given by
\begin{equation}\label{lengths}
    a_\kappa(i,j) = \kappa_i-j\qquad\text{and}\qquad l_\kappa(i,j) = \kappa^\prime_j-i.
\end{equation}Then the hook-length of a partition $\kappa$  is   defined as the following product:
\begin{equation} \label{defhook}
    h_{\kappa}^{(\alpha)}=\prod_{(i,j)\in\kappa}\Big(1+a_\kappa(i,j)+\frac{1}{\alpha}l_\kappa(i,j)\Big).
\end{equation}
Closely related is the following $\alpha$-deformation of the Pochhammer symbol:
\begin{equation}\label{defpochhammer}
    [x]^{(\alpha)}_\kappa = \prod_{1\leq i\leq \ell(\kappa)}\Big(x-\frac{i-1}{\alpha}\Big)_{\kappa_i} = \prod_{(i,j)\in\kappa}\Big(x+a^\prime_\kappa(i,j)-\frac{1}{\alpha}l^\prime_\kappa(i,j)\Big)
\end{equation}
In the middle of the last equation,  $(x)_j\equiv x(x+1)\cdots(x+j-1)$ stands for the ordinary Pochhammer symbol, to which $\lbrack x\rbrack^{(\alpha)}_\kappa$ clearly reduces for $\ell(\kappa)=1$.  The right-hand side of \eqref{defpochhammer} involves the co-arm-length  and co-leg-length  of box $(i,j)$ in the partition $\kappa$, which are respectively defined as
\begin{equation}\label{colengths}
    a^\prime_\kappa(i,j) = j-1,\qquad \text{and}\qquad l^\prime_\kappa(i,j) = i-1.
\end{equation}

We are now ready to give the  precise definition of the hypergeometric series used in the article.
\begin{definition}\label{HFpqDef}
Fix $p,q\in\mathbb{N}_0=\{0, 1, 2,\ldots\}$ and let $a_1,\ldots,a_p, b_1,\ldots,b_q$ be complex numbers such that $(i-1)/\alpha-b_j\notin\mathbb{N}_0$ for $i=1,2,\ldots,n$. The $(p,q)$-type hypergeometric series is defined as follows \cite{kaneko,koranyi,yan}:
\begin{equation}\label{hfpq}
  {\phantom{j}}_p F^{(\alpha)}_{q}(a_1,\ldots,a_p;b_1,\ldots,b_q;x) = \sum_{k=0}^{\infty}\sum_{|\kappa|=k} \frac{1}{h_{\kappa}^{(\alpha)}}\frac{\lbrack a_1\rbrack^{(\alpha)}_\kappa\cdots\lbrack a_p\rbrack^{(\alpha)}_\kappa}{\lbrack b_1\rbrack^{(\alpha)}_\kappa
    \cdots\lbrack b_q\rbrack^{(\alpha)}_\kappa}P_{\kappa}^{(\alpha)}(x).
\end{equation}
Similarly the hypergeometric series in two sets of $n$ variables,  $x=(x_1,\ldots,x_n)$ and  $y=(y_1,\ldots,y_n)$, is given in  \cite{yan} by
\begin{equation}\label{hfpq2}
    {\phantom{j}}_p\mathcal{F}^{(\alpha)}_{q}(a_1,\ldots,a_p;b_1,\ldots,b_q;x;y)
                = \sum_{\kappa} \frac{1}{h_{\kappa}^{(\alpha)}}\frac{\lbrack a_1\rbrack^{(\alpha)}_\kappa\cdots\lbrack a_p\rbrack^{(\alpha)}_\kappa}{\lbrack b_1\rbrack^{(\alpha)}_\kappa
    \cdots\lbrack b_q\rbrack^{(\alpha)}_\kappa}\frac{P_{\kappa}^{(\alpha)}(x)P_{\kappa}^{(\alpha)}(y)}{P_{\kappa}^{(\alpha)}(1^{n})},
\end{equation}
        where we have used the shorthand notation $1^{n}$ for $\overbrace{1,\ldots,1}^{n}$.
\end{definition}

Note that when $p\leq q$, the above series converges absolutely for all $x\in\mathbb{C}^n$,  $y\in\mathbb{C}^n$ and  $\alpha\in\mathbb{R}_+$.  In the case where $p=q+1$, then the series converge absolutely for all $\left\| x\right\|<1$, $\left\| y\right\|<1$ and  $\alpha\in\mathbb{R}_+$.  See \cite{kaneko} for more details about  convergence issues.

Now we give   some  translation  properties of ${}_0\mathcal{F}^{(\alpha)}_0$ and ${}_1\mathcal{F}^{(\alpha)}_0$, which prove  to be of crucial importance.
  For convenience, we write
        \be (a^n)=(\overbrace{a,\ldots,a}^{n}),\qquad b+a x=(b+a x_{1},  \ldots, b+a x_{n}),\qquad   \frac{x}{1-a x}=
\Big( \frac{x_{1}}{1-a x_{1}}, \ldots, \frac{x_{n}}{1-a x_{n}}\Big) ,\ee
where $a, b$ are complex numbers  and $x=(x_{1},  \ldots, x_{n})$.

\begin{proposition}[Translation formulas]\label{proppractical}
 \begin{align}\displaystyle \label{vip1} &\hspace{-4cm}(1) \,  {\phantom{j}}_0\mathcal{F}^{(\alpha)}_0(a+x;b+y)=\exp\{nab+a p_{1}(y)+b p_{1}(x)\} {\phantom{j}}_0\mathcal{F}^{(\alpha)}_0(x;y).\\
    \label{vip2}  & \hspace{-4cm}(2)\,  {\phantom{j} }_1\mathcal{F}^{(\alpha)}_0(a;b+x;y)=\prod_{j=1}^{n}(1-b y_{j})^{-a} {\phantom{j}}_1\mathcal{F}^{(\alpha)}_0(a;x;\frac{y}{1-b y}).\end{align}
\end{proposition}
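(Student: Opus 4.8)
The plan is to derive each identity from a first-order differential "companion relation" for the relevant kernel and then to integrate that relation along a one-parameter flow. Throughout I work with the defining series \eqref{hfpq2}, write $c_\kappa=1/(h_\kappa^{(\alpha)}P_\kappa^{(\alpha)}(1^n))$ for the coefficients appearing there, and expand every operator applied to a Jack polynomial back in the Jack basis, matching coefficients of $P_\nu(x)P_\kappa(y)$. Two pieces of Jack combinatorics are used repeatedly: the Pieri rule $p_1(y)P_\nu^{(\alpha)}(y)=\sum_{\kappa=\nu+\square}e_{\nu\kappa}P_\kappa^{(\alpha)}(y)$ and the dual expansion $\sum_i\partial_{x_i}P_\kappa^{(\alpha)}(x)=\sum_{\nu=\kappa-\square}d_{\kappa\nu}P_\nu^{(\alpha)}(x)$, together with the elementary box-addition rule $[a]^{(\alpha)}_{\nu+\square}=[a]^{(\alpha)}_\nu\,(a+\nu_r-(r-1)/\alpha)$ for a box added in row $r$, which is immediate from \eqref{defpochhammer}.

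For \eqref{vip1}, first I would prove the companion relation
\[ \sum_{i=1}^n\frac{\partial}{\partial x_i}\,{}_0\mathcal F_0^{(\alpha)}(x;y)=p_1(y)\,{}_0\mathcal F_0^{(\alpha)}(x;y). \]
Matching the coefficient of $P_\nu(x)P_\kappa(y)$ with $\kappa=\nu+\square$ on both sides shows this is equivalent to the single relation $c_\kappa d_{\kappa\nu}=c_\nu e_{\nu\kappa}$, i.e. to the adjointness of $\sum_i\partial_{x_i}$ and multiplication by $p_1$ under the normalization $c_\kappa$; this is the known duality between the Pieri rule and the lowering operator. Granting it, fix $y$ and set $g(t)={}_0\mathcal F_0^{(\alpha)}(t+x;y)$; the chain rule and the companion relation give $g'(t)=p_1(y)g(t)$, whence ${}_0\mathcal F_0^{(\alpha)}(a+x;y)=e^{a\,p_1(y)}\,{}_0\mathcal F_0^{(\alpha)}(x;y)$. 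Because \eqref{hfpq2} is symmetric in $x$ and $y$, the same computation gives ${}_0\mathcal F_0^{(\alpha)}(x;b+y)=e^{b\,p_1(x)}\,{}_0\mathcal F_0^{(\alpha)}(x;y)$. Composing the two shifts and using $p_1(b+y)=nb+p_1(y)$ produces the factor $\exp\{nab+ap_1(y)+bp_1(x)\}$ of \eqref{vip1}.

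For \eqref{vip2} I would first establish the corresponding companion relation
\[ \sum_{i=1}^n\frac{\partial}{\partial x_i}\,{}_1\mathcal F_0^{(\alpha)}(a;x;y)=\Big(a\,p_1(y)+E_2^{(y)}\Big)\,{}_1\mathcal F_0^{(\alpha)}(a;x;y), \]
where $E_2^{(y)}=\sum_j y_j^2\,\partial/\partial y_j$ is the operator $E_2$ of \eqref{diffops} acting in the $y$-variables. Expanding the left side with $d_{\kappa\nu}$, rewriting $c_\kappa d_{\kappa\nu}=c_\nu e_{\nu\kappa}$ as above, and then using the box-addition rule to peel off the factor $a+\nu_r-(r-1)/\alpha$ from $[a]^{(\alpha)}_{\nu+\square}$, I reduce this relation to the content-twisted Pieri identity
\[ E_2^{(y)}P_\nu^{(\alpha)}(y)=\sum_{\kappa=\nu+\square_r}\Big(\nu_r-\tfrac{r-1}{\alpha}\Big)\,e_{\nu\kappa}\,P_\kappa^{(\alpha)}(y), \]
which says that $E_2$ acts as the raising operator $p_1\cdot$ weighted by the $\alpha$-content of the added box. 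Granting this, I would integrate the companion relation by the method of characteristics: writing $g(b;y)={}_1\mathcal F_0^{(\alpha)}(a;b+x;y)$, the relation reads $\partial_b g-\sum_j y_j^2\partial_{y_j}g=a\,p_1(y)\,g$, whose characteristics solve $dy_j/db=-y_j^2$, i.e. $y_j(b)=\eta_j/(1+b\eta_j)$ with $\eta_j=y_j/(1-by_j)$, while the source $a\,p_1(y)$ integrates to $\prod_j(1+b\eta_j)^a=\prod_j(1-by_j)^{-a}$. Transporting the initial datum $g(0;y)={}_1\mathcal F_0^{(\alpha)}(a;x;y)$ along these curves produces exactly the right-hand side of \eqref{vip2}.

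The main obstacle is the Jack combinatorics underlying the two companion relations, namely the adjointness $c_\kappa d_{\kappa\nu}=c_\nu e_{\nu\kappa}$ and, above all, the content-twisted Pieri rule for $E_2^{(y)}$; both describe how the operators of \eqref{diffops} act on the $P_\kappa^{(\alpha)}$ and can be verified from the explicit Pieri coefficients (a short check confirms them for $n=1$ and for $\nu=(1)$). A secondary but routine point is to justify differentiating \eqref{hfpq2} term by term and integrating along the flow, which is legitimate on the domains of absolute convergence recorded after Definition \ref{HFpqDef} (all of $\mathbb{C}^n$ for ${}_0\mathcal F_0$, and $\|x\|,\|y\|<1$ for ${}_1\mathcal F_0$, the latter forcing $b$ to lie in a small interval, after which the identity extends by analytic continuation).
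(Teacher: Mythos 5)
Your proposal is correct, and it runs through the same two differential identities that drive the paper's proof — namely $E_{0}^{(y)}\,{}_0\mathcal{F}^{(\alpha)}_0(x;y)=p_{1}(x)\,{}_0\mathcal{F}^{(\alpha)}_0(x;y)$ and $E_{0}^{(x)}F-E_{2}^{(y)}F=a\,p_{1}(y)F$ — but you obtain and exploit them by a genuinely different route. The paper simply cites Eq.\ (3.3) of \cite{bf} for the first relation and extracts the second from Eq.\ (A.1) of \cite{bf} by the substitution $x\mapsto cx$, $y\mapsto y/b$ with $b,c\to\infty$; it then proves \eqref{vip2} by checking that both sides of the equivalent identity \eqref{toprove} satisfy the single ODE $G'(b)=-a\,p_{1}\big(\tfrac{y}{1-by}\big)G(b)$ with the same initial value. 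You instead derive both companion relations from Jack combinatorics (the Pieri duality $c_\kappa d_{\kappa\nu}=c_\nu e_{\nu\kappa}$ and your content-twisted Pieri rule for $E_2$) and then integrate the ${}_1\mathcal{F}_0$ relation by the method of characteristics, which \emph{constructs} the right-hand side of \eqref{vip2} instead of verifying a guessed answer; this buys a self-contained argument and a transparent origin for the M\"obius substitution $y\mapsto y/(1-by)$ and the factor $\prod_j(1-by_j)^{-a}$, at the price of the two combinatorial lemmas, which you assert as known with only small-case checks. Both lemmas do hold: the adjointness relation is precisely the coefficient-by-coefficient form of the cited Eq.\ (3.3) of \cite{bf}, and the content-twisted rule admits a two-line proof inside the paper's own framework. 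Indeed, from \eqref{diffops} and the identity $\sum_{i\neq j}x_i^2/(x_i-x_j)=(n-1)p_1$ one computes
\begin{equation*}
\Big[\,D_2-\tfrac{2}{\alpha}(n-1)E_1,\;p_1\,\Big]=2E_2,
\end{equation*}
and applying both sides to $P_\nu^{(\alpha)}$, using the eigenrelation \eqref{Jackeigen} with eigenvalue $\epsilon_\kappa=\sum_i\kappa_i(\kappa_i-1)-\tfrac{2}{\alpha}\sum_i(i-1)\kappa_i$ together with the Pieri rule, gives $E_2 P^{(\alpha)}_\nu=\sum_{\kappa=\nu+\square_r}\tfrac{1}{2}(\epsilon_\kappa-\epsilon_\nu)\,e_{\nu\kappa}P^{(\alpha)}_\kappa$ with $\tfrac{1}{2}(\epsilon_{\nu+\square_r}-\epsilon_\nu)=\nu_r-(r-1)/\alpha$, exactly your content factor; this also confirms that only partitions $\kappa=\nu+\square$ occur. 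With that supplement your argument is complete, and your attention to term-by-term differentiation, the restriction $\|y\|<1$ forcing $b$ small, and the final analytic continuation addresses convergence points on which the paper is silent.
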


\begin{proof}
First set $F(a,b)=\!\!\!{\phantom{j}}_0\mathcal{F}^{(\alpha)}_0(a+x;b+y)$.  Then, according to Eq.\ (3.3) of \cite{bf},
\be E_{0}^{(y)}\!\!\!{\phantom{j}}_0\mathcal{F}^{(\alpha)}_0(x;y)=p_{1}(x)\! {\phantom{j}}_0\mathcal{F}^{(\alpha)}_0(x;y),\ee
so that $$\frac{\partial F}{\partial b}=E_{0}^{(y)}\!\!\!{\phantom{j}}_0\mathcal{F}^{(\alpha)}_0(x;y)\big|_{x\rightarrow a+x,y\rightarrow b+y}=p_{1}(a+x)\,F.$$
Similarly,  $$\frac{\partial F}{\partial a}=p_{1}(b+y)\,F.$$
By solving those differential equations we get
$$F(a,b)=e^{a p_{1}(b+y)}F(0,b)=e^{a p_{1}(b+y)}e^{b p_{1}(x)}F(0,0),$$
which is the first desired result.

For the second result, it suffices to prove
\be\label{toprove}{\phantom{j}}_1\mathcal{F}^{(\alpha)}_0(a;x;y)\prod_{j=1}^{n}(1-b y_{j})^{a}=\!\!\!{\phantom{j}}_1\mathcal{F}^{(\alpha)}_0(a;x-b;\frac{y}{1-b y}).\ee
Now let  $G_{l}(b)$ and  $G_{r}(b)$ respectively denote the left-hand and right-hand sides of \eqref{toprove}.
In  Eq.\ (A.1) of \cite{bf}, substitute  $x$ and $y$ by $c x$ and $ y/b$ respectively, then we let $b,c\rightarrow \iy$
and conclude that\!${\phantom{j}}_1\mathcal{F}^{(\alpha)}_0(a;x;y)$ satisfies
\be \label{10eq}E_{0}^{(x)}F-E_{2}^{(y)}F=a p_{1}(y)F.\ee
From  \eqref{10eq}, we get
$$G'_{r}(b)=\Big(-E_{0}^{(x)}\!\!\!{\phantom{j}}_1\mathcal{F}^{(\alpha)}_0(a;x;y)+E_{2}^{(y)}\!\!\!{\phantom{j}}_1\mathcal{F}^{(\alpha)}_0(a;x;y)\Big)\big|_{x\rightarrow x-b,y\rightarrow \frac{y}{1-b y}}=-ap_{1}(\frac{y}{1-b y})\,G_{r}(b).$$
Finally, it's easy to check that  $G_{l}(b)$ also satisfies the differential equation just given above with the same initial condition at $b=0$, which means $G_{l}(b)=G_{r}(b)$. \end{proof}

\begin{coro}\label{practical} \begin{align}\displaystyle
 {\phantom{j}}_0 \mathcal{F}_{0}^{(\alpha)}&(x_1,\ldots,x_n;a^k, b^{n-k})=e^{b p_1(x)} \!\!\!{\phantom{j}}_1 F^{(\alpha)}_{1}(k/\alpha; n/\alpha; (a-b)x_1,\ldots,(a-b)x_n)\\ \displaystyle
&=e^{(a-b)kx_1+b p_1(x)} \!\!\!{\phantom{j}}_1 F^{(\alpha)}_{1}(k/\alpha; n/\alpha; (a-b)(x_2-x_1),\ldots,(a-b)(x_n-x_1)).
\end{align}\end{coro}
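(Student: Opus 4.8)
The plan is to derive both equalities from the single translation formula \eqref{vip1} for ${}_0\mathcal{F}_0^{(\alpha)}$, combined with the homogeneity of the Jack polynomials and the classical evaluation of $P_\kappa^{(\alpha)}$ at a string of ones followed by zeros. The only external ingredient is the specialization identity
\[
\frac{P_\kappa^{(\alpha)}(1^k,0^{n-k})}{P_\kappa^{(\alpha)}(1^n)}=\frac{[k/\alpha]^{(\alpha)}_\kappa}{[n/\alpha]^{(\alpha)}_\kappa},
\]
which follows from the standard fact that $P_\kappa^{(\alpha)}(1^N)$ equals $[N/\alpha]^{(\alpha)}_\kappa$ up to an $N$-independent constant \cite{stanley,macdonald}, together with the stability relation $P_\kappa^{(\alpha)}(1^k,0^{n-k})=P_\kappa^{(\alpha)}(1^k)$. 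Both sides vanish as soon as $\ell(\kappa)>k$, so the identity is consistent over the full range of $\kappa$.

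For the first equality, I would write the second set of arguments as a uniform translate, $(a^k,b^{n-k})=b\,1^n+((a-b)^k,0^{n-k})$, and apply \eqref{vip1} to shift the $y$-slot uniformly by $b$ while leaving the $x$-slot fixed; this produces exactly the prefactor $e^{b p_1(x)}$ and reduces the left-hand side to ${}_0\mathcal{F}_0^{(\alpha)}(x;(a-b)^k,0^{n-k})$. Expanding the latter through \eqref{hfpq2}, pulling out the factor $(a-b)^{|\kappa|}$ by homogeneity of $P_\kappa^{(\alpha)}$, and replacing $P_\kappa^{(\alpha)}(1^k,0^{n-k})/P_\kappa^{(\alpha)}(1^n)$ by $[k/\alpha]_\kappa/[n/\alpha]_\kappa$ via the specialization above, the sum becomes precisely the series \eqref{hfpq} for ${}_1F_1^{(\alpha)}(k/\alpha;n/\alpha;(a-b)x)$. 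This yields the first line.

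For the second equality, I would apply \eqref{vip1} once more, now translating the $x$-slot. Since \eqref{vip1} holds for every value of its shift parameter, I may specialize that shift to $x_1$ and write $x=x_1\,1^n+(0,x_2-x_1,\ldots,x_n-x_1)$; this moves the first coordinate to $0$ and contributes the scalar $e^{x_1 p_1(a^k,b^{n-k})}=e^{x_1(ka+(n-k)b)}$. Feeding the translated argument back into the first equality and collecting exponentials ($x_1(ka+(n-k)b)+b(p_1(x)-nx_1)=(a-b)kx_1+bp_1(x)$) gives the stated prefactor. Finally the leading zero in the argument of ${}_1F_1^{(\alpha)}$ is discarded: by the same stability of Jack polynomials, a vanishing first variable kills all terms with $\ell(\kappa)=n$ and reduces the rest to $(n-1)$-variable Jack polynomials, leaving the series with parameters $k/\alpha,n/\alpha$ unchanged. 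I expect the main obstacle to be purely bookkeeping, namely invoking the Jack specialization identity in the normalization used here and checking that dropping the zero variable alters neither the hook lengths $h_\kappa^{(\alpha)}$ nor the generalized Pochhammer symbols $[\,\cdot\,]^{(\alpha)}_\kappa$, both of which depend on $\kappa$ alone.
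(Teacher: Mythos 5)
Your proof is correct and follows essentially the same route as the paper's: both apply the translation formula \eqref{vip1} to reduce to ${}_0\mathcal{F}_0^{(\alpha)}(x;(a-b)^k,0^{n-k})$, then use the specialization $P_\kappa^{(\alpha)}(1^k)/P_\kappa^{(\alpha)}(1^n)=[k/\alpha]_\kappa^{(\alpha)}/[n/\alpha]_\kappa^{(\alpha)}$ (via Macdonald's evaluation of $P_\kappa^{(\alpha)}(1^N)$, whose $\kappa$-dependent denominator is $N$-independent) together with the vanishing $[k/\alpha]_\kappa^{(\alpha)}=0$ for $\ell(\kappa)>k$. Your explicit treatment of the second equality, including the exponential bookkeeping and the discarding of the zero variable by Jack stability, is a correct elaboration of what the paper compresses into ``likewise.''
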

\begin{proof}  Proposition \ref{proppractical} implies that
\be  {\phantom{j}}_0 \mathcal{F}^{(\alpha)}_{0}(x_1,\ldots,x_n;a^{k},b^{n-k})=e^{b p_1(x)}\!\!\!{\phantom{j}}_0 \mathcal{F}^{(\alpha)}_{0}(x_1,\ldots,x_n;(a-b)^{k},0^{n-k}).\nonumber
\ee
The right-hand side is equal to
\be e^{b p_1(x)}\sum_{\ell(\kappa)\leq k} \frac{(a-b)^{|\kappa|}}{h_{\kappa}^{(\alpha)}}\frac{P_{\kappa}^{(\alpha)}(x)P_{\kappa}^{(\alpha)}(1^k)}{P_{\kappa}^{(\alpha)}(1^n)}. \nonumber\ee
But we also know from Eq.\ (10.20) of Chapter VI \cite{macdonald} and Eq.\ \eqref{defpochhammer} above, that
\be P_{\kappa}^{(\alpha)}(1^k)=\frac{\alpha^{|\kappa|}[k/\alpha]_\kappa^{(\alpha)}}{\prod_{(i,j)\in\kappa} (\alpha a_\kappa(i,j)+l_\kappa(i,j)+1)}.\nonumber
\ee
Moreover, one easily checks that $[k/\alpha]_\kappa^{(\alpha)}=0$ whenever $\ell(\kappa)> k$. Consequently,
\be
 {\phantom{j}}_0 \mathcal{F}^{(\alpha)}_{0}(x_1,\ldots,x_n;a^{k},b^{n-k})=e^{b p_1(x)}\sum_{\kappa} \frac{(a-b)^{|\kappa|}}{h_{\kappa}^{(\alpha)}}\frac{[k/\alpha]_\kappa^{(\alpha)}}{[n/\alpha]_\kappa^{(\alpha)}}P_{\kappa}^{(\alpha)}(x), \nonumber
\ee which is equivalent to the first equality.

Likewise, the second equality also follows from Proposition \ref{proppractical}.
\end{proof}

We conclude this section with introduction of the  following symbol, for simplicity,
\be  \label{defE} E_{j}^{(\alpha)}(x):=E_{j}^{(\alpha)}(x_{1}, \ldots,x_n)=\!\!\!{\phantom{j}}_0\mathcal{F}_{0}^{(\alpha)}(x;(-1)^{j},1^{n-j}), \qquad j=0,1,\ldots,n .\ee
These functions possess some similar properties with the exponential function, for instance, $$E_{j}^{(\alpha)}(2i\pi +x)=E_{j}^{(\alpha)}(x).$$

\section{Asymptotic methods for integrals of Selberg type}
\label{laplace}

We want to get the large $N$ asymptotic behaviour for integrals of the form \eqref{genint} by generalizing the  classical steepest descent method, or more generally    Laplace's  method for contour integrals. We refer the reader to  Olver's textbook \cite{olver} for more details on  Laplace's method in the one-dimensional case.
 The Laplace method was previously generalized to the multidimensional case by some authors (e.g., see  \cite{breitung,hsu} and references therein), but their work is not directly applicable to our Selberg type integrals due to the presence of the absolute value of the Vandermonde determinant.  Examples of asymptotic evaluation  of specific Selberg type integrals can be found e.g. in \cite{df,forrester0}.

Before going any further, let us adopt some new notational conventions.  Our variables are  $t=(t_1,\ldots,t_n)$ and the  Vandermonde determinant is given by
\be \Delta(t)=\prod_{1\leq i<j\leq n}(t_{i}-t_{j}).\ee
Moreover, $\lambda=(\lambda_1,\ldots,\lambda_n)$ denotes a sequence of parameters such that  $\lambda_j>0$ for all $j$ while  $k=(k_1,\ldots,k_n)$ denotes a sequence of non-negative integers of weight  $|k|=\sum_{j} k_j$.
Finally,   $t^{\lambda-1}=\prod_{j}t_{j}^{\lambda_j-1}$ and $t^{k}=\prod_{j}t_{j}^{k_j}$.

\subsection{Watson's Lemma}

 We first give the Watson's lemma for multiple integrals with Vandermonde determinants. The process of the  proof will suggest a natural extension to contour integrals.

 \begin{lemma}
 Let $\|t\|=\max\{|t_{1}|,\ldots, |t_{n}|\}$ and $q(t)$ be a function of the positive real variables $t_{j}$, such that
 \be q(t)=t^{\lambda-1}\Big(\sum_{j=0}^{m-1}a_{j}(t)+O(\|t\|^{m})\Big) \quad (\|t\|\rightarrow 0)\ee
 where  $$a_{j}(t)=\sum_{|k|=j}a_{j,k}\,t^{k} $$ is a  homogenous polynomial of degree $j$. Then
  \begin{equation} \label{laguerreintegral}I_N:=\int_{0}^{\iy}\cdots \int_{0}^{\iy}
  e^{-N\sum_{j=1}^{n}t_{j}}|\Delta(t)|^{\beta} q(t)\,d t_{1}\cdots d t_{n} =\sum_{j=0}^{m-1}\frac{A_{j}}{N^{n_{\beta}+|\lambda|+j}}+O(N^{-n_{\beta}-|\lambda|-m})\end{equation}
 as $N\rightarrow \iy$ provided that this integral converges throughout its range for all sufficiently large $N$, where $n_{\beta}=\beta n(n-1)/2$ and
  \be \label{monomialintegral}A_j=\int_{0}^{\iy}\cdots \int_{0}^{\iy}t^{\lambda-1}
  e^{-\sum_{j=1}^{n}t_{j}}|\Delta(t)|^{\beta} a_{j}(t)\,d t_{1}\cdots d t_{n}.\ee
\end{lemma}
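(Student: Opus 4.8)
The plan is to follow the classical one-dimensional Watson lemma, exploiting the fact that the weight $e^{-N\sum_j t_j}$ concentrates all the mass of $I_N$ near the corner $t=0$, where the hypothesis on $q$ supplies a controlled expansion. The one genuinely new feature is the Vandermonde factor, and the key observation is that $\Delta(t)$ is homogeneous of degree $n(n-1)/2$, so under the rescaling $t_j=u_j/N$ it contributes exactly $N^{-n_\beta}$ with $n_\beta=\beta n(n-1)/2$. This single homogeneity fact is what turns the one-variable exponent into the correct multivariate one.

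First I would fix a small $\delta>0$ and split $I_N=I_N^{\mathrm{in}}+I_N^{\mathrm{out}}$, where $I_N^{\mathrm{in}}$ integrates over $\{\|t\|\le\delta\}$ and $I_N^{\mathrm{out}}$ over the complement. On the outer region at least one coordinate exceeds $\delta$, hence $\sum_j t_j\ge\|t\|>\delta$ since all $t_j\ge0$. Writing $e^{-N\sum_j t_j}=e^{-(N-N_0)\sum_j t_j}\,e^{-N_0\sum_j t_j}$ for a fixed $N_0$ large enough that the integral converges (which is precisely the standing hypothesis), I would bound the first factor by $e^{-(N-N_0)\delta}$ and pull it out, leaving a fixed convergent integral. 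This gives $I_N^{\mathrm{out}}=O(e^{-cN})$ for some $c>0$, negligible beyond all algebraic orders.

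On the inner region I would substitute $t_j=u_j/N$. Homogeneity then yields $|\Delta(t)|^\beta=N^{-n_\beta}|\Delta(u)|^\beta$, $t^{\lambda-1}=N^{-(|\lambda|-n)}u^{\lambda-1}$, $a_j(t)=N^{-j}a_j(u)$, and $\prod_j dt_j=N^{-n}\prod_j du_j$, while the exponent becomes $e^{-\sum_j u_j}$. Inserting the hypothesis in the form $q(t)=t^{\lambda-1}\bigl(\sum_{j=0}^{m-1}a_j(t)+r_m(t)\bigr)$ with $|r_m(t)|\le C\|t\|^m$ for $\|t\|\le\delta$, the polynomial part produces, after collecting powers of $N$, exactly $\sum_{j=0}^{m-1}N^{-n_\beta-|\lambda|-j}\int_{\|u\|\le N\delta}e^{-\sum_j u_j}|\Delta(u)|^\beta u^{\lambda-1}a_j(u)\,du$. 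Each truncated integral differs from $A_j$ only by a tail over $\{\|u\|>N\delta\}$, which is again exponentially small because $\sum_j u_j\ge\|u\|>N\delta$; thus the $j$-th term equals $A_j\,N^{-n_\beta-|\lambda|-j}+O(e^{-cN})$.

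It remains to control the piece carrying $r_m(u/N)$, where I would use $|r_m(u/N)|\le C N^{-m}\|u\|^m$ together with the convergence of the Selberg-type integral $\int_{\mathbb{R}_+^n}e^{-\sum_j u_j}|\Delta(u)|^\beta u^{\lambda-1}\|u\|^m\,du$, finite since each $\lambda_j>0$ makes $u_j^{\lambda_j-1}$ integrable at the origin and the exponential dominates at infinity. This bounds the remainder by $O(N^{-n_\beta-|\lambda|-m})$, and combining the three contributions gives the claimed expansion. The main obstacle is essentially bookkeeping and convergence: one must check that every auxiliary integral obtained after inserting $a_j$ or the factor $\|u\|^m$ is absolutely convergent, and that the hypothesized $O(\|t\|^m)$ delivers the uniform estimate $|r_m(t)|\le C\|t\|^m$ on $\|t\|\le\delta$; the arithmetic of the powers of $N$ is then forced by homogeneity and reproduces the exponent $n_\beta+|\lambda|+j$ automatically.
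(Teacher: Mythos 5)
Your proof is correct under the natural (Lebesgue, i.e.\ absolute) reading of the convergence hypothesis, and it follows the same overall Watson-lemma strategy, but it diverges from the paper's proof in how both the expansion terms and the tail are handled. You rescale $t_j=u_j/N$ and truncate each expansion term to $\{\|u\|\le N\delta\}$, recovering $A_j$ up to an exponentially small tail; the paper instead keeps each term exact over all of $(0,\infty)^n$ (the $A_j$-integrals scale exactly by homogeneity) and pushes the entire error into the single remainder $\phi_m(t)=q(t)-t^{\lambda-1}\sum_{j=0}^{m-1}a_j(t)$ — a purely cosmetic difference. The substantive difference is the outer region: you factor the kernel as $e^{-(N-N_0)\sum_j t_j}\,e^{-N_0\sum_j t_j}$, bound the first factor by $e^{-(N-N_0)\delta}$, and pull it out, which requires the $N_0$-integral of the \emph{absolute value} of the integrand to be finite; you have thereby silently upgraded the lemma's hypothesis that the integral ``converges'' to ``converges absolutely''. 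The paper's proof is built precisely to avoid that upgrade: following Olver's one-dimensional argument, it introduces the iterated antiderivative $\Phi_m(t)$ of $e^{-N_0\sum_j v_j}|\Delta(v)|^\beta\phi_m(v)$, which is bounded under mere (possibly conditional, iterated improper) convergence of $I_{N_0}$, and then integrates by parts in each variable to extract the factor $e^{-(N-N_0)k_m}$. So in the Lebesgue setting your argument is complete and simpler; under the weaker Olver-style hypothesis (e.g.\ oscillatory $q$ with only conditionally convergent tails) your outer estimate is the one step that fails, and the integration-by-parts device is what closes it. One point in your favor: your remainder bound keeps $t^{\lambda-1}$ intact and dominates $|r_m(t)|\le C\|t\|^m$ separately, concluding via the convergent integral $\int_{\mathbb{R}_+^n}e^{-\sum_j u_j}|\Delta(u)|^\beta u^{\lambda-1}\|u\|^m\,du$; this is actually more careful than the paper's combined bound $|\phi_m(t)|\le K_m\|t\|^{m+|\lambda|-n}$, which is not uniformly valid when some $\lambda_j<1$ (take $t_j$ much smaller than $\|t\|$, e.g.\ $n=2$, $\lambda=(1/2,3/2)$, $t=(\epsilon^2,\epsilon)$), while your version still yields the claimed $O(N^{-n_\beta-|\lambda|-m})$.
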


\begin{proof}
For each $m$, define
\be \phi_{m}(t)=q(t)-t^{\lambda-1} \sum_{j=0}^{m-1}a_{j}(t).\ee
One obtains \be \label{rightintegral} I_N=\sum_{j=0}^{m-1}\frac{A_{j}}{N^{n_{\beta}+|\lambda|+j}}+\int_{0}^{\iy}\cdots \int_{0}^{\iy}
  e^{-N\sum_{j=1}^{n}t_{j}}|\Delta(t)|^{\beta} \phi_{m}(t)\,d t_{1}\cdots d t_{n}. \ee

As $\|t\|\rightarrow 0$ we have  $\phi_{m}(t)=t^{\lambda-1} O(\|t\|^{m})$. This means that there exist positive constants $k_{m}$ and $K_{m}$ such that
$$|\phi_{m}(t)|\leq K_{m}\|t\|^{m+|\lambda|-n}  \quad (0<\|t\|\leq k_m).$$
Accordingly, \be \label{estimate1}\Big|\int_{0}^{k_m}\cdots \int_{0}^{k_m}e^{-N\sum_{j=1}^{n}t_{j}}|\Delta(t)|^{\beta} \phi_{m}(t)\,d t_{1}\cdots d t_{n}\Big|<
\frac{C_{m}K_m}{N^{n_{\beta}+|\lambda|+m}},\ee
where  $$C_{m}= \int_{0}^{\iy}\cdots \int_{0}^{\iy}
  e^{-\sum_{j=1}^{n}t_{j}}|\Delta(t)|^{\beta}\|t\|^{|\lambda|+m-n}\,d t_{1}\cdots d t_{n}.$$

   For the contribution from the range $$(0,\iy)^{n}\backslash(0,k_m)^{n}=\bigcup (J_{1}\times \cdots \times J_{n}),$$
   where $J_j=(0,k_m)$ or $[k_m,\iy)$ but at least one of which is the infinite interval $[k_m,\iy)$. One knows there are ($2^{n}-1$) intervals in the union above.  Without loss of generality,  we only consider the case when  $J=(0,k_m)^{n-1}\times [k_m,\iy)$.

   Let $N_{0}$ be a value of $N$ for which the integral on the right-hand  side of \eqref{rightintegral} exists, and write
   $$\Phi_{m}(t)=\int_{0}^{t_1}\cdots \int_{0}^{t_{n-1}}\int_{k_m}^{t_n}
  e^{-N_{0}\sum_{j=1}^{n}v_{j}}|\Delta(t)|^{\beta} \phi_{m}(v)\,d v_{1}\cdots d v_{n},$$
   so that $\Phi_{m}(t)$ is continuous and bounded in $J=(0,k_m)^{n-1}\times [k_m,\iy)$. Let $L_m$ denote the supremum of $|\Phi_{m}(t)|$ in this range. When $N>N_0$, integration by parts produces
   \begin{align} I_{J}:&=\int_{J}
  e^{-N\sum_{j=1}^{n}t_{j}}|\Delta(t)|^{\beta} \phi_{m}(t)\,d t_{1}\cdots d t_{n}\nonumber\\
  &=\int_{J}e^{-(N-N_0)\sum_{j=1}^{n}t_{j}}
  e^{-N_{0}\sum_{j=1}^{n}t_{j}}|\Delta(t)|^{\beta} \phi_{m}(t)\,d t_{1}\cdots d t_{n}\nonumber\\
  &=\sum_{j=0}^{n-1}(N-N_0)^{j+1}\sum_{1\leq i_{1}<\cdots<i_{j}\leq n-1} \int_{(0,k_m)^{j}\times [k_m,\iy)}\Big[e^{-(N-N_0)\sum_{l=1}^{n}t_{l}}\Phi_{m}(t)\Big] d t_{i_{1}}\cdots d t_{i_{j}} d t_{n}
  ,\end{align}
   where $[f(t)]$ denotes the evaluation of $f(t)$ at $t_l=k_m$ if $l\neq i_{1},\ldots, i_{j}, n$.

   Thus
   \begin{align} |I_{J}|&\leq L_{m}\sum_{j=0}^{n-1}\binom{n-1}{j}   (N-N_0)^{j+1}    e^{-(n-j-1)(N-N_0)k_m} \int_{(0,k_m)^{j}\times [k_m,\iy)} e^{-(N-N_0)\sum_{l=n-j}^{n}t_{l}} d t_{i_{n-j}}\cdots  d t_{n}\nonumber\\
   &=L_{m}  e^{-(N-N_0)k_m}. \label{estimate2}
   \end{align}

Combining \eqref{estimate1} and \eqref{estimate2}, we immediately see that the integral on the right-hand side of \eqref{rightintegral} is $O(N^{-n_{\beta}-|\lambda|-m})$
 as $N\rightarrow \iy$, and the lemma is proved.
\end{proof}

\begin{coro} With the same assumptions as in the previous theorem and with $\lambda_i=\mu >0$ for all $i$,   we have as $N\to\infty$,
\begin{multline*} \int_{(0,\iy)^{n}}
  e^{-N\sum_{j=1}^{n}t_{j}}|\Delta(t)|^{\beta} q(t)\,d^{n}t=\prod_{j=0}^{n-1}
\frac{\Gamma(1+\beta/2+j\beta/2) \Gamma(\mu+j\beta/2)}
{\Gamma(1+\beta/2)}\frac{a_0}{N^{n_{\beta}+\mu n}}+O(N^{-n_{\beta}-\mu n-1})\end{multline*} where $$ a_0=\left[t^{-\lambda+1}q(t)\right]_{t=(0,\ldots,0)} .$$ \end{coro}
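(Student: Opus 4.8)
The plan is to extract the leading term of the preceding lemma and then to evaluate the resulting constant $A_0$ in closed form via Selberg's integral. First I would specialize the lemma to $m=1$ and to the uniform choice $\lambda_i=\mu$, so that $|\lambda|=\mu n$; this immediately yields
\[
  I_N=\frac{A_0}{N^{n_\beta+\mu n}}+O(N^{-n_\beta-\mu n-1}),
\]
where $A_0=\int_{(0,\iy)^n} t^{\mu-1}e^{-\sum_j t_j}|\Delta(t)|^\beta a_0(t)\,d^nt$. Since $a_0(t)$ is a homogeneous polynomial of degree $0$, it is simply the constant $a_0=[t^{-\lambda+1}q(t)]_{t=(0,\ldots,0)}$ and therefore factors out of the integral, giving $A_0=a_0\,S_n$ with $S_n$ a Selberg-type integral carrying the Laguerre weight.

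Next I would identify
\[
  S_n=\int_{(0,\iy)^n}\prod_{i=1}^n t_i^{\mu-1}e^{-t_i}\prod_{1\le i<j\le n}|t_i-t_j|^\beta\,d^nt
\]
as precisely the Laguerre limit of Selberg's formula, that is, one of the normalization constants recalled in the appendix (here with the parameter dictionary $\lambda_1=\mu-1$ and Dyson index $\beta$). Evaluating it gives
\[
  S_n=\prod_{j=0}^{n-1}\frac{\Gamma(\mu+j\beta/2)\,\Gamma(1+(j+1)\beta/2)}{\Gamma(1+\beta/2)}.
\]

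It then remains only to reconcile this product with the claimed coefficient. Using the trivial identity $\Gamma(1+(j+1)\beta/2)=\Gamma(1+\beta/2+j\beta/2)$, one sees that $S_n$ agrees factor-by-factor with $\prod_{j=0}^{n-1}\Gamma(1+\beta/2+j\beta/2)\Gamma(\mu+j\beta/2)/\Gamma(1+\beta/2)$, and multiplying by $a_0$ reproduces the stated leading coefficient. The hard part here is purely bookkeeping: I must apply the dictionary between the corollary's weight $t^{\mu-1}e^{-t}$ and the normalization form of Selberg's integral consistently, and track the index shift $j\mapsto j+1$ in the factor $\Gamma(1+(j+1)\beta/2)$ so that the two Gamma products line up correctly. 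Everything else follows directly from the preceding lemma, whose error estimate already supplies the $O(N^{-n_\beta-\mu n-1})$ remainder.
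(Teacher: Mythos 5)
Your proposal is correct and is exactly the argument the paper intends (the corollary is stated without proof, being immediate from the lemma): take the $m=1$ case of the lemma with $|\lambda|=\mu n$, pull out the constant $a_0$, and evaluate $A_0$ via the Laguerre form of Selberg's integral, which matches the normalization constant $W_{\lambda_1,\beta,N}$ in the appendix with $\lambda_1=\mu-1$ after rescaling away the factor $(2/\beta)^{(1+\lambda_1)n+\beta n(n-1)/2}$. Your bookkeeping, including the identification $\Gamma(1+(j+1)\beta/2)=\Gamma(1+\beta/2+j\beta/2)$, is accurate.
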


\begin{remark}The integrals $A_{j}$, given by \eqref{monomialintegral}, is in general very difficult to evaluate.
However, if $q(t)$ is a symmetric function and $\lambda_{1}=\cdots=\lambda_{n}$, then $A_{j}$ can be calculated by the Macdonald-Kadell-Selberg integrals,
see \cite{macdonald}, p. 386 (a) or the original papers \cite{kadell,kaneko}.
\end{remark}

\subsection{Laplace's method for contour integrals: a single saddle point}
\label{Laplacemethod}

Let $a$ and $b$ be two complex numbers. Let also $p(x)$ and $q(t)$ be analytic
functions of $x$ and $t=(t_{1},\ldots,t_{n})$ in the domains $\textbf{T}\subseteq \mathbb{C}$ and $\textbf{T}^{n}$, respectively. Finally, suppose that   $N$ is a positive parameter and $h(x)$ is a homogeneous analytic function of degree $\nu\geq 0$, i.e.,
$  h(cx)=c^{\nu}h(x), \forall\ \textrm{Re}\, c>0$.   In this subsection, we consider the asymptotic evaluation, as $N\to\infty$, of the following $n$-dimensional integral:
\begin{equation} \label{pqintegral}I_N=\int_{a}^b \cdots \int_{a}^ {b}
  \exp\{-N\sum\limits_{j=1}^{n}p(t_{j})\}h(\Delta(t))\, q(t)\,d t_{1}\cdots d t_{n}, \end{equation}
 in which  the integration between $a$ and $b$ is taken by following the contour $ {(a,b)_\mathscr{P}} $ in $\mathbb{C}$.

Like  the one-dimensional integral, to obtain the asymptotics of integrals one usually needs to deform the path through some special points at which $p'(x)=0$,
called \textit{saddle points}, we refer to sections 7 and 10, \cite{olver} for more details about \textit{saddle point} and paths of \textit{steepest descent}.

Recall that $x_{0}$ is a saddle point of order $\mu-1$ if
$$p'(x_0)=\cdots=p^{(\mu-1)}(x_0)=0,\  p^{(\mu)}(x_0)\neq 0,$$
where the integer $\mu\geq 2$. In particular, when $\mu=2$ it is called a simple saddle point. The most common cases for the integrals are that $p(x)$ has  (1) one simple
 saddle point; (2) two simple saddle points; (3) one saddle point of order 2, at interior points of the integration path.
Those occur  ubiquitously in Random Matrix Theory, corresponding to the hard-edge, bulk and soft-edge    limiting behavior.
Fortunately, deformations to the path through \textit{saddle points} for  our multi-dimensional integrals   can reduce to the one-dimensional case. But
great care must be taken in dealing with the part involving the Vandermonde determinant.

We will first consider  the case of a single saddle point of order $\mu-1$, which  generalizes both cases  (1) and (3).
We moreover assume the following conditions:
\begin{itemize}

\item[(i)]$p(x)$ and $q(t)$ are   single-valued and holomorphic in the domains $\textbf{T}\subseteq \mathbb{C}$ and $\textbf{T}^{n}$; $h(x)$ is a homogeneous analytic function of degree $\nu\geq 0$ in $\mathbb{C}$.

\item[(ii)] The integration path  $\mathscr{P}$  is independent of $N$. The endpoints $a$ and $b$ of  $\mathscr{P}$  are
finite or infinite, and $(a,b)_{\mathscr{P}}$ lies within $\textbf{T}$.

\item[(iii)] $p'(x)$ has exactly one   zero of order $\mu-1$ at an interior point $x_0$ of $\mathscr{P}$, i.e., $p'(x_0)=\cdots=p^{(\mu-1)}(x_0)=0, p^{(\mu)}(x_0)\neq0$;
in the neighborhoods of $x_0$ and $t_0=(x_0,\ldots,x_0)$, $p(x)$ and $q(t)$ can be expanded in convergent series of the form
$$p(x)=p(x_0)+\sum_{s=0}^{\iy}p_{s}(x-x_0)^{s+\mu}, q(t)=(t-t_0)^{\lambda-1} \sum_{j=0}^{\iy}q_{j}(t-t_0),$$
where $p_0\neq 0$ and $q_{j}(t)=\sum_{|k|=j}q_{j,k}\,t^{k}$.

\item[(iv)] There exists $N_{0}>0$ such that  $I_{N_{0}}$ converges absolutely at $(a,\ldots,a)$ and $(b,\ldots,b)$.

\item[(v)] $\textrm{Re}\{p(x)-p(x_0)\}$ is positive on $(a,b)_{\mathscr{P}}$, except at $x_0$, and is bounded away
from zero   as $x \rightarrow a\  \mathrm{or}\  b \ \textrm{along} \  \mathscr{P}$.
\end{itemize}

\begin{remark}[Steepest descent paths]
Let  $p_0:=(\mu!)^{-1} p^{(\mu)}(x_0)$.  Suppose that $x$ follows a path $\mathscr{P}$ such that the phase of $(x-x_0)^\mu p_0$ is exactly equal to zero.  Then the first non-null term in $\textrm{Re}\{p(x)-p(x_0)\}$  reaches its maximum value  while $\textrm{Im}\{p(x)-p(x_0)\}$ gets equal to zero.  This in turn implies that $-N\textrm{Re} \{p(x)-p(x_0)\}$ decreases at the maximum rate possible  as $x$ goes away from  the local maximum $x_0$, which justifies the fact that $\mathscr{P}$    is called the steepest descent path. For our special cases coming from Random Matrix Theory, we will always be able to deform a part of the contour of integration to one of steepest descent contours.  For the general case, however,  the condition $\textrm{Re}\{p(x)-p(x_0)\}>0$ is sufficient.
\end{remark}

\begin{remark}[Branch convention] In the rest of the article, some ambiguities could arise when dealing with fractional powers of complex numbers.  We will always use the following convention.  Let  $\textrm{ph}(z)$ denote the phase or argument of complex variable $z$. For an interior point $x_{0}$ of $(a,b)_{\mathscr{P}}$, we denote
 \be  \omega =  \textrm{angle of slope of}  \, \mathscr{P} \,  \textrm{at}
 \, x_{0} =  \lim \{\textrm{ph} (x - x_{0})\} \qquad ( x \rightarrow x_{0}\,  \textrm{along }  (x_{0},b)_{\mathscr{P}}).
 \ee
The value of $\omega_0 =\textrm{ph}(p_{0})$ is not necessary the principal one, but is chosen to satisfy
 \be |\omega_0+\mu\, \omega |\leq \frac{1}{2}\pi \; .\ee
The branch of $\textrm{ph}(p_{0})$ is used in constructing all fractional powers of $p_{0}$, so for example, $p^{1/\mu}_{0} $ means $\exp\{(\ln|p_{0}|+i\omega_{0})/\mu\}$.  When defining the variable  $w_{j}$ via
  $$w_{j}^\mu=\left(p(t_{j})-p(x_0)\right)/p_0,$$  the branch of   $\textrm{ph}(w_{j})$ is determined by
\be   \mu\,\textrm{ph}(w_{j}) \rightarrow \mu\, \omega  \qquad (\, w_{j}  \rightarrow x_{0}\,  \textrm{ along } \, (x_{0},b)_{\mathscr{P}}\,),\ee
and by continuity elsewhere.
\end{remark}

\begin{lemma}\label{lemmapaths} Assume conditions (i), (ii), and (iii) above. Let $\mathbf{D}\subset\mathbb{C}$ denote a small disk centered at $x_0$ and let  $w_{j}\,:\,\mathbf{D}\to\mathbb{C} $ be the map such that $w_{j}^\mu=(p(t_{j})-p(x_0))/p_0$.    Then, there exist paths $(\tau_1,x_0)_{\mathscr{P}}$ and $(x_0,\tau_2)_{\mathscr{P}}$  contained in $(a,b)_{\mathscr{P}}$ that can be respectively deformed into   paths $\mathscr{P}_1$ and $\mathscr{P}_2$, such that under the map $w_{j}$, $\mathscr{P}_1$ becomes a straight line path $L_{11}$ ending at the origin while $\mathscr{P}_2$  becomes a  straight line path $L_{21}$ starting at the origin.
\end{lemma}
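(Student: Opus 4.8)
The plan is to reduce the whole statement to one fact: after shrinking $\mathbf{D}$ if necessary, $w_j$ is a conformal (injective holomorphic) map sending $x_0$ to the origin. The two straight-line images are then obtained simply by pulling back radial segments through its inverse, and the deformations follow from simple connectivity of the disk.

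First I would make the $\mu$-th root change of variables explicit. Condition (iii) gives, on a small disk about $x_0$,
\begin{equation*}
\frac{p(x)-p(x_0)}{p_0}=(x-x_0)^{\mu}f(x),\qquad f(x)=1+\sum_{s\ge1}\frac{p_s}{p_0}(x-x_0)^{s},
\end{equation*}
with $f$ holomorphic and $f(x_0)=1$. Since $f$ stays near $1$ on a sufficiently small disk, its principal $\mu$-th root is single-valued and holomorphic there and equals $1$ at $x_0$; hence $w_j=(x-x_0)f(x)^{1/\mu}$ is holomorphic and satisfies $w_j^{\mu}=(p(t_j)-p(x_0))/p_0$. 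Because $\mathrm{ph}(x-x_0)\to\omega$ while $\mathrm{ph}\,f(x)^{1/\mu}\to0$ as $x\to x_0$ along $(x_0,b)_{\mathscr{P}}$, this is exactly the branch singled out by the convention $\mu\,\mathrm{ph}(w_j)\to\mu\omega$. Moreover $w_j(x_0)=0$ and $w_j'(x_0)=f(x_0)^{1/\mu}=1\ne0$, so the inverse function theorem provides a disk $\mathbf{D}_0\ni x_0$ on which $w_j$ is a biholomorphism onto an open neighbourhood $U$ of $0$, with holomorphic inverse $X=w_j^{-1}$.

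Having the conformal map, I would finish geometrically. Choose $\tau_1$ and $\tau_2$ on $\mathscr{P}$ inside $\mathbf{D}_0$, on the $a$- and $b$-sides of $x_0$ respectively, and set $L_{11}=[\,w_j(\tau_1),0\,]$ and $L_{21}=[\,0,w_j(\tau_2)\,]$, the straight segments joining these images to the origin; by the phase computation above, $\mathrm{ph}\,w_j(\tau_2)\to\omega$ and $\mathrm{ph}\,w_j(\tau_1)\to\omega_1:=\lim\mathrm{ph}(x-x_0)$ on the $a$-side, so these are the descent directions that will matter in the subsequent Laplace estimate. Put $\mathscr{P}_1=X(L_{11})$ and $\mathscr{P}_2=X(L_{21})$; these are analytic arcs running from $\tau_1$ to $x_0$ and from $x_0$ to $\tau_2$, with $w_j(\mathscr{P}_1)=L_{11}$ and $w_j(\mathscr{P}_2)=L_{21}$ straight by construction. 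Finally, since $\mathscr{P}_1$ and $(\tau_1,x_0)_{\mathscr{P}}$ (respectively $\mathscr{P}_2$ and $(x_0,\tau_2)_{\mathscr{P}}$) share their endpoints and both lie in the simply connected disk $\mathbf{D}_0$, they are homotopic rel endpoints inside the holomorphy domain; this is the required deformation, and Cauchy's theorem then guarantees that replacing one by the other leaves any later contour integral unchanged.

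The single delicate point is the branch bookkeeping in the first step: one must verify that the principal $\mu$-th root defining $w_j$ is consistent with the prescribed phase convention $|\omega_0+\mu\omega|\le\tfrac12\pi$ on \emph{both} sides of $x_0$, and that shrinking $\mathbf{D}$ preserves injectivity of $w_j$. Everything after that is a routine application of the inverse function theorem and of simple connectivity; in particular the Vandermonde factor $h(\Delta(t))$, the function $q$, and the remaining variables play no role here, since the assertion concerns only the single-variable map $w_j$.
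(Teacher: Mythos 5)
Your proposal is correct and follows essentially the same route as the paper: the paper likewise factors out the $\mu$-th order zero, invokes the inversion theorem for analytic functions to get a conformal map $w_j$ near $x_0$ (with local inverse $t_j-x_0=\sum_{s\ge1}c_s w_j^s$, $c_1=1$), picks $\tau_1,\tau_2$ close enough that the relevant disk lies in the image, and deforms $(\tau_1,x_0)_{\mathscr{P}}$ and $(x_0,\tau_2)_{\mathscr{P}}$ so their $w_j$-images are the straight segments $L_{11}$, $L_{21}$. Your explicit verification of the branch convention $\mu\,\mathrm{ph}(w_j)\to\mu\omega$ via $w_j=(x-x_0)f(x)^{1/\mu}$ with $f(x_0)=1$ is a slightly more careful rendering of what the paper leaves to its branch-convention remark, but it is the same argument.
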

\begin{proof}
For small $\|t-t_0\|$,   Condition (iii) and the Binomial theorem yield
$$w_j=(t_j-x_0)\{1+\frac{p_{1}}{\mu p_{0}}(t_j-x_0)+\cdots\}.$$
 Application of the inversion theorem for analytic functions
shows that for all sufficiently small $\rho>0$, the disk
$|t_j-x_0|<\rho$ is mapped conformally on a domain $\textbf{W}$ containing $w_j = 0$. Moreover,
if $w_j\in \textbf{W}$ then $t_j-x_0$ can be expanded in a convergent series
$$t_j-x_0=\sum_{s=1}^{\iy}c_{s}w_{j}^{s},$$
in which the coefficients $c_{s}$ are expressible in terms of the $p_{s}$. For example, $c_1=1$ and $c_2=-p_{1}/(\mu p_{0})$.

Let $\tau_{1}, \tau_{2}$ be points of $(a,x_0)_{\mathscr{P}}, (x_0,b)_{\mathscr{P}}$ respectively chosen sufficiently close to $x_0$ to ensure that the disk
$$|w_j|\leq\min\{|p(\tau_1)-p(x_0)|^{1/\mu}, |p(\tau_2)-p(x_0)|^{1/\mu}\} $$
is contained in $\textbf{W}$. Then $(\tau_1,x_0)_{\mathscr{P}}$ and $(x_0,\tau_2)_{\mathscr{P}}$ may be deformed to make their $w_j$ map straight lines $L_{11}$ and $L_{21}$, respectively. If
$$\kappa_j=(p(\tau_j)-p(x_0))/p_{0}, \quad j=1,2,$$
then $L_{11}$ and $L_{21}$ are directed line segments, respectively, from  $\kappa_1$ to $0$ and from $0$ to $\kappa_2$, as expected.
\end{proof}

\begin{theorem} \label{mainintegral1}With the foregoing assumptions (i)--(v), $\forall\, m\in \mathbb{N}$,
we have   as $N\rightarrow \iy$,
  \begin{multline} \int_{a}^b\cdots  \int_{a}^b
  \exp\{-N\sum\limits_{j=1}^{n}p(t_{j})\}h(\Delta(t))\, q(t)\, d t_{1}\cdots d t_{n}\\=e^{-nN p(x_{0})}
  \Big\{\sum_{j=0}^{m-1}\frac{A_{j}}{N^{(n_{\nu}+|\lambda|+j)/\mu}}+O(\frac{1}{N^{(n_{\nu}+|\lambda|+m)/\mu}})\Big\}\, .\nonumber\end{multline}
The coefficients $A_{j}$ of the asymptotic expansion are given by
 \be  A_j=\int_{L_{1}\cup L_{2}} \cdots \int_{L_{1}\cup L_{2}}
  \exp\{- p_{0}\sum\limits_{j=1}^{n} w_{j}^{\mu}\}h(\Delta(w))\,w^{\lambda-1} a_{j}(w)\,d w_{1}\cdots d w_{n},\ee
	where $L_1$ and $L_2$ denote semi-infinite straight line paths in the complex plane, the first ending with the line   $L_{11}$ and the  second starting with the line   $L_{21}$ (see Lemma \ref{lemmapaths}).
  \end{theorem}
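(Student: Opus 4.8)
The plan is to reduce the contour integral, up to an exponentially small error, to a fixed integral over the straight-line paths $L_1\cup L_2$ of Lemma \ref{lemmapaths}, and then to apply the scaling argument of the multivariate Watson's lemma proved above. First I would localize near the saddle point: splitting $(a,b)_{\mathscr{P}}$ at the points $\tau_1,\tau_2$ of Lemma \ref{lemmapaths}, condition (v) guarantees that $\mathrm{Re}\{p(x)-p(x_0)\}$ is bounded away from zero on the outer arcs $(a,\tau_1)_{\mathscr{P}}$ and $(\tau_2,b)_{\mathscr{P}}$. Since by (iv) the integrand is absolutely integrable there for $N=N_0$, the factor $e^{-(N-N_0)\mathrm{Re}\{p(x)-p(x_0)\}}$ shows that every contribution in which at least one variable lies on an outer arc is exponentially small relative to $e^{-nNp(x_0)}$; this is the exact analogue of the tail estimate \eqref{estimate2}.

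On the inner region all variables lie near $x_0$, where I would deform each $t_j$-path to steepest descent and pass to the variables $w_j$ defined by $w_j^\mu=(p(t_j)-p(x_0))/p_0$. Three things happen at once. The exponent becomes $e^{-nNp(x_0)}\exp\{-Np_0\sum_j w_j^\mu\}$. Each $t_j-x_0=\sum_{s\ge1}c_s w_j^s$ with $c_1=1$ is analytic in $w_j$, so the Jacobian $\prod_j(dt_j/dw_j)$ and the factor $(t-t_0)^{\lambda-1}=w^{\lambda-1}(1+O(\|w\|))$ combine into $w^{\lambda-1}$ times an analytic function. Crucially, the Vandermonde factorizes: from $t_i-t_j=(w_i-w_j)\,g_{ij}(w_i,w_j)$ with $g_{ij}$ analytic and $g_{ij}(0,0)=1$, one gets $\Delta(t)=\Delta(w)\,G(w)$ with $G$ analytic and $G(0)=1$, so the homogeneity of $h$ yields $h(\Delta(t))=G(w)^{\nu}h(\Delta(w))$. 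Collecting every analytic correction into one function and expanding it into homogeneous polynomials $\sum_{j\ge0}a_j(w)$ with $\deg a_j=j$, the inner integrand becomes $e^{-nNp(x_0)}\exp\{-Np_0\sum_j w_j^\mu\}\,h(\Delta(w))\,w^{\lambda-1}\sum_j a_j(w)$ over $L_1\cup L_2$.

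Now the scaling $w=N^{-1/\mu}u$ removes $N$ from the exponent; since $w^{\lambda-1}$, $h(\Delta(w))$, $a_j(w)$ and $\prod_j dw_j$ are homogeneous of degrees $|\lambda|-n$, $n_\nu$, $j$ and $n$, it extracts the power $N^{-(n_\nu+|\lambda|+j)/\mu}$ with coefficient exactly the stated $A_j$. The error is controlled as in the Watson's lemma above: writing $\phi_m(w)$ for the integrand minus its first $m$ homogeneous pieces, one has $\phi_m(w)=w^{\lambda-1}O(\|w\|^{m})$, whence the rescaling produces $O(N^{-(n_\nu+|\lambda|+m)/\mu})$. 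The extension of the finite segments $L_{11},L_{21}$ to the semi-infinite lines $L_1,L_2$ only adds exponentially small terms, because $\mathrm{Re}(p_0w^\mu)\to+\infty$ along the steepest-descent directions by the branch convention $|\omega_0+\mu\omega|\le\pi/2$; that same convention is what makes $h(\Delta(w))$ and the fractional powers single-valued on $L_1\cup L_2$.

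I expect the error analysis, not the formal change of variables, to be the main obstacle. One must show the remainder is uniformly $O(N^{-(n_\nu+|\lambda|+m)/\mu})$ even though the Vandermonde couples all $n$ contour variables and the map $w_j\mapsto t_j$ is only locally conformal near $x_0$. Concretely, $(a,b)_{\mathscr{P}}^n$ should be decomposed into the near-saddle box and the $2^n-1$ mixed regions in which at least one variable sits on an outer arc, exactly as in the proof of the Watson's lemma; in each mixed region a repeated integration by parts, freezing the outer variables at $\tau_1$ or $\tau_2$, must dominate the polynomial growth coming from $h(\Delta)$, and this is precisely where conditions (iv) and (v) are used together.
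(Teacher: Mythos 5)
Your proposal is correct and follows essentially the same route as the paper's proof: localization at $\tau_1,\tau_2$ with exponentially small outer contributions via conditions (iv)--(v), the change of variables $w_j^{\mu}=(p(t_j)-p(x_0))/p_0$ of Lemma \ref{lemmapaths} with the Vandermonde factorization $\Delta(t)=\Delta(w)G(w)$, $G(0)=1$, absorbed through the homogeneity of $h$, a homogeneous expansion $w^{\lambda-1}\sum_j a_j(w)$, and a Watson-type rescaling extracting the powers $N^{-(n_\nu+|\lambda|+j)/\mu}$, with the passage from the finite segments $L_{11},L_{21}$ to the semi-infinite lines $L_1,L_2$ costing only exponentially small terms. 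The single point to adjust is in your last paragraph: no integration by parts is needed in the mixed regions --- the paper just splits $e^{-Np_0\sum_j w_j^{\mu}}=e^{-(N-N_0)p_0\sum_j w_j^{\mu}}\,e^{-N_0p_0\sum_j w_j^{\mu}}$ and uses $\mathrm{Re}\{p_0w_j^{\mu}\}\geq 0$ on the inner segments and $\mathrm{Re}\{p_0w_j^{\mu}\}\geq \eta_1>0$ on the extensions (which comes from condition (v) evaluated at $\tau_1,\tau_2$, not from the branch convention alone, since $|\omega_0+\mu\omega|=\pi/2$ would allow $\mathrm{Re}\{p_0w^{\mu}\}=0$), together with absolute convergence at $N=N_0$ from condition (iv), which already dominates the polynomial growth of $h(\Delta(w))\,w^{\lambda-1}a_j(w)$.
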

\begin{proof}
We proceed in three steps.

Step 1. Let $L_{11}$ and $L_{21}$ be the line paths of Lemma \ref{lemmapaths}, respectively  going from  $  \kappa_1$ to $0$ and from 0 to $\kappa_2$.  Now let  $L_{1}=L_{12}\cup L_{11}$ and $L_{2}=L_{21}\cup L_{22}$ denote the semi-infinite lines, the first passing through $  \kappa_1$ and ending at $0$, the second starting at $0$ and passing through   $\kappa_2 $. Thus we have
\begin{multline} \label{leadingintegral} \int_{\tau_1}^{\tau_2}\cdots \int_{\tau_1}^{\tau_2}
  \exp\{-N\sum\limits_{j=1}^{n}p(t_{j})\}h(\Delta(t))\, q(t)\,d t_{1}\cdots d t_{n}=\\e^{-nN p(x_{0})}\int_{L_{11}\cup L_{21}} \cdots \int_{L_{11}\cup L_{21}}
  \exp\{-N p_{0}\sum\limits_{j=1}^{n} w_{j}^{\mu}\}\,h(\Delta(w)) f(w)\,d w_{1}\cdots d w_{n}, \end{multline}
where $$f(w)=q(t)\Big(\prod_{i<j}\sum_{s\geq 1}c_{s}\frac{w_{i}^{s}-w_{j}^{s}}{w_{i}-w_{j}}\Big)^{\nu}\prod_{j=1}^{n}\big(\sum_{s\geq 1}s\, c_{s}w_{j}^{s-1}\big).$$

For small $\|w\|$, $f(w)$ has a convergent expansion of the form
\be  f(w)=w^{\lambda-1} \sum_{j=0}^{\iy}a_{j}(w), \quad
a_{j}(w)=\sum_{|k|=j}a_{j,k}\,w^{k},\ee
in which the coefficients $a_{j,k}$ can be computed in terms of  $p_s$ and $q_{j,k}$. In particular, $a_0=q_{0}$.

Step 2. Following the   approach for the Watson's lemma, we define $f_{m}(w), m = 0,1,\ldots,$ by
\be  f(w)=w^{\lambda-1} \sum_{j=0}^{m-1}a_{j}(w)+w^{\lambda-1} f_{m}(w).\ee
Then $f_{m}(w)=O(\|w\|^{m})$. Set $n_{\nu}=\nu n(n-1)/2$ and
  \be \label{monomialintegral1}A_j=\int_{L_{1}\cup L_{2}} \cdots \int_{L_{1}\cup L_{2}}
  \exp\{- p_{0}\sum\limits_{j=1}^{n} w_{j}^{\mu}\}h(\Delta(w))\,w^{\lambda-1} a_{j}(w)\,d w_{1}\cdots d w_{n},\ee the integral on the right-hand side of \eqref{leadingintegral} is rearranged in the form
\begin{multline} \label{leadingintegral1} \int_{L_{11}\cup L_{21}} \cdots \int_{L_{11}\cup L_{21}}
  \exp\{-N p_{0}\sum\limits_{j=1}^{n} w_{j}^{\mu}\}h(\Delta(w))\, f(w)\,d w_{1}\cdots d w_{n}\\=\sum_{j=0}^{m-1}\frac{A_{j}}{N^{(n_{\nu}+|\lambda|+j)/\mu}}
  -\varepsilon_{m,1}(N)+\varepsilon_{m,2}(N). \end{multline}
Here \be \label{smallintegral1}\varepsilon_{m,1}(N)=\sum_{j=0}^{m-1}\sum_{J} \int_{J}
  \exp\{- N p_{0}\sum\limits_{j=1}^{n} w_{j}^{\mu}\}h(\Delta(w))\, w^{\lambda-1} a_{j}(w)\,d w_{1}\cdots d w_{n}, \ee
summed over $J=J_{1}\times \cdots \times J_{n}$, $J_{j}=L_{11}\cup L_{21} $ or $L_{12}\cup L_{22}$, but at least one of which is  $L_{12}\cup L_{22}$, and
\be \label{smallintegral2}\varepsilon_{m,2}(N)=\int_{L_{11}\cup L_{21}} \cdots \int_{L_{11}\cup L_{21}}
  \exp\{-N p_{0}\sum\limits_{j=1}^{n} w_{j}^{\mu}\}h(\Delta(w))\, w^{\lambda-1} f_{m}(w)\,d w_{1}\cdots d w_{n}. \ee

For $\varepsilon_{m,2}(N)$, splitting the integral domain into $2^{n}$ parts, for one of which, for instance,
the domain $L_{11}\times \cdots \times L_{11}\times L_{21}$,  substitute $w_{j}=\kappa_{1}v_{j}, j<n$ and $w_{n}=\kappa_{2}v_{n}$,  and note that Condition (v)
 implies
 \be\label{Rebound}\textrm{Re}\{p_0\kappa_{1}^{\mu}\}\geq \eta_{1}, \quad \textrm{Re}\{p_0\kappa_{2}^{\mu}\}\geq \eta_{1}\ee  for some positive $\eta_{1}$,  we have
\begin{align*} &\int_{L_{11}} \cdots \int_{L_{21}}
  \exp\{-N p_{0}\sum\limits_{j=1}^{n} w_{j}^{\mu}\}h(\Delta(w))\, w^{\lambda-1} f_{m}(w)\,d w_{1}\cdots d w_{n}\\
  &=   \int_{0}^{1} \cdots \int_{0}^{1}
  \exp\{-N p_{0}(\kappa_{1}^{\mu}v_{1}^{\mu}+\cdots+\kappa_{1}^{\mu}v_{n-1}^{\mu}+\kappa_{2}^{\mu}v_{n}^{\mu}\}h(\Delta(w))\, w^{\lambda-1} O(\|w\|^{m})
  \,d v_{1}\cdots d v_{n}\\
  &=O(N^{-(n_{\nu}+|\lambda|+m)/\mu}). \end{align*}

For $\varepsilon_{m,1}(N)$, without loss of generality, consider the case when $J_1=\cdots=J_{n-1}=L_{11}\cup L_{21}, J_{n}=L_{12}\cup L_{22}$.
Then \begin{align}I_{J}:&=  \int_{J}
  \exp\{- N p_{0}\sum\limits_{j=1}^{n} w_{j}^{\mu}\}h(\Delta(w))\,w^{\lambda-1}  a_{j}(w)\,d w_{1}\cdots d w_{n}\nonumber\\
  &= \int_{J}\exp\{- (N-N_{0}) p_{0}\sum\limits_{j=1}^{n} w_{j}^{\mu}\}
  \exp\{- N_{0} p_{0}\sum\limits_{j=1}^{n} w_{j}^{\mu}\}\,h(\Delta(w))\, w^{\lambda-1} a_{j}(w)\,d w_{1}\cdots d w_{n}.\end{align}
 Condition (v) implying that  $\textrm{Re}\{p_0 w_{j}^{\mu}\}\geq 0, 1\leq j<n$ for $w_{j} \in L_{11}\cup L_{21}$, combining with \eqref{Rebound} and Condition (iv)  we obtain
 \begin{align}|I_{J}|&\leq O(1) \, e^{-(N-N_{0})\eta_{1}}.\end{align}

The combination of the results of this step with \eqref{leadingintegral} leads to
\begin{multline} \label{leadingintegral&small} \int_{\tau_1}^{\tau_2}\cdots \int_{\tau_1}^{\tau_2}
  \exp\{-N\sum\limits_{j=1}^{n}p(t_{j})\}h(\Delta(t))\, q(t)\,d t_{1}\cdots d t_{n}=\\e^{-nN p(x_{0})}
  \Big\{\sum_{j=0}^{m-1}\frac{A_{j}}{N^{(n_{\nu}+|\lambda|+j)/\mu}}+O(\frac{1}{N^{(n_{\nu}+|\lambda|+m)/\mu}})\Big\}, \end{multline}
as $N\rightarrow \iy$.

Step 3. It remains to consider the tail of the integral, that is, the contribution from
$(a,b)^{n}\setminus (\tau_{1},\tau_{2})^{n}=\bigcup_{J}J_{1}\times \cdots \times J_{n}$, $J_{j}=(\tau_{1},\tau_{2})$ or $(a,\tau_{1})\cup(\tau_{2},b)$ but at least one of which is $(a,\tau_{1})\cup(\tau_{2},b)$.
 Without loss of generality, we assume that $J_1=\cdots=J_{n-1}=(\tau_{1},\tau_{2})$ and $J_{n}=(a,\tau_{1})\cup(\tau_{2},b)$. From Condition (v) we know  that
\be  \textrm{Re}\{p(x)-p(x_0)\}\geq 0\,  ( x\in (a,b)_{\mathscr{P}}) \quad \text{and} \quad \textrm{Re}\{p(x)-p(x_0)\}\geq \eta_2\, ( x\in (a,\tau_{1})_{\mathscr{P}}\cup(\tau_{2},b)_{\mathscr{P}})\ee
for some   $\eta_2>0$. Therefore, $$N\textrm{Re}\{p(t_{n})-p(x_0)\}\geq (N-N_{0})\eta_{2}+N_{0}\textrm{Re}\{p(t_{n})-p(x_0)\},$$
and condition (iv) shows
\begin{multline} \label{tailintegral} \Big|  \int_{J}
  \exp\{-N\sum\limits_{j=1}^{n}(p(t_{j})-p(x_0))\}h(\Delta(t))\, q(t)\,d t_{1}\cdots d t_{n}\Big|\leq\\e^{-(N-N_{0})\eta_{2}}\big|e^{-nN_0 p(x_{0})}\big|
    \int_{J}
 \Big |\exp\{-N_0\sum\limits_{j=1}^{n} p(t_{j})\}h(\Delta(t))\, q(t)\,d t_{1}\cdots d t_{n}\Big|. \end{multline}
Thus the asymptotic expansion  \eqref{leadingintegral&small} is unaffected by  the tail integrals and the theorem follows.
\end{proof}


\begin{remark}\label{remarkmu} If we focus on the leading term with  the coefficient $A_0$, then there is an immediate generalization of   the previous theorem which will be used later in the article.  Let $g(t)$ be analytic in the whole $\mathbb{C}^n$.  Now in the integral \eqref{pqintegral}, replace $q(t)$ by $q(t)g(N^{1/\bar{\mu}}(t-t_{0}))$, and assume $\bar \mu\geq \mu$.
  At first sight, the factor $g(N^{1/\bar\mu}(t-t_{0}))$ may become very large as $N\to \infty$.  However, by repeating the analysis done in steps 2 and 3 (and paying a particular attention to the rescaling of the variables), we see that $g(N^{1/\bar\mu}(t-t_{0}))$ does not lead to a new significant contribution to the leading coefficient.  Most importantly, if $\bar \mu=\mu$, then we get a very simple formula:
\begin{multline*} \int_{a}^{b}\cdots \int_{a}^{b}
 \exp\{-N\sum\limits_{j=1}^{n}p(t_{j})\}h(\Delta(t))\, q(t)\,g(N^{1/\mu} (t-t_{0}))\, d t_{1}\cdots d t_{n}\\
        =e^{-nN p(x_{0})}
 \Big\{
  \frac{A'_{0}}{N^{(n_{\nu}+|\lambda|)/\mu}}+O(N^{-(n_{\nu}+|\lambda|+1)/\mu})
 \Big\},
 \end{multline*}
  where
        \be \label{monomialintegral1p}A'_0=q_{0}\int_{L_{1}\cup L_{2}} \cdots \int_{L_{1}\cup L_{2}}
  \exp\{- p_{0}\sum\limits_{j=1}^{n} w_{j}^{\mu}\}h(\Delta(w))\, w^{\lambda-1} \,g(w)\,d w_{1}\cdots d w_{n}.\ee
\end{remark}

\subsection{Laplace's method for contour integrals: two simple saddle points}\label{section2saddle}

Once again we consider the integral \eqref{pqintegral}.  This time however, we suppose that $p(x)$ has two simple saddle points $x_\pm$, which means $$p'(x_\pm)=0\qquad  \text{and}\qquad  p_\pm:=p''(x_\pm) \neq 0.$$
This case is quite different from the one-dimensional case because of the Vandermonde determinant.  Our assumptions are:
\begin{itemize}
\item[(i)] $p(x)$ and $q(t)$ are   single-valued and holomorphic in the domains $\textbf{T}\subseteq \mathbb{C}$ and $\textbf{T}^{n}$;
$h(x)$ is a homogeneous analytic function of degree $\nu\geq 0$ in $\mathbb{C}$; for simplicity,  we also suppose that $q(t)$ is symmetric and $h(x)=h(-x)$.

\item[(ii)] The integration path  $\mathscr{P}$  is independent of $N$. The endpoints $a$ and $b$ of  $\mathscr{P}$  are
finite or infinite, and $(a,b)_{\mathscr{P}}$ lies within $\textbf{T}$.

\item[(iii)] $p'(x)$ has exactly two simple  zeros  at   interior points $x_{+},x_{-}$ of $\mathscr{P}$; setting $t_{j,+}=(x_{+},\ldots,x_{+},x_{-},\ldots,x_{-})$
consisting  of $j$ $x_{+}$'s and $(n-j)\, x_{-}$'s, $p_{\pm}=p''(x_\pm)$ and $q_{j,+}=q(t_{j,+})$.

\item[(iv)] There exists $N_{0}>0$ such that  $I_{N_{0}}$ converges absolutely at $(a,\ldots,a)$ and $(b,\ldots,b)$.

\item[(v)] $\textrm{Re}\{p(x)-p(x_{\mp})\}$ is positive on $(a,x_0]_{\mathscr{P}}$ and $[x_0,b)_{\mathscr{P}}$ respectively for some $x_0\in (a,b)_{\mathscr{P}}$, except at $x_{-},x_{+}$, and is bounded away
from zero   as $x \rightarrow a$, or $x \rightarrow b$ along $\mathscr{P}$.
\end{itemize}

\begin{lemma}\label{lemmapaths2} Assume conditions (i), (ii), and (iii). Let $\mathbf{D}_{\pm}\subset\mathbb{C}$ denote small disks centered at $x_\pm$ and let  $w\,:\,\mathbf{D}\to\mathbb{C} $ be the map such that $w(x)^\mu=(p(x)-p(x_0))/p_0$.    Then, there exist paths $(\tau_1,x_\pm)_{\mathscr{P}}$ and $(x_\pm,\tau_2)_{\mathscr{P}}$  contained in $(a,b)_{\mathscr{P}}$ that can be respectively deformed into   paths $\mathscr{P}^\pm_1$ and $\mathscr{P}^\pm_2$, such that under the map $w$, $\mathscr{P}^\pm_1$ becomes a straight line path $L^\pm_{11}$ ending at the origin while $\mathscr{P}^\pm_2$  becomes a  straight line path $L^\pm_{21}$ starting at the origin.
\end{lemma}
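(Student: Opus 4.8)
The plan is to reduce the two-saddle situation to two independent copies of the single-saddle construction of Lemma \ref{lemmapaths}, one localized at $x_+$ and one at $x_-$, reading the reference data $x_0,p_0$ of the statement as the local saddle data at each saddle. Because $x_+$ and $x_-$ are \emph{simple} saddle points the relevant order is $\mu=2$, and Condition (iii) gives near each of them a convergent expansion $p(x)-p(x_\pm)=p_{0,\pm}(x-x_\pm)^2\bigl(1+O(x-x_\pm)\bigr)$ with $p_{0,\pm}:=\tfrac12 p''(x_\pm)=\tfrac12 p_\pm\neq 0$. On the disk $\mathbf{D}_\pm$ I would define the local map by $w^2=(p(x)-p(x_\pm))/p_{0,\pm}$, fixing the branches of $p_{0,\pm}^{1/2}$ and of $\textrm{ph}(w)$ at each saddle separately according to the branch convention of the preceding remark.

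The first step is purely local and identical at the two saddles. Taking the square root of the expansion and applying the Binomial theorem gives $w=(x-x_\pm)\{1+c(x-x_\pm)+\cdots\}$, so $w$ is analytic with derivative $1$ at $x_\pm$ and hence conformal on a sufficiently small disk. The inversion theorem for analytic functions then supplies, for every small enough radius, a conformal image domain $\mathbf{W}_\pm\ni 0$ together with a convergent inverse expansion $x-x_\pm=\sum_{s\ge 1}c_s^{\pm}w^s$, $c_1^\pm=1$, exactly as in the proof of Lemma \ref{lemmapaths}.

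The second step fixes the truncation points and carries out the deformation. At each saddle I would choose $\tau_1\in(a,x_\pm)_{\mathscr{P}}$ and $\tau_2\in(x_\pm,b)_{\mathscr{P}}$ close enough to $x_\pm$ that the disk $\{|w|\le\min(|p(\tau_1)-p(x_\pm)|^{1/2},\,|p(\tau_2)-p(x_\pm)|^{1/2})\}$ lies inside $\mathbf{W}_\pm$. Since $w$ is a biholomorphism of $\mathbf{D}_\pm$ onto $\mathbf{W}_\pm$, the arcs $(\tau_1,x_\pm)_{\mathscr{P}}$ and $(x_\pm,\tau_2)_{\mathscr{P}}$ may be deformed, within $\mathbf{D}_\pm$ and with endpoints fixed, into the $w$-preimages $\mathscr{P}^\pm_1$ and $\mathscr{P}^\pm_2$ of the radial segments $L^\pm_{11}$ (from $\kappa_1^\pm:=(p(\tau_1)-p(x_\pm))/p_{0,\pm}$ to $0$) and $L^\pm_{21}$ (from $0$ to $\kappa_2^\pm$). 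This is precisely the asserted deformation.

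The step I expect to demand the most care is organizational rather than analytic: the branch bookkeeping. Unlike the single-saddle case, the two local normalizations use distinct reference constants $p_{0,+}$ and $p_{0,-}$ whose phases generally differ, so the four radial directions $L^\pm_{11},L^\pm_{21}$ need not be collinear and the branch convention must be applied afresh at each saddle, with $\omega$ read off the slope of $\mathscr{P}$ at $x_\pm$. Once these two independent coordinate systems are kept straight the proof is a verbatim repetition of the argument of Lemma \ref{lemmapaths} at each saddle, so no genuinely new obstacle arises; the global splitting of $(a,b)_{\mathscr{P}}$ at an intermediate point and the control of the tails are deferred, as in the one-saddle case, to the asymptotic theorem that will invoke Conditions (iv) and (v).
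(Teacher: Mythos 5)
Your proposal is correct and follows exactly the paper's route: the paper's own proof is the one-line observation that one subdivides $(a,b)_{\mathscr{P}}$ into two disjoint sub-paths, each containing one of $x_\pm$, and then repeats the argument of Lemma \ref{lemmapaths} at each saddle, which is precisely your reduction (with the local data $\mu=2$, $p_{0,\pm}=\tfrac12 p''(x_\pm)$, and the branch convention applied afresh at each saddle). Your spelled-out version of the localized inversion-theorem argument and the remark that the two radial directions need not be collinear are faithful elaborations of what the paper leaves implicit.
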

\begin{proof}It suffices to subdivide the path $(a,b)_{\mathscr{P}}$ into two disjoint sub-paths $(a,b)^\pm_{\mathscr{P}}\ni x_\pm$ and then proceed as for Lemma \ref{lemmapaths}.\end{proof}

\begin{theorem}\label{teo2saddle}With the foregoing assumptions (i)--(v),   as $N\rightarrow \iy$,
  \begin{multline} \label{mainintegral2} \int_{a}^{b}\cdots \int_{a}^{b}
  \exp\{-N\sum\limits_{j=1}^{n}p(t_{j})\}h(\Delta(t))\, q(t)\,d t_{1}\cdots d t_{n}=\\
        \sum_{l=0}^{n}\binom{n}{l}\exp\{-N \big(lp(x_{+})+(n-l)p(x_{-})\big)\}
  \,\Big(q_{l,+}\,B_{l,+}N^{-(l_{\nu}+(n-l)_{\nu}+n)/2}\big(1+O(N^{-1/2})\big)\Big) \, . \end{multline}
Here the coefficients $B_{l,+}$ are given by
\be \label{monomialintegral2}B_{j,+}= \int_{(L_{1,+}\cup L_{2,+})^{j}\times (L_{1,-}\cup L_{2,-})^{n-j}}
  \exp\Big\{- \frac{p_{+}}{2}\sum\limits_{l=1}^{j} w_{l}^{2}-\frac{p_{-}}{2}\sum\limits_{l=j+1}^{n} w_{l}^{2}\Big\}\,h(\Delta_{j,+}(w))\,d w_{1}\cdots d w_{n}\, ,\ee
	where $L_{1,\pm}$ and $L_{2,\pm}$ denote semi-infinite straight lines such that $L_{1,\pm}\supset  L^\pm_{11}$ and $L_{2,\pm} \supset L^\pm_{21}$ respectively end and start at the origin (see Lemma \ref{lemmapaths2} above). \end{theorem}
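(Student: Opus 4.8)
The plan is to localize each of the $n$ variables near one of the two saddles and thereby reduce to the single-saddle machinery of Theorem~\ref{mainintegral1}. First I would fix the intermediate point $x_0\in(a,b)_{\mathscr P}$ supplied by condition~(v) and split the contour into $(a,x_0]_{\mathscr P}\ni x_-$ and $[x_0,b)_{\mathscr P}\ni x_+$. Expanding the $n$-fold contour accordingly decomposes the domain into $2^n$ pieces, and the piece in which exactly $l$ variables lie in the $x_+$-half carries the exponential weight $\exp\{-N(lp(x_+)+(n-l)p(x_-))\}$. Since $q$ is symmetric and $h(x)=h(-x)$ by condition~(i), the $\binom nl$ pieces sharing a given $l$ contribute identically; this yields the binomial coefficient in \eqref{mainintegral2} and reduces the task to the representative piece with $t_1,\dots,t_l$ near $x_+$ and $t_{l+1},\dots,t_n$ near $x_-$.

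On this piece I would straighten each cluster by the conformal maps $w_i$ of Lemma~\ref{lemmapaths2} (with $\mu=2$), after which the exponent reads $-N\frac{p_+}{2}\sum_{i\le l}w_i^2-N\frac{p_-}{2}\sum_{i>l}w_i^2$ to leading order. The decisive step, absent from the one-dimensional theory, is the Vandermonde. Grouping indices by cluster gives
\begin{equation}
\Delta(t)=\Big(\prod_{i<k\le l}(t_i-t_k)\Big)\Big(\prod_{l<i<k}(t_i-t_k)\Big)\prod_{i\le l<k}(t_i-t_k),
\end{equation}
where, after the rescaling $w\mapsto w/\sqrt N$, the within-cluster differences are $O(N^{-1/2})$ whereas each cross-cluster difference equals $x_+-x_-+O(N^{-1/2})$. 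Hence the cross product freezes to $(x_+-x_-)^{l(n-l)}$ and $\Delta(t)=\Delta_{l,+}(w)(1+O(N^{-1/2}))$ with $\Delta_{l,+}(w)=(x_+-x_-)^{l(n-l)}\Delta(w_1,\dots,w_l)\Delta(w_{l+1},\dots,w_n)$; the homogeneity of $h$ then gives $h(\Delta(t))=h(\Delta_{l,+}(w))(1+O(N^{-1/2}))$, confining the error to the stated relative $O(N^{-1/2})$ term.

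With $q(t)=q_{l,+}+O(N^{-1/2})$ by continuity, the representative piece splits into two decoupled Gaussian integrals over the rescaled cluster variables. The order of magnitude is then pure bookkeeping: the $n$ Jacobians give $N^{-n/2}$, and homogeneity of $h$ applied to the two within-cluster Vandermondes, of polynomial degrees $\binom l2$ and $\binom{n-l}2$, gives $N^{-(l_\nu+(n-l)_\nu)/2}$, so the piece equals $q_{l,+}B_{l,+}N^{-(l_\nu+(n-l)_\nu+n)/2}(1+O(N^{-1/2}))$ with $B_{l,+}$ as in \eqref{monomialintegral2}. The tails are disposed of exactly as in Step~3 of Theorem~\ref{mainintegral1}: by condition~(v), $\mathrm{Re}\{p(x)-p(x_\mp)\}$ is bounded below by a positive constant once $x$ leaves a fixed neighborhood of the saddles, so any variable escaping its cluster produces a factor $e^{-(N-N_0)\eta}$, and condition~(iv) bounds the residual $N_0$-integral, making all tail and mixed pieces exponentially negligible.

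I expect the Vandermonde factorization to be the main obstacle. Unlike the scalar one-dimensional two-saddle Laplace method, here the determinant couples all $n$ variables at once, and the argument hinges on the separation of scales --- cross-cluster differences frozen at order one while within-cluster differences carry the genuine $N$-dependence --- together with a careful passage of the $O(N^{-1/2})$ corrections through the non-polynomial homogeneous function $h$.
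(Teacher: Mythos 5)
Your proposal is correct and follows essentially the same route as the paper: split the contour about the intermediate point of condition (v), decompose into $2^n$ pieces, use the symmetry of $q$ and the evenness of $h$ to collect $\binom{n}{l}$ identical terms, freeze the cross-cluster Vandermonde factor to $(x_+-x_-)^{l(n-l)}$, and then run the single-saddle machinery of Theorem \ref{mainintegral1} (local Gaussian reduction plus exponentially small tails via conditions (iv)--(v)) on each cluster. Your write-up is in fact somewhat more explicit than the paper's own sketch, particularly in passing the $O(N^{-1/2})$ multiplicative error through the homogeneous function $h$ and in the power-counting $N^{-(l_\nu+(n-l)_\nu+n)/2}$.
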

\begin{proof}

As in step 1 of Section \ref{Laplacemethod},  let $L_{1,+}\cup L_{2,+}$ and $L_{1,-}\cup L_{2,-}$ denote  the unions of two half-lines
respectively corresponding to $x_{+}$ and $x_{-}$.  Thus,  the $n$-dimensional integral can now be broken into a sum of $2^n$ terms, each being an $n$-dimensional integral in which each variable is in the neighborhood of either $x_+$ or $x_-$. When $j$ variables are in the neighborhood of $x_+$, the Vandermonde determinant becomes
$$ \Delta_{j,+}(w)=(x_{+}-x_{-})^{j(n-j)}\prod_{1\leq p<q\leq j}(w_{p}-w_{q})\prod_{j< p<q\leq n}(w_{p}-w_{q}).$$
By the symmetry assumption, there are $\tbinom{n}{j}$ terms amongst the $2^n$ terms that produce such a Vandermonde determinant. By proceeding as for Theorem \ref{mainintegral1} and focusing only on the dominant contribution one rapidly establishes the desired asymptotic formula.
\end{proof}

Notice the fact that $$l_{\nu}+(n-l)_{\nu}
=\nu(l-\frac{n}{2})^{2}+\nu  n(n-2)/4 \quad (l=0,1, \ldots, n)$$
attains its minimum value at $l=m$ if $n=2m$ or $l=m,m-1$ if $n=2m-1$, which shows that probably only one or two terms in the sum of  \eqref{mainintegral2} give
 a major contribution.

\subsection{Integrals of Selberg type}\label{sectionabsval}

Recall that our aim is to evaluate the  integrals such as \eqref{genint} when $N\to \infty$.   In Theorems \ref{mainintegral1} and \ref{teo2saddle}
$h(\Delta(t))$ must be homogeneous and analytic.  The latter condition is problematic since we want to calculate integrals involving
$|\Delta(t)|^\nu$.  Of course, when $t\in \mathbb{R}^n$ and $\nu$ is even, then we can
set $h(\Delta(t))=(\Delta(t))^\nu$.  For $\nu$ not even, it will be enough to use the following integral representation:\footnote{Another method for treating the absolute value of integral \eqref{genint}  was proposed in \cite{df}.  Its first step consists in changing  the $n$ contours $\mathcal{C}$ into a series of $n$ distinct contours,  $\mathcal{C}_1$ for variable $t_1$,  $\mathcal{C}_2$ for variable $t_2$, and so on, so that $|\Delta(t)|^\nu$ can be written as $\Delta(t)^\nu$ without affecting the value of the integral.  For instance, if $\mathcal{C}=\mathbb{R}$, then the contours $\mathcal{C}_1,\ldots, \mathcal{C}_n$ can be chosen such that they all start at $-\infty$  and comply with $\Re ({t_n})\leq \ldots \leq \Re (t_1)\leq \infty$.   Although with this method, one easily predicts the correct asymptotic expansions, we prefer not to use it because the nature of the contours  $\mathcal{C}_j$ depends on the specific integrals that must be evaluated, and because treating rigorously a succession of $n$ parametric integrals is much harder than what we use here.     }
\begin{equation}
|x|^{\nu}=c_\nu \int_0^{\infty}r^{-\nu-1}H_\nu(r x) d r, \label{integralrepofabsolutevaluepowerfunction1}
\end{equation}
which holds if  $x\in \mathbb{R}$, $\nu>-1$ and $\nu \neq 0, 2, 4, \ldots$. Here
\be \label{defH} H_{\nu}(r)=\sum_{j=0}^{[\nu/2]}\frac{1}{(2j)!}(-r^{2})^j -\cos(r)\qquad \text{and}\qquad c_\nu =\frac{2}{\pi}\sin\left(\frac{\pi\nu}{2}\right)\Gamma(\nu+1).\ee
This can be computed by complex analysis method when $-1<\nu<0$, see  Gradshteyn and Ryzhik's book \cite{grad}; otherwise, first integrate by parts and use the former result.

Let us now consider
\begin{equation} \label{pqintegralabs}I_{N}=\int_{\mathscr{P}}\cdots \int_{\mathscr{P}}
  \exp\{-N\sum\limits_{j=1}^{n}p(t_{j})\}|\Delta(t)|^\nu\, q(t)\,d t_{1}\cdots d t_{n}, \end{equation}
where the functions $p$ and $q$ are as  mentioned  previously while   $\mathscr{P}$ denotes some interval  $(a,b)\subseteq \mathbb{R}$ or one circle in the complex plane.  Assume that $\nu$ is not even and that the path  $\mathscr{P}=(a,b)\subseteq \mathbb{R}$, then
 $$  I_{N}=c_\nu \int_{(\!a,b)^{n}}
  \exp\{-N\sum\limits_{j=1}^{n}p(t_{j})\}\, q(t)\,\bigg(\int_{0}^{\infty}r^{-\nu-1}H_\nu (r \Delta(t))\, d r\bigg) \,d^{n}t.
$$
By Fubini's theorem, we   rewrite
\begin{equation}  I_{N}=c_\nu \int_{0}^{\infty}r^{-\nu -1}\bigg(\int_{(\!a,b)^{n}}
  \exp\{-N\sum\limits_{j=1}^{n}p(t_{j})\}\, q(t)\,H_\nu (r\Delta(t))\, d^{n} t\bigg) d r. \label{lineintabs}\end{equation}
Likewise, if $\nu$ is not even and $\mathscr{P}$ is the unit circle $\mathbb{T}:=\{z:|z|=1\}$, setting $t_j=e^{i\theta_j}$, since
$$|\Delta(t)|=|\prod_{1\leq j<k\leq n}(2 \sin\frac{\theta_j-\theta_k}{2})|\ \  \mbox{and} \ \  \prod_{1\leq j<k\leq n}(2 \sin\frac{\theta_j-\theta_k}{2})=\Delta(t)\prod_{j=1}^{n}(it_j)^{-(n-1)/2},$$
we have
\begin{equation}  I_{N}=c_\nu \int_{0}^{\infty}r^{-\nu-1} \bigg(\int_{\mathbb{T}^{n}}
  \exp\{-N\sum\limits_{j=1}^{n}p(t_{j})\}\, q(t)\, H_\nu \big(r \Delta(t)\prod_{j=1}^{n}(it_j)^{-(n-1)/2}\big)\,d^{n} t  \bigg)  d r.
\label{circleintabs}\end{equation}

Notice that the precise form of the integral over $r$ in \eqref{lineintabs} and \eqref{circleintabs} allows a rescaling of the variable, so the function $H_{\nu}(r)$ plays a similar role like  a homogeneous function of degree $\nu$, and we can apply established theorems for the interior $n$-dimensional integrals on the right-hand side of \eqref{lineintabs} and \eqref{circleintabs}. Finally, since  we are concerned on the leading term as $N \rightarrow \infty$ in this article,  assuming that we can deform the line segments in \eqref{monomialintegral1} or in \eqref{monomialintegral2} to the real line, then we can once more interchange the order of integration and reconstruct the absolute value by integrating over $r$.


In what follows, we say that the integral $I_N$ as above satisfies the condition (vi):  after an appropriate change of variable, the factor $\xi$ in $H_\nu (r\xi)$, depending on the  variables $w$ and the saddle points, becomes a real-valued variable (the line segments $L_i$, coming from the integration in the neighborhood of the saddle points,  should first  be deformed to the real line).  All the examples  considered in the article satisfy this condition.  The next corollaries immediately follow from Theorem \ref{mainintegral1}, Remark \ref{remarkmu}, Theorem \ref{teo2saddle}, and the use of the integral representation \eqref{integralrepofabsolutevaluepowerfunction1}. Similar results hold for the integral \eqref{circleintabs}.

\begin{coro}\label{onesaddle} Under the foregoing  assumptions (i)-(v) of Section \ref{Laplacemethod} and (vi) above,  let
 \be\nonumber I_{N,n}=\int_{(\!a,b)^{n}}
  \exp \{-N\sum\limits_{j=1}^{n}p(t_{j}) \}|\Delta(t)|^\nu\, q(t)\,g(N^{1/\mu}(t-t_0))\,d^{n} t
        \ee
        where $g(t)$ is analytic in $\mathbb{C}^n$ and  $p(x)$ admits one  saddle point $x_0$ of order $\mu-1$.
 Then, as $N\to\infty$,
\begin{equation*} I_{N,n}\sim \frac{e^{-nNp(x_0)}}{N^{(n_\nu+n)/\mu}}  A_0\,q(x_0,\ldots,x_0)  \end{equation*}
where
\begin{equation*} A_0=\int_{\mathbb{R}^{n}} \exp\Big\{-p^{(\mu)}(x_0)/\mu!\sum_{j=1}^n w_j^\mu\Big\}\, g(w)\,|\Delta(w)|^\nu\, d^{n}w.
  \end{equation*}
 \end{coro}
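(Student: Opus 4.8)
The plan is to combine Theorem \ref{mainintegral1} (in the leading-order form of Remark \ref{remarkmu}, since the extra factor $g(N^{1/\mu}(t-t_0))$ is present) with the integral representation \eqref{integralrepofabsolutevaluepowerfunction1} for $|\Delta(t)|^\nu$, following exactly the recipe laid out in Subsection \ref{sectionabsval}. The obstacle is that $|\Delta(t)|^\nu$ is not analytic, so Theorem \ref{mainintegral1} does not apply directly; the whole point is to trade the absolute value for the analytic homogeneous function $h(\Delta(t))=H_\nu(r\Delta(t))$ at the cost of one extra integration over the auxiliary variable $r$, and only at the very end reconstruct $|\cdot|^\nu$ by integrating $r$ back out.

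First I would write, via \eqref{lineintabs},
\begin{equation*}
I_{N,n}=c_\nu\int_0^\infty r^{-\nu-1}\Big(\int_{(a,b)^n}\exp\{-N\textstyle\sum_j p(t_j)\}\,q(t)\,g(N^{1/\mu}(t-t_0))\,H_\nu(r\Delta(t))\,d^nt\Big)\,dr.
\end{equation*}
For fixed $r$, the inner integral is precisely of the form treated in Remark \ref{remarkmu}, with the homogeneous analytic factor $h(\Delta(t))=H_\nu(r\Delta(t))$ of degree $\nu$ and with $\bar\mu=\mu$. Applying that remark gives the inner integral the leading behaviour $e^{-nNp(x_0)}N^{-(n_\nu+n)/\mu}A_0'(r)\,(1+o(1))$, where by \eqref{monomialintegral1p}, with $\lambda=(1,\dots,1)$ so that $|\lambda|=n$ and $q_0=q(x_0,\dots,x_0)$,
\begin{equation*}
A_0'(r)=q(x_0,\dots,x_0)\int_{(L_1\cup L_2)^n}\exp\{-p_0\textstyle\sum_j w_j^\mu\}\,H_\nu(r\Delta(w))\,g(w)\,d^nw,
\end{equation*}
with $p_0=p^{(\mu)}(x_0)/\mu!$. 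Here I invoke condition (vi): after deforming the line segments $L_1\cup L_2$ to the real axis, the argument $\xi=\Delta(w)$ inside $H_\nu(r\xi)$ becomes real-valued, so the $w$-integral runs over $\mathbb{R}^n$ and the homogeneity of degree $\nu$ in the combination $r\Delta(w)$ is genuine.

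Next I would interchange the $r$- and $w$-integrations once more (Fubini again, justified by the absolute convergence that conditions (iv)–(v) guarantee for sufficiently large $N$, together with $\nu>-1$, $\nu\notin 2\mathbb{N}_0$, which make the $r$-integral in \eqref{integralrepofabsolutevaluepowerfunction1} convergent). Pulling the $r$-integral inside and using \eqref{integralrepofabsolutevaluepowerfunction1} in reverse,
\begin{equation*}
c_\nu\int_0^\infty r^{-\nu-1}H_\nu(r\Delta(w))\,dr=|\Delta(w)|^\nu,
\end{equation*}
so the factor $H_\nu(r\Delta(w))$ is converted back into $|\Delta(w)|^\nu$. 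This yields the claimed leading-order coefficient
\begin{equation*}
A_0=q(x_0,\dots,x_0)^{-1}A_0'=\int_{\mathbb{R}^n}\exp\{-p^{(\mu)}(x_0)/\mu!\,\textstyle\sum_j w_j^\mu\}\,g(w)\,|\Delta(w)|^\nu\,d^nw,
\end{equation*}
matching the statement after factoring out $q(x_0,\dots,x_0)$. The main obstacle, and the step deserving the most care, is the double use of Fubini's theorem around the leading-order asymptotics: one must check that the error term $O(N^{-(n_\nu+n+1)/\mu})$ from Remark \ref{remarkmu} is uniform enough in $r$ (or is controlled by an $r$-integrable majorant) so that integrating it against $r^{-\nu-1}$ over $(0,\infty)$ still produces a genuine $o(1)$ correction to the leading term. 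Since we only claim the leading asymptotic $\sim$, I would handle this by splitting the $r$-integral into a bounded range and a tail, on the tail exploiting the decay of $H_\nu(r\Delta(w))$ relative to $r^{-\nu-1}$ and the exponential damping $e^{-p_0\sum w_j^\mu}$, and on the bounded range using the uniformity of the Laplace estimate in the parameter $r$; the remaining contributions from the complementary integration regions are exponentially small exactly as in Step 3 of the proof of Theorem \ref{mainintegral1}.
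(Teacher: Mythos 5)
Your overall architecture (the representation \eqref{integralrepofabsolutevaluepowerfunction1}, Fubini to reach \eqref{lineintabs}, a Laplace estimate on the inner integral, then Fubini back to reconstruct the absolute value) is the same as the paper's, and your final formula for $A_0$ is the right one. But there is a genuine error at the pivotal step: for fixed $r$, the factor $h(\Delta(t))=H_\nu(r\Delta(t))$ is analytic yet \emph{not} homogeneous of degree $\nu$ --- from \eqref{defH}, $H_\nu(cx)\neq c^{\nu}H_\nu(x)$, and in fact $H_\nu(x)=O\big(x^{2[\nu/2]+2}\big)$ as $x\to0$. Consequently Remark \ref{remarkmu} does not apply at fixed $r$, and your claimed fixed-$r$ leading behaviour $e^{-nNp(x_0)}N^{-(n_\nu+n)/\mu}A_0'(r)$ is false: under the local rescaling $t=t_0+N^{-1/\mu}w$ one has $\Delta(t)\approx N^{-n(n-1)/(2\mu)}\Delta(w)\to 0$, so at fixed $r$ the inner integral decays like $N^{-(n_{\nu'}+n)/\mu}$ with $\nu'=2[\nu/2]+2>\nu$, strictly faster than your claimed rate. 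Consistently, your error-control plan cannot work: integrating the fixed-$r$ Taylor-leading piece of $H_\nu$ against $c_\nu r^{-\nu-1}$ diverges at $r=\infty$ (the exponent $2[\nu/2]+1-\nu$ exceeds $-1$), which signals that the dominant contribution to \eqref{lineintabs} comes from the region $r\asymp N^{n(n-1)/(2\mu)}$ --- exactly the tail that your bounded-range/tail splitting treats as negligible.

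The repair is the rescaling the paper alludes to when it says the precise form of the $r$-integral ``allows a rescaling of the variable, so the function $H_{\nu}(r)$ plays a similar role like a homogeneous function of degree $\nu$'': substitute $r=N^{n(n-1)/(2\mu)}\rho$ \emph{jointly} with the Laplace change of variables, so that $H_\nu(r\Delta(t))$ becomes the $N$-independent factor $H_\nu(\rho\Delta(w))$ while $c_\nu r^{-\nu-1}dr=c_\nu N^{-n_\nu/\mu}\rho^{-\nu-1}d\rho$ produces precisely the missing power $N^{-n_\nu/\mu}$. One then applies the estimates behind Theorem \ref{mainintegral1} and Remark \ref{remarkmu}, with $h$ replaced by $H_\nu(\rho\,\cdot\,)$, uniformly (or with an integrable majorant) in $\rho$, uses condition (vi) to deform $L_1\cup L_2$ to the real line so that $\Delta(w)$ is real in \eqref{monomialintegral1p}, and only at that stage invokes Fubini together with $c_\nu\int_0^\infty\rho^{-\nu-1}H_\nu(\rho\Delta(w))\,d\rho=|\Delta(w)|^\nu$ to obtain the stated $A_0$. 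Your computation lands on the correct coefficient only because the (incorrect) fixed-$r$ expression happens to integrate formally to the right answer; as written, the proof fails before that point.
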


\begin{coro}\label{twosaddle} Under the foregoing assumptions (i)-(v) of Section \ref{section2saddle} and (vi) above,  let
 \be\nonumber I_{N,n}=\int_{(\!a,b)^{n}}
  \exp\{-N\sum\limits_{j=1}^{n}p(t_{j})\}|\Delta(t)|^\nu\, q(t)\,d^{n} t
        \ee
        where $p(x)$ admits two simple saddle points $x_+, x_-$, and ${\rm{Re}}\{x_+-x_-\}\geq 0$.  Moreover, let $p_\pm=p''(x_\pm)$ and $\Gamma_{\nu,m}$ be given in \eqref{constgamma}.  If ${\rm{Re}}\{{p(x_+)}\}={\rm{Re}}\{{p(x_-)}\}$, then  as $N\to\infty$,
$$I_{N,2m}\sim\tbinom{2m}{m}(\Gamma_{\nu,m})^2\frac{(x_+-x_-)^{\nu m^2}}{(\sqrt{p_+p_-})^{m+\nu m(m-1)/2}}\frac{e^{-mN(p(x_+)+p(x_-))}}{N^{m+\nu m(m-1)/2}}q(x_+^{m},x_-^{m})$$
while
\begin{multline*} I_{N,2m-1}\sim \tbinom{2m-1}{m}\Gamma_{\nu,m-1}\Gamma_{\nu,m}
\frac{(x_+-x_-)^{\nu m(m-1)}}{(\sqrt{p_+p_-})^{m+\nu m(m-1)/2}}\frac{e^{-m N(p(x_+)+p(x_-))}}{N^{(2m-1 + \nu (m-1)^{2})/2}}\\
\times \Big(e^{Np(x_+)}(\sqrt{p_+})^{1+\nu(m-1)}q(x_+^{m-1},x_-^{m})+e^{Np(x_-)}(\sqrt{p_-})^{1+\nu(m-1)}q(x_+^{m},x_-^{m-1})\Big).  \end{multline*}
 \end{coro}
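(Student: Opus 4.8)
The plan is to deduce the corollary from Theorem~\ref{teo2saddle} by first disposing of the absolute value. Since $\nu$ need not be even, $|\Delta(t)|^\nu$ is not of the homogeneous analytic form $h(\Delta(t))$ required by Theorem~\ref{teo2saddle}, so I would insert the integral representation \eqref{integralrepofabsolutevaluepowerfunction1}, write $|\Delta(t)|^\nu = c_\nu\int_0^\infty r^{-\nu-1}H_\nu(r\Delta(t))\,dr$, and interchange the order of integration by Fubini. For each fixed $r$ the inner $n$-fold integral has an analytic integrand to which Theorem~\ref{teo2saddle} applies, and the precise scaling structure of the $r$-integral lets $H_\nu(r\,\cdot\,)$ play the role of a degree-$\nu$ homogeneous weight; under condition (vi) the line segments $L_{i,\pm}$ may be deformed to the real axis, after which integrating back over $r$ reconstructs $|\Delta|^\nu$. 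The net effect is that Theorem~\ref{teo2saddle} holds verbatim with $h(\Delta(w))$ replaced by $|\Delta(w)|^\nu$ in \eqref{monomialintegral2} and the $L_{i,\pm}$ taken along $\mathbb{R}$.

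Next I would isolate the dominant terms in the sum \eqref{mainintegral2} over $l=0,\dots,n$. Each summand carries the exponential $\exp\{-N(l\,p(x_+)+(n-l)p(x_-))\}$, whose modulus is $\exp\{-N(l\,\mathrm{Re}\,p(x_+)+(n-l)\,\mathrm{Re}\,p(x_-))\}$; the hypothesis $\mathrm{Re}\{p(x_+)\}=\mathrm{Re}\{p(x_-)\}$ makes this modulus equal to $\exp\{-nN\,\mathrm{Re}\,p(x_+)\}$ for every $l$, so the terms differ only through their phases and their powers of $N$. The ranking is therefore governed entirely by the exponent $(l_\nu+(n-l)_\nu+n)/2$, and by the identity $l_\nu+(n-l)_\nu=\nu(l-\tfrac n2)^2+\nu n(n-2)/4$ recorded after Theorem~\ref{teo2saddle} this is minimized uniquely at $l=m$ when $n=2m$, and at the two values $l=m,\,m-1$ when $n=2m-1$. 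All other summands are smaller by at least a factor $N^{-\nu/2}$ and may be discarded; for $n=2m$ a single term survives, while for $n=2m-1$ two equally large (but generally out-of-phase) terms survive.

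It then remains to evaluate the surviving coefficients $B_{l,+}$ from \eqref{monomialintegral2}. The key observation is that the broken Vandermonde $\Delta_{j,+}(w)=(x_+-x_-)^{j(n-j)}\prod_{p<q\le j}(w_p-w_q)\prod_{j<p<q\le n}(w_p-w_q)$ factorizes, so that $|\Delta_{j,+}(w)|^\nu$ splits as $(x_+-x_-)^{\nu j(n-j)}$ times the two intra-cluster factors; consequently the Gaussian integral \eqref{monomialintegral2} factors as a product of two independent integrals over $\mathbb{R}^{j}$ and $\mathbb{R}^{n-j}$. Each of these is a Gaussian (Mehta) Selberg integral: the rescaling $w\mapsto w/\sqrt{p_\pm}$ produces a Jacobian $(\sqrt{p_\pm})^{-j}$ together with a factor $(\sqrt{p_\pm})^{-\nu j(j-1)/2}$ from $|\Delta|^\nu$, leaving exactly the unit-variance integral that equals $\Gamma_{\nu,j}$ of \eqref{constgamma}. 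Thus $B_{j,+}=(x_+-x_-)^{\nu j(n-j)}(\sqrt{p_+})^{-j-\nu j(j-1)/2}(\sqrt{p_-})^{-(n-j)-\nu(n-j)(n-j-1)/2}\,\Gamma_{\nu,j}\,\Gamma_{\nu,n-j}$.

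Finally I would substitute these values back. For $n=2m$, the term $l=m$ supplies $q_{m,+}=q(x_+^m,x_-^m)$, the cross factor $(x_+-x_-)^{\nu m^2}$, the symmetric power $(\sqrt{p_+p_-})^{-m-\nu m(m-1)/2}$, the binomial $\binom{2m}{m}$, and the power $N^{-(m+\nu m(m-1)/2)}$, reproducing the first claimed asymptotics. For $n=2m-1$, the two terms $l=m$ and $l=m-1$ share the common factor $\binom{2m-1}{m}\Gamma_{\nu,m-1}\Gamma_{\nu,m}(x_+-x_-)^{\nu m(m-1)}(\sqrt{p_+p_-})^{-(m+\nu m(m-1)/2)}e^{-mN(p(x_+)+p(x_-))}N^{-(2m-1+\nu(m-1)^2)/2}$, and a short bookkeeping of the residual powers of $\sqrt{p_\pm}$ together with the residual exponentials $e^{Np(x_\pm)}$ and the values $q_{m,+}=q(x_+^m,x_-^{m-1})$, $q_{m-1,+}=q(x_+^{m-1},x_-^m)$ yields the bracketed sum. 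The main obstacle is not the algebra but the careful control of branches and deformations: one must verify that condition (vi) indeed lets every $L_{i,\pm}$ be straightened onto $\mathbb{R}$, and that with the branch conventions and the hypothesis $\mathrm{Re}\{x_+-x_-\}\ge 0$ the cross factor emerges as $(x_+-x_-)^{\nu j(n-j)}$ with the intended branch, rather than $|x_+-x_-|^{\nu j(n-j)}$; the Fubini interchange and the negligibility of the discarded summands then complete the argument.
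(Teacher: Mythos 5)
Your proposal is correct and follows essentially the same route the paper intends: Section~\ref{sectionabsval} explicitly states that Corollary~\ref{twosaddle} follows from Theorem~\ref{teo2saddle} combined with the integral representation \eqref{integralrepofabsolutevaluepowerfunction1} and condition (vi), and your selection of the dominant terms $l=m$ (resp.\ $l=m,m-1$) via the identity $l_\nu+(n-l)_\nu=\nu(l-\tfrac n2)^2+\nu n(n-2)/4$ is exactly the observation recorded after that theorem. Your evaluation of $B_{j,+}$ by factoring the broken Vandermonde and reducing to the Gaussian Selberg integral \eqref{gaussintgen}, yielding $\Gamma_{\nu,j}\Gamma_{\nu,n-j}$ with the cross factor $(x_+-x_-)^{\nu j(n-j)}$ and the stated powers of $\sqrt{p_\pm}$, reproduces the paper's coefficients correctly.
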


We conclude this section by applying the two last corollaries to  the study of the asymptotic behavior of the generalized Airy function \cite{des}:
\be \mathrm{Ai}^{(\alpha)}(s)=\frac{1}{(2\pi)^n}\int_{\mathbb{R}^n}e^{ip_3(t)/3}|\Delta(t)|^{2/\alpha}{\phantom{j}}_0\mathcal{F}_0^{(\alpha)}(s;i t) \,d t\ee
where it is assumed that the variables $s=(s_1,\ldots,s_n)\in \mathbb{R}^n$.  Note that the generalized  Airy function  obviously reduces to the classical Airy function when $n=1$.     Moreover, for $\alpha=2$, the above $n$-dimensional integral is proportional to Kontsevich's matrix Airy function $A(S)$, where $S$ denotes a $n\times n$ hermitian matrix with eigenvalues $s_1,\ldots,s_n$ (see Section 4 in \cite{kon}).  It is also worth noting that in the case where $\alpha=2$ and $s_1=\cdots=s_n=u$, then the above integral representation can be reduced to a very simple determinantal  formula (see for insante Section 4 in \cite{df} or Ref.\ 8 therein):
\be \mathrm{Ai}^{(2)}(u,\ldots,u)=(-1)^{n(n-1)/2}n!\det\left[\frac{d^{i+j-2}}{du^{i+j-2}}\mathrm{Ai}(u)\right]_{i,j=1}^n\,. \ee

Before displaying the asymptotics of  the generalized Airy function, we recall the following shorthand notation:
 when $A, B\in \mathbb{R}$,    $(A+Bs)$ stands for $(A+Bs_1,\ldots, A+B s_n)$.

\begin{proposition}\label{propAiry}Let $x$ be a real positive variable.  Then as $x\to \infty$,
\begin{equation}
\mathrm{Ai}^{(\alpha)}(x+x^{-1/2}s)\sim\frac{\Gamma_{2/\alpha,n}}{(2\pi)^{n}2^{(n+n(n-1)/\alpha)/2}}
\frac{e^{-\frac{2n}{3}x^{3/2}}e^{-p_1(s)} }{x^{(n+n(n-1)/\alpha)/4}},  \label{airy1}\end{equation}
and for $n=2m$
\begin{equation}
\mathrm{Ai}^{(\alpha)}(-x+x^{-1/2}s)\sim\tbinom{2m}{m}\frac{(\Gamma_{2/\alpha,m})^2}{(2\pi)^{n}}
(2\sqrt{x})^{-m+m(m+1)/\alpha} \,e^{-ip_1(s)} \!\!{\phantom{j}}_1F_1^{(\alpha)}(m/\alpha;n/\alpha;2is)\label{airy2}.  \end{equation}
\end{proposition}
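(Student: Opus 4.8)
The plan is to reduce the asymptotics of $\mathrm{Ai}^{(\alpha)}$ to the Selberg-type steepest-descent analysis of Section \ref{laplace}. First I would strip off the translation by $\pm x$ using the first translation formula of Proposition \ref{proppractical}. Since $\pm x + x^{-1/2}s$ modifies only the constant part of the first argument of ${}_0\mathcal{F}_0^{(\alpha)}$, formula \eqref{vip1} (applied with the constant shift $\pm x$ in the first slot and no shift in the second, recalling $p_1(it)=ip_1(t)$) gives ${}_0\mathcal{F}_0^{(\alpha)}(\pm x + x^{-1/2}s; it) = e^{\pm i x\,p_1(t)}\,{}_0\mathcal{F}_0^{(\alpha)}(x^{-1/2}s;it)$. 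Substituting into the defining integral and rescaling $t=\sqrt{x}\,u$, the homogeneity $P_\kappa(cx)=c^{|\kappa|}P_\kappa(x)$ of the Jack polynomials lets me move the factor $x^{-1/2}$ across the two slots, so that ${}_0\mathcal{F}_0^{(\alpha)}(x^{-1/2}s; i\sqrt{x}\,u)={}_0\mathcal{F}_0^{(\alpha)}(s;iu)$ is free of $x$. What remains is
\[
\mathrm{Ai}^{(\alpha)}(\pm x + x^{-1/2}s) = \frac{x^{\,n/2 + n(n-1)/(2\alpha)}}{(2\pi)^n}\int_{\mathbb{R}^n} e^{-N\sum_{j=1}^n p_\pm(u_j)}\,|\Delta(u)|^{2/\alpha}\,{}_0\mathcal{F}_0^{(\alpha)}(s;iu)\,d^nu,
\]
with $N=x^{3/2}$ and $p_\pm(u)=-i(u^3/3\pm u)$. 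This is exactly of the Selberg type \eqref{pqintegralabs} with $\nu=2/\alpha$, entire amplitude $q(u)={}_0\mathcal{F}_0^{(\alpha)}(s;iu)$ ($s$ a fixed parameter), and a cubic phase.

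Next I would locate the saddle points via $p_\pm'(u)=-i(u^2\pm1)=0$. For the $+$ sign these sit at $u=\pm i$; only $u_0=i$ lies on a steepest-descent deformation of $\mathbb{R}$ that decays, and there $p_+(i)=2/3$, a simple saddle ($\mu=2$). I would then invoke Corollary \ref{onesaddle}: the leading term is the amplitude evaluated at the saddle, ${}_0\mathcal{F}_0^{(\alpha)}(s;(-1)^n)$, times the model Gaussian--Vandermonde integral. The constant-vector reduction ${}_0\mathcal{F}_0^{(\alpha)}(s;(-1)^n)=e^{-p_1(s)}$, itself a special case of \eqref{vip1}, supplies the factor $e^{-p_1(s)}$ in \eqref{airy1}; the value $p_+(i)=2/3$ supplies $e^{-\frac{2n}{3}x^{3/2}}$ through $e^{-nNp(x_0)}$; and since $p_+''(i)/2!=1$, the model integral $\int_{\mathbb{R}^n}e^{-\sum w_j^2}|\Delta(w)|^{2/\alpha}d^nw$ is a Gaussian Mehta--Selberg integral producing $2^{-(n+n(n-1)/\alpha)/2}\Gamma_{2/\alpha,n}$. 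Collecting the rescaling prefactor $x^{\,n/2+n(n-1)/(2\alpha)}$ together with $N^{-(n_\nu+n)/2}$ reproduces the power $x^{-(n+n(n-1)/\alpha)/4}$, giving \eqref{airy1}.

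For the $-$ sign the saddles $u=\pm1$ are real and simple, with $\mathrm{Re}\,p_-(\pm1)=0$ and $p_-(1)=-p_-(-1)$, so I would apply the even case ($n=2m$) of Corollary \ref{twosaddle}. The remark following Theorem \ref{teo2saddle} shows that the balanced split $l=m$ dominates, contributing $\binom{2m}{m}(\Gamma_{2/\alpha,m})^2$. Here $p_\pm''(\pm1)=\mp2i$, so $\sqrt{p_+p_-}=2$ and $e^{-mN(p_-(1)+p_-(-1))}=1$, i.e.\ no exponential factor, consistent with the purely oscillatory \eqref{airy2}. The amplitude at the balanced configuration is ${}_0\mathcal{F}_0^{(\alpha)}(s;i^m,(-i)^m)$, which Corollary \ref{practical} (with $a=i$, $b=-i$, $k=m$) evaluates to $e^{-ip_1(s)}\,{}_1F_1^{(\alpha)}(m/\alpha;n/\alpha;2is)$. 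Tracking the factors $(x_+-x_-)^{\nu m^2}=2^{2m^2/\alpha}$, $N^{-(m+\nu m(m-1)/2)}$ and the rescaling prefactor then assembles $(2\sqrt{x})^{-m+m(m+1)/\alpha}$ and yields \eqref{airy2}.

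The main obstacle is the rigorous contour work rather than the bookkeeping of constants. Because the integrand on $\mathbb{R}^n$ is only conditionally (oscillatory) convergent, I must justify deforming the real line into the steepest-descent sectors through the complex saddle $u=i$ (for the $+$ case) while controlling the tails at infinity, and I must verify condition (vi) of Section \ref{sectionabsval}: after rotating the local steepest-descent segments to the real axis, the Vandermonde argument $\Delta(w)$ inside $H_\nu(r\,\cdot)$ must become real-valued, so that $|\Delta|^{2/\alpha}$ can be reconstructed through the representation \eqref{integralrepofabsolutevaluepowerfunction1}. A secondary care point is the branch and phase bookkeeping fixed in Section \ref{Laplacemethod} when forming $p_0^{1/\mu}$ and $\sqrt{p_\pm}$ at the complex saddles, and confirming that the discarded saddle $u=-i$ yields only an exponentially subdominant (in fact excluded) contribution. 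Once these are settled, Corollaries \ref{onesaddle} and \ref{twosaddle} deliver the stated asymptotics directly.
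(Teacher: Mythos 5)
Your proposal is correct and takes essentially the same route as the paper's own proof: after stripping the shift via Proposition \ref{proppractical} and rescaling with $N=x^{3/2}$, the paper likewise applies Corollary \ref{onesaddle} at the complex saddle $x_0=i$ (with $p(i)=2/3$, $p_0=1$, $q(i,\ldots,i)=e^{-p_1(s)}$) for \eqref{airy1}, and Corollary \ref{twosaddle} at $x_\pm=\pm1$ (with $p(x_\pm)=\pm 2i/3$, $p_\pm=\mp 2i$, the $e^{\pm i\pi/4}$ rotations verifying condition (vi), and Corollary \ref{practical} converting ${\phantom{j}}_0\mathcal{F}_0^{(\alpha)}(i\pi\text{-type arguments at }1^m,(-1)^m)$ into $e^{-ip_1(s)}{\phantom{j}}_1F_1^{(\alpha)}$) for \eqref{airy2}. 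Your bookkeeping of the constants, including $(x_+-x_-)^{\nu m^2}/(\sqrt{p_+p_-})^{m+\nu m(m-1)/2}=2^{-m+m(m+1)/\alpha}$ and the Mehta--Selberg value $2^{-(n+n(n-1)/\alpha)/2}\Gamma_{2/\alpha,n}$, reproduces the stated asymptotics exactly.
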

\begin{proof} \eqref{airy1} and \eqref{airy2} originate from integrals evaluated around  one  simple saddle point and two simple saddle points, respectively.  For \eqref{airy1},  set $N=x^{3/2}$.  Simple manipulations and the use of  Proposition \ref{proppractical} lead to
\be \mathrm{Ai}^{(\alpha)}(N^{2/3}+N^{-1/3}s)=\frac{N^{(n+n(n-1)/\alpha)/3}}{(2\pi)^n}\int_{\mathbb{R}^n}e^{N(ip_3(t)/3+ip_1(
t))} |\Delta(t)|^{2/\alpha} \!\!{\phantom{j}}_0\mathcal{F}_0^{(\alpha)}(s;it) dt.\nonumber \ee
We thus have an integral like in Corollary \ref{onesaddle} with $p(t_j)=-it_j^3/3-it_j$, $q(t)=\!\!\!{\phantom{j}}_0\mathcal{F}_0^{(\alpha)}(s;it)$, and $g(t)=1$.  The function $p$ has two simple saddle points at $\pm i$.  With $x_0=i$, we have $p_0=p''(x_0)/2=1$ which implies that the steepest descent path near $x_0$ would follow the horizonal  line, as desired.  We may thus apply  Corollary \ref{onesaddle} to the case $\mu=2$, $x_0=i$, $p_0=1$, $p(x_0)=2/3$, $q(x_0,\ldots,x_0)=e^{-p_1(s)} $  and \eqref{airy1} follows immediately.

For \eqref{airy2}, we also let $N={x}^{3/2}$.  In the definition of $\mathrm{Ai}^{(\alpha)}(s)$, substitute $t_j$ by  $N^{1/3}t_j$ and apply Proposition \ref{proppractical}, which yields
\be \mathrm{Ai}^{(\alpha)}(-N^{2/3}+N^{-1/3}s)=\frac{N^{(n+n(n-1)/\alpha)/3}}{(2\pi)^n}\int_{\mathbb{R}^n}\prod_{j=1}^n e^{-N\, p(t_j)}|\Delta(t)|^{2/\alpha}
 \!\!{\phantom{j}}_0\mathcal{F}_0^{(\alpha)}(is;t) dt
\nonumber\ee
where
$p(t_j)=i\left(-t_j^3/3+t_j\right)$.
This function has 2 simple saddle points, namely $x_\pm=\pm 1$. This time we have to consider both of them because they are already on the path of integration.  We have $p(x_\pm)=\pm 2i/3$, $p_\pm=p''(x_\pm)=\mp 2i$.  This means that the steepest descent path is  given by
\be \mathscr{P}=\left\{-1+\tau e^{-i\pi/4}\,:\,\tau \in(-\infty,\sqrt{2}]\right\}\cup \left\{1+\tau e^{i\pi/4}\,:\, \tau\in[-\sqrt{2},\infty)\right\}.\nonumber\ee
By making the change of variables $w_j\mapsto e^{\pm i\pi/4}v_j$  (for saddle points $x_\pm$) in \eqref{monomialintegral2}, we see that the variables $v_j$ follow the real line,  so the assumption (vi) is fulfilled and \eqref{airy2} follows from the previous corollary.
\end{proof}

\begin{remark}\label{airytosine} The comparison of \eqref{eqlimitingbulk} in Theorem \ref{theobulk} and \eqref{airy2} in Proposition \ref{propAiry} shows that the bulk limit can be treated as a rescaled limit of the generalized Airy function. This relation is analogous to that between bulk and soft-edge limiting point processes for $\beta$-ensembles (see e.g.,  Corollary 3 of \cite{vv}).
\end{remark}

\section{Scaling limits}
\label{rmt}
In this section, we prove the theorems and corollaries given in the introduction.  Most of the proofs rely on Corollaries \ref{onesaddle} and \ref{twosaddle} or similar results for integrals on the torus $\mathbb{T}^{n}$, more precisely for \eqref{circleintabs}.  All the integrals considered here fulfill the assumptions (i) to (vi) given in the last section.  Note that when we talk about deforming the contours of integration, we implicitly suppose that either the power $\nu$ of the absolute value of Vandermonde determinant is even and the variables are real, or the integral representation \eqref{integralrepofabsolutevaluepowerfunction1} is being used.  Also in this section is the assumption that a deformation of contours is made as long as the integrand is analytic and absolutely integrable over the concerned region of the complex plane.

We will frequently use the symbol $\mathscr{L}_{a}^b$ to denote a straight line path from $a$ to $b$.  Additionally, $\mathscr{M}_a^b$ will denote a semi-circular path in the positive direction, starting at $a$ and ending at $b$ with radius $|a-b|/2$.  Dyson index $\beta$ and its duality   $\beta':=4/\beta$  will be used alternately.

Before starting our computation we first review   the integral representation of  hypergeometric functions $\!\!\!{\phantom{j}}_2 F_1^{(\alpha)}(a,b;c;s)$ and $\!\!\!{\phantom{j}}_1 F_1^{(\alpha)}(a;c;s)$, due to Yan \cite{yan} and Forrester \cite{forrester0}, especially \cite{forrester} for more details.  For $\alpha>0$, $\textrm{Re}\{\nu_1\}>-1$ and $ \textrm{Re}\{\nu_2\}>-1$,
\begin{equation} \label{yan21}\!\!\!{\phantom{j}}_2F_{1}^{(\alpha)}(a,b;c;s_1,\ldots,s_n)=\frac{1}{S_n(\nu_1,\nu_2,1/\alpha)}
\int_{[0,1]^{n}} {\!\!\!{\phantom{j}}_1\mathcal{F}_{0}^{(\alpha)}(a;s;t)}\, D_{\nu_1,\nu_2,1/\alpha}(t)\,d^{n}t,
\end{equation}
where $\nu_1=b- (n-1)/\alpha -1$, $\nu_2=c-b- (n-1)/\alpha-1$  and
\begin{equation}D_{\nu_1,\nu_2,1/\alpha}(t)=\prod_{i=1}^nt_i^{\nu_1}(1-t_i)^{\nu_2} \prod_{1\leq i < j \leq n}|t_i-t_j|^{2/\alpha}.
\end{equation}

If $\textrm{Re}\{\nu_1\}>-1$, the right-hand integral of \eqref{yan21} can be analytically continued so that it is valid for  $\textrm{Re}\{\nu_2\}\leq -1$ but  by replacing the interval $[0,1]$ with the counter-clockwise circle $\mathbb{T}$, especially in the case of interest ($a=-N$) \cite{forrester0,bf}:
\begin{align} \label{forrester21}\!\!\!{\phantom{j}}_2F_{1}^{(\alpha)}(a,b;c;s)&=\frac{e^{i\pi (b-c)n}}{M_n(b-c,c+1+(n-1)/\alpha,1/\alpha)}\nonumber\\
&\times \frac{1}{(2\pi i)^{n}}
\int_{\mathbb{T}^{n}} {\!\!\!{\phantom{j}}_1\mathcal{F}_{0}^{(\alpha)}(a;s;1-t)}\, D_{\nu_2+(n-1)/\alpha,\nu_1,1/\alpha}(t)\,d^{n}t.
\end{align}
Note that the constant $M_n(a,b,\alpha) $ is given in the appendix.  Likewise, for $\textrm{Re}\{c-a\}>-1$, we have
\begin{align} \label{forrester11}\!\!\!{\phantom{j}}_1F_{1}^{(\alpha)}(a;c+1+(n-1)/\alpha;s)&=\frac{e^{-i\pi n a}}{M_n(-a,c,1/\alpha)} \frac{1}{(2\pi i)^{n}}
\int_{\mathbb{T}^{n}} \prod_{j=1}^n t_j^{a-1}(1-t_j)^{c-a} |\Delta(t)|^{2/\alpha} {\!\!\!{\phantom{j}}_0\mathcal{F}_{0}^{(\alpha)}(s;t)}\,d^{n}t.
\end{align}

\subsection{Hermite $\beta$-ensemble}

 It follows from the duality relation in Proposition 7 \cite{des} that
\begin{align*} K_N(s_1,\ldots,s_n):&=\Big\langle \prod_{i=1}^N\prod_{j=1}^n(x_i-s_j)\Big\rangle_{\HBE}\\
&=(-i)^{nN}2^{\beta'\! n(n-1)/4+n/2}(\Gamma_{\beta'\!,n})^{-1}e^{p_2(s)}\int_{\mathbb{R}^n}\prod_{j=1}^n t_j^N e^{-t_j^2}|\Delta(t)|^{\beta'\!}     {\!\!\!{\phantom{j}}_{0}\mathcal{F}_0^{(2/\beta'\!)}}(-2is;t)d^{n}t.
\end{align*}
Set $$s_{j}\mapsto \sqrt{2N}(u+\frac{s_j}{\rho N}) \qquad \mbox{and} \qquad t_{j}\mapsto \sqrt{2N}t_j.$$
By using Proposition \ref{proppractical}, the weighted quantity
$$\varphi_{N}(s_1,\ldots,s_n):=e^{-\frac{1}{2}p_2(s)}K_N(s_1,\ldots,s_n)$$
  becomes  \be \varphi_{N-l}\big(\sqrt{2N}(u+\frac{s}{\rho N})\big)=(-i\sqrt{2N})^{n(N-l)}(2\sqrt{N})^{\beta^{'}\!n(n-1)/2+n}(\Gamma_{\beta^{'}\!,n})^{-1}e^{nN u^{2}+p_2(s)/(\rho^{2}N)}I_{N,n},\label{chHE1}\ee
where \begin{align} I_{N,n}&=\int_{\mathbb{R}^n} \exp\{-N\sum\limits_{j=1}^n p(t_j)\}|\Delta(t)|^{\beta'\!}  q(t) d^{n}t\nonumber\\
&=
c_{\beta'\!} \int_{0}^{\infty}r^{-\beta'\! -1}\bigg(\int_{\mathbb{R}^n}
  \exp\{-N\sum\limits_{j=1}^{n}p(t_{j})\}\, q(t)\,H_{\beta'\!} (r\Delta(t))\, d^{n} t\bigg) d r,\label{chHE2}\end{align}
  and
\be p(x)=2x^{2}+4iu x-\ln x, \qquad q(t)=\prod_{j=1}^n t_j^{-l} {\!\!{\phantom{j}}_{0}\mathcal{F}_0^{(2/\beta'\!)}}(-4is/\rho;t+iu/2)\label{chHE3}.\ee
Here $l$ is a fixed integer, and it is assumed that  $l=0$ for $n=2m$ while $l=0,1$ for  $n=2m-1$.

Since $p'(x)=4x+4iu -\frac{1}{x}$, there are two simple saddle points in the bulk: $x_\pm=\frac{-iu\pm\sqrt{1-u^{2}}}{2}$  where $u\in (-1,1)$.  At the rightmost soft edge (which corresponds to $u= 1$), these two points coincide and give a single saddle point $x_0=-\frac{i}{2}$ of multiplicity 2.

In the following lines, we proceed to compute the leading asymptotic terms of the weighted expectation $\varphi_{N}(s_1,\ldots,s_n)$.  We start with the bulk of the spectrum and follow  the soft edge.

\begin{proof}[Proof of Theorem \ref{theobulk}: $\HBE$ case] Set $u=\sin\theta,\ \theta\in(-\frac{\pi}{2},\frac{\pi}{2})$, so that
$x_+=\tfrac{1}{2}e^{-i\theta}$, $x_-=\tfrac{1}{2}e^{i(\theta+\pi)}$ and
$$p(x_+)=-\frac{1}{2}\cos2\theta+(1+\ln2)+i(\theta+\frac{1}{2}\sin2\theta),\quad  p(x_-)=-\frac{1}{2}\cos2\theta+(1+\ln2)-i(\theta+\frac{1}{2}\sin2\theta+\pi).$$
It follows from $$p_\pm:=p''(x_\pm)=8e^{\pm i\theta}\cos\theta$$
that the angles of steepest descent are $-\theta/2$ at $x_+$ and  $\theta/2$ at $x_-$. Note that if $x=-\frac{i}{2}u+v,\ v\in \mathbb{R}$, then the function
$$\textrm{Re}\{p(x)\}=2v^{2}-\frac{1}{2}\ln(v^{2}+\frac{1}{4}u^{2})+\frac{3}{2}u^{2}$$
attains its minimum value at the point $v=\pm \frac{1}{2}\sqrt{1-u^{2}}$. Therefore, as a possible path, we consider the straight line passing through $-\frac{i}{2}u$ and parallel to the real axis (together with irrelevant deformations at $\pm \infty$).

   The   segments of integration $(-\infty e^{-i\frac{\theta}{2}},  \infty e^{-i\frac{\theta}{2}})$ and $(-\infty e^{i\frac{\theta}{2}},  \infty e^{i\frac{\theta}{2}})$  can be deformed into the real axis  near the saddle points.  Since  $x_+ -x_->0$,   the assumption (vi) is fulfilled. Take $\rho=\frac{2}{\pi}\sqrt{1-u^{2}}$.  According to   Corollary \ref{twosaddle}, for $n=2m$ ($l=0$),
   \begin{align} I_{N,2m}&\sim (8N)^{-\beta'\! m(m-1)/2-m}\exp\{-nNu^{2} -nN(1+2\ln 2-i\pi)/2\}\, (\frac{\pi \rho}{2})^{\beta'\! m(m+1)/2-m} \nonumber\\
&\times \tbinom{2m}{m} (\Gamma_{\beta'\!,m})^{2} \,{\!\!\!{\phantom{j}}_{0}\mathcal{F}_0^{(2/\beta'\!)}}( i\pi s;1^{m},(-1)^{m})
,\end{align}
while for $n=2m-1$,
\begin{align} &I_{N,2m-1}\sim (8N)^{-\beta'\! (m-1)^{2}/2-n/2}\exp\{-nNu^{2} -nN(1+2\ln 2-i\pi)/2\}\, (\frac{\pi \rho}{2})^{\beta'\! (m^{2}-1)/2-n/2} \nonumber\\
&\times \tbinom{2m-1}{m} \Gamma_{\beta'\!,m-1} \Gamma_{\beta'\!,m}\, (-2i)^{(2m-1)l}
\Big(
e^{i\theta_{N}-il(\theta+\frac{\pi}{2})} E_{m-1}^{(2/\beta'\!)}( i\pi s)+e^{-i\theta_{N}+il(\theta+\frac{\pi}{2})} E_{m-1}^{(2/\beta'\!)}( -i\pi s)
\Big),\end{align}
where  $$\theta_{N}=N(2\theta+\sin2\theta+\pi)/2+\theta(1+(m-1)\beta')/2, \qquad \theta=\arcsin u.$$ Hence,
for $n=2m$,
   \begin{align} \varphi_{N}\big(\sqrt{2N}(u+\frac{s}{\rho N})\big)&\sim \Psi_{\!^{N,2m}} \gamma_{m}(\beta'\!)\ {\!\!\!{\phantom{j}}_{0}\mathcal{F}_0^{(2/\beta'\!)}}( i\pi s;1^{(m)},(-1)^{(m)}) \label{evenbulkHE}
,\end{align}
where
\be\Psi_{\!^{N,2m}}=(\pi \rho)^{\beta'\! m(m+1)/2-m} N^{\beta'\! m^{2}/2}\exp\{ -m N(1+\ln 2-\ln N)\} \nonumber\ee
and \be \gamma_{m}(\beta'\!):=\binom{2m}{m}\prod_{j=1}^m\frac{\Gamma(1+\beta'\!j/2)}{\Gamma(1+\beta'\!(m+j)/2)}.\ee
For $n=2m-1$,
\begin{align} \varphi_{N-l}\big(\sqrt{2N}(u+\frac{s}{\rho N})\big)&\sim \Psi^{\!_{(l)}}_{\!^{N,n}}  \frac{1}{2i\sqrt{\cos\theta}}\Big(
e^{i\theta_{N}-il(\theta+\frac{\pi}{2})} E_{m-1}^{(2/\beta'\!)}( i\pi s)+e^{-i\theta_{N}+il(\theta+\frac{\pi}{2})} E_{m-1}^{(2/\beta'\!)}( -i\pi s)
\Big) \label{oddbulkHE}
,\end{align}
where
\begin{multline*}\Psi^{\!_{(l)}}_{\!^{N,2m-1}}=\tbinom{2m-1}{m} \Gamma_{\beta'\!,m-1} \Gamma_{\beta'\!,m}(\Gamma_{\beta'\!,2m-1})^{-1}(\pi \rho)^{\beta'\!(m^{2}-1)/2-(2m-1)/2} N^{\beta'\! m(m-1)/2} \\ \times   \exp\{ -(2m-1) N(1+\ln 2-\ln N)/2\}\, (\sqrt{N/2})^{-(2m-1)l} (2i\sqrt{\pi \rho/2}). \end{multline*}\end{proof}

\begin{proof}[Proof of Theorem \ref{theosoft}: H$\beta$E case] In Eqs.\eqref{chHE1}--\eqref{chHE3}, let $u=1$, $l=0$ and $\rho=2N^{-1/3}$.
Then,  \be p(x)=2x^{2}+4i x-\ln x, \qquad q(t)= {\!\!\!{\phantom{j}}_{0}\mathcal{F}_0^{(2/\beta'\!)}}\!(-2iN^{1/3}s ;t+i/2)\nonumber.\ee
At the double saddle point $x_0=-i/2$, we have
$$p(x_0)=3/2+\ln2+i \pi/2,\qquad p'''(x_0)=16i.$$ The angle of steepest descent   are thus $-5\pi/6$ and $-\pi/6$.  However, we see that $ \textrm{Re}\{p(x+x_0)-p(x_0)\}>0$ for all $x\in\mathbb{R}$ except at the origin.  We thus choose $\mathscr{P}$ to follow the real line from $-\infty$ to $\infty$.  Corollary \ref{onesaddle} then implies
 \begin{align} I_{N,n}&\sim (8N)^{-\beta'\! n(n-1)/6-n/3}\exp\{ -nN(3+2\ln 2+i\pi)/2\}\,(2\pi)^{n} \textrm{Ai}^{(2/\beta'\!)}(s).\end{align}
Thus,
  \begin{align} \varphi_{N}\big(\sqrt{2N}(1+\frac{s}{\rho N})\big)&\sim \Phi_{\!^{N,n}}\, (2\pi)^{n}(\Gamma_{\beta'\!,n})^{-1} \textrm{Ai}^{(2/\beta'\!)}(s) ,\label{softHE}\end{align}
where
\be\Phi_{\!^{N,n}}= N^{\beta'\! n(n-1)/12+n/6}\exp\{ -n N(1+\ln 2-\ln N-2i\pi)/2\}. \nonumber\ee\end{proof}

We now turn our attention to the eigenvalue correlation functions (marginal densities). For $\beta$ even,  the scaling limits of the correlation functions for the  H$\beta$E immediately follow  from  \eqref{evenbulkHE} and \eqref{softHE}. Let $n=2m=k\beta$, we have the following relation
\be\label{eqcorr} R_{k,N}(x_1,\ldots,x_k)=\frac{(k+N)!}{N!}\frac{G_{\beta,N}}{G_{\beta,k+N}}   \prod_{1\leq j<l\leq
k}(x_{j}-x_{l})^{\beta}\,\big[\varphi_N(s_1,\ldots,s_n)\big]_{\{s\}\mapsto \{x\}}.
\ee
One easily shows that as $N\to\infty$,
\be\label{eqasympconst}\frac{G_{\beta\!,N}}{G_{\beta\!,k+N}}\sim (2\pi\!)^{-k/2}2^{\beta k(k+1)/4}\beta^{-\beta k/2} (\Gamma(1+\beta/2))^{k}(2e)^{\beta kN/2}N^{-\beta kN/2-\beta k(k+1)/4-k/2}.\ee

\begin{proof}[Proof of Corollaries \ref{theobulkCF} and \ref{theosoftCF}: H$\beta$E case]
First consider the bulk scaling  and let $\rho=\frac{2}{\pi}\sqrt{1-u^{2}}$.  The use of \eqref{eqcorr}, \eqref{eqasympconst}, and \eqref{evenbulkHE}, then  leads to
\begin{align} \big(\frac{\sqrt{2N}}{\rho N}\big)^{k} R_{k,N}\big(\sqrt{2N}(u+\frac{x}{\rho N})\big)\sim  (\beta/2)^{-\beta k/2}(\Gamma(1+\beta/2))^{k} \gamma_{m}(\beta'\!)\, |\Delta(2\pi x)|^{\beta} {\!\!\!{\phantom{j}}_{0}\mathcal{F}_0^{(2/\beta'\!)}}( i\pi s;1^{m}\!,(-1)^{m}).\nonumber\end{align}
According to  Gauss's multiplication formulas for the gamma function \cite{aar},
$$\prod_{j=1}^{l} \Gamma\big(a+  \frac{j-1}{l}\big)=l^{-la+\frac{1}{2}}(2\pi)^{\frac{l-1}{2}} \Gamma(la), \qquad l\in \mathbb{N}, $$
we have for $l=\beta/2$,
$$\gamma_{m}(\beta'\!):=\binom{2m}{m}\prod_{j=1}^m\frac{\Gamma(1+\beta'\!j/2)}{\Gamma(1+\beta'\!(m+j)/2)}=(\beta/2)^{\beta k^{2}/2} \prod_{j=0}^{k-1}\frac{\Gamma(1+\beta j/2)}{\Gamma(1+\beta (k+j)/2)}\, .$$
This further implies
\begin{align} \big(\frac{\sqrt{2N}}{\rho N}\big)^{k} R_{k,N}\big(\sqrt{2N}(u+\frac{x}{\rho N})\big)\sim b_{k}(\beta) \, |\Delta(2\pi x)|^{\beta} {\!\!\!{\phantom{j}}_{0}\mathcal{F}_0^{( \beta/2)}}( i\pi s;1^{m}\!,(-1)^{m})_{\{s\}\mapsto \{x\}}\, ,\end{align}where \be b_{k}(\beta):=  (\beta/2)^{\beta k(k-1)/2}(\Gamma(1+\beta/2))^{k} \prod_{j=0}^{k-1}\frac{\Gamma(1+\beta j/2)}{\Gamma(1+\beta (k+j)/2)} \, , \ee
which establishes the first part of  Corollary \ref{theobulkCF}.

For the soft edge, one first sets   $\rho=2N^{-1\!/3}$.  Then, recalling \eqref{eqcorr}, \eqref{eqasympconst}, and \eqref{softHE}, one finds
\begin{align} \big(\frac{\sqrt{2N}}{\rho N}\big)^{k} R_{k,N}\big(\sqrt{2N}(1+\frac{x}{\rho N})\big)\sim  a_{k}(\beta)\, |\Delta(x)|^{\beta} \textrm{Ai}^{\!(\beta/2)}(s)_{\{s\}\mapsto \{x\}},\end{align}
where \be a_{k}(\beta):=  (\beta/2)^{(\beta k+1)k}(\Gamma(1+\beta/2))^{k} \prod_{j=1}^{2k} \frac{(\Gamma(1+2/\beta))^{\beta /2}}{\Gamma(1+\beta j/2) },   \ee
and the proof of the first half of Corollary \ref{theosoftCF} is completed.\end{proof}

Note that the coefficient  $b_{k}(\beta)$ given above
is exactly same as that in the circular $\beta$-ensemble (Proposition 13.2.3, \cite{forrester}) as well as in the L$\beta$E and  the J$\beta$E below.  This   strongly suggests the universality of $b_{k}(\beta)$.

\subsection{Laguerre $\beta$-ensemble}
  Based on the work of Kaneko \cite{kaneko}, it is easy to verify that  if
\be K_N(s_1,\ldots,s_n)=\Big\langle \prod_{i=1}^N\prod_{j=1}^n(x_i-s_j)\Big\rangle_{\LBE},\nonumber \ee
then
\be K_N(s_1,\ldots,s_n)=\frac{
W_{\lambda_1+n, \beta,N}
}{W_{\lambda_1, \beta,N}}
 \!\!{\phantom{j}}_1 F_{1}^{(\beta/2)}\left(-N;(2/\beta)(\lambda_1+n); s_1,\ldots,s_n\right), \ee
for instance, see Proposition 13.2.5,\cite{forrester}.
By making the change $s_i\mapsto s_i/N$ and taking the limit $N\to\infty$, one readily proves the first formula of Theorem \ref{theohard}.  Note that this result could be obtained by taking the integral duality formula in \cite{des} and following the asymptotic method developed in the previous section for the case where $p(x)$ admits a simple saddle point.

By using \eqref{forrester11}, we get the following integral formula
\be   K_N(s_1,\ldots,s_n)=  A_{N}\, i^{-n}
\int_{\mathbb{T}^{n}} \prod_{j=1}^n t_j^{-N-1}(1-t_j)^{N-1+\beta' (\lambda_{1}+1)/2} |\Delta(t)|^{\beta'} {\!\!\!{\phantom{j}}_0\mathcal{F}_{0}^{(2/\beta')}(s;t)}\,d^{n}t
\ee
where
\be  A_{N}= \frac{(2\pi)^{-n} e^{i\pi n N}}{M_n(N,-1+\beta' (\lambda_{1}+1)/2,\beta'/2)} \frac{
W_{\lambda_1+n, \beta,N}
}{W_{\lambda_1, \beta,N}}.\ee

Now let
 $$\varphi_{N}(s_1,\ldots,s_n):=e^{-\frac{1}{2}p_{1}(s)}\prod_{1\leq j\leq n}\!s_{j}^{\lambda_{1}/\beta} \,K_N(s_1,\ldots,s_n).$$
The
application of Proposition \ref{proppractical} then yields  \be \varphi_{N-l}\big(4N(u+\frac{s}{\rho N})\big)=A_{N-l}\, (4N)^{n\lambda_1 /\beta}e^{-2nNu}\!\prod_{1\leq j\leq n}\!\big(u+\frac{s_{j}}{\rho N}\big)^{\lambda_{1}/\beta} \,I_{N,n},\label{chLE1}\ee
where \begin{align} I_{N,n}&=\int_{\mathbb{T}^n} \exp\{-N\sum\limits_{j=1}^n p(t_j)\}|\Delta(t)|^{\beta'\!}  q(t) d^{n}t\nonumber\\
&=
c_{\beta'\!} \int_{0}^{\infty}r^{-\beta'\! -1}\bigg(\int_{\mathbb{T}^n}
  \exp\{-N\sum\limits_{j=1}^{n}p(t_{j})\}\, q(t)\,H_{\beta'\!}\big(r \Delta(t)\prod_{j=1}^{n}(it_j)^{-(n-1)/2}\big)\, d^{n} t\bigg) d r,\label{chLE2}\end{align}
  and
\be p(x)=\ln x-\ln(1-x)-4u x, \qquad q(t)=i^{-n}\prod_{j=1}^n t_j^{l-1}(1-t_j)^{-l-1+\beta' (\lambda_{1}+1)/2}\,{\!\!\!{\phantom{j}}_{0}\mathcal{F}_0^{(2/\beta'\!)}}(4s/\rho;t-1/2)\label{chLE3}.\ee

Since $p'(x)=\tfrac{1}{x(1-x)}-4u$, there are two simple saddle points in the bulk, namely  $x_\pm=\tfrac{1}{2}\pm  \tfrac{i}{2}\sqrt{\tfrac{1-u}{u}}$  with $u\in (0,1)$.  By letting $u\to1$, which corresponds to the soft edge,  we find that the two saddle points  become  one double saddle point $x_0=\tfrac{1}{2}$.

\begin{proof}[Proof of Theorem \ref{theobulk}: L$\beta$E case]
Here we focus on the bulk of the spectrum, so we set $u=\cos^{2}\theta,\ \theta\in(0,\tfrac{\pi}{2})$. Hence
$x_\pm=\tfrac{1}{2\cos\theta}e^{\pm i\theta},$ and
$$p(x_\pm)=-2\cos^{2}\theta\pm i(2\theta-\sin2\theta),\qquad p_\pm:=p''(x_\pm)=\pm 16iu^{2}\sqrt{\tfrac{1-u}{u}}.$$
We see that the angles of steepest descent are $\tfrac{3}{4}\pi$ at $x_+$ and  $\tfrac{1}{4}\pi$ at $x_-$, so we choose the following path of integration:
$$\mathscr{L}_{1}^{1\!/\!(2\!\cos\!\theta)}\bigcup\mathscr{M}_{1\!/\!(2\!\cos\!\theta)}^{-1\!/\!(2\!\cos\!\theta)}
\bigcup\mathscr{M}_{-1\!/\!(2\!\cos\!\theta)}^{1\!/\!(2\!\cos\!\theta)}\bigcup \mathscr{L}_{1\!/\!(2\!\cos\!\theta)}^{1}. $$
Actually, when  $x\in \mathscr{L}_{1}^{1\!/\!(2\!\cos\!\theta)}$ or $\mathscr{L}_{1\!/\!(2\!\cos\!\theta)}^{1}$, we have
$$\textrm{Re}\{p(x)-p(x_+)\}=\ln\frac{x}{ |1-x|}-4ux+2u>0. $$
On the semicircle $\mathscr{M}_{1\!/\!(2\!\cos\!\theta)}^{-1\!/\!(2\!\cos\!\theta)}$, we can write $x=\tfrac{1}{2\cos\theta}e^{i\phi}$ with  $\phi\in [0,\pi]$. Since
$$g(\cos\phi):=\textrm{Re}\{p(x)-p(x_+)\}=-\tfrac{1}{2}\ln(1+4\cos^{2}\theta-4\cos\theta\cos\phi)+2(\cos\theta-\cos\phi)\cos\theta,$$
it follows from $$g'(\cos\phi)=\frac{8(\cos\phi-\cos\theta)\cos^{2}\theta}{ 1+4\cos^{2}\theta-4\cos\theta\cos\phi}$$
that $g(\cos\phi)$
attains its minimum value at $\cos\phi=\cos\theta$. Similar results hold for $\textrm{Re}\{p(x)-p(x_-)\}$.

Case 1: $n=2m$.

   The line segments of integration $(-\infty e^{i\frac{3}{4}\pi},  \infty e^{i\frac{3}{4}\pi})$ and $(-\infty e^{i\frac{1}{4}\pi},  \infty e^{i\frac{1}{4}\pi})$   become the real line by  making the following change of variables: $w_j\mapsto w_j e^{i\frac{3}{4}\pi}$ and $w_j\mapsto w_j e^{i\frac{1}{4}\pi}$, respectively near $x_+$ and $x_-$. We take  $\rho=\frac{2}{\pi}\sqrt{\tfrac{1-u}{u}}$ and $l=0$.  Then,  according to   Theorem \ref{teo2saddle},  we have
   \begin{align} I_{N,2m}&\sim N^{-\beta'\! m(m-1)/2-m}e^{4mNu}(2\sqrt{u})^{-\beta'\!m\lambda_1}\, (\frac{\pi \rho}{2})^{\beta'\! m(m+1)/2-m} \nonumber\\
&\times \tbinom{2m}{m} (\Gamma_{\beta'\!,m})^{2} \,{\!\!\!{\phantom{j}}_{0}\mathcal{F}_0^{(2/\beta'\!)}}( i\pi s;1^{m},(-1)^{m}).\end{align}
From $$ \frac{
W_{\lambda_1+n, \beta,N}
}{W_{\lambda_1, \beta,N}}=(2/\beta)^{nN}\prod_{j=0}^{N-1}\frac{\Gamma(1+\lambda_{1}+n+\beta j/2)}{\Gamma(1+\lambda_{1}+\beta j/2)}=\prod_{j=1}^{n}\frac{\Gamma(N+\beta'\!(\lambda_{1}+ j)/2)}{\Gamma(\beta'\!(\lambda_{1}+ j)/2)},$$
we also get
\begin{align}A_N&=(2\pi)^{-n} e^{i\pi n N}\prod_{j=1}^{n}\frac{\Gamma(1+\beta'\!/2)\Gamma(1+N+\beta'\!(n-j)/2)}{\Gamma(1+\beta'\!j/2)}\nonumber\\
&\sim   (\Gamma_{\beta'\!,n})^{-1}N^{\beta'\! n(n-1)/4+n/2}\exp\{-nN(1-\ln N-i\pi)\}. \end{align}
Hence,
\begin{align} \varphi_{N}\big(4N(u+\frac{s}{\rho N})\big)&\sim \Psi_{\!^{N,2m}} \gamma_{m}(\beta'\!)\ {\!\!\!{\phantom{j}}_{0}\mathcal{F}_0^{(2/\beta'\!)}}( i\pi s;1^{m},(-1)^{m}) \label{evenbulkLE}
,\end{align}
where
\be\Psi_{\!^{N,2m}}=(\pi \rho/2)^{\beta'\! m(m+1)/2-m} N^{\beta'\! m(m+\lambda_1)/2}\exp\{ -2m N(1-\ln N)\}. \nonumber\ee

Case 2: $n=2m-1$. According to Theorem \ref{teo2saddle}, two types of integration domains  lead to a significant contribution as $N\to \infty$.  They correspond to
the intervals  of integration $$(-\infty e^{i\frac{3}{4}\pi},  \infty e^{i\frac{3}{4}\pi})^{m-1} \times (-\infty e^{i\frac{1}{4}\pi},  \infty e^{i\frac{1}{4}\pi})^{m}, \qquad (-\infty e^{i\frac{3}{4}\pi},  \infty e^{i\frac{3}{4}\pi})^{m} \times (-\infty e^{i\frac{1}{4}\pi},  \infty e^{i\frac{1}{4}\pi})^{m-1}$$
    which can be deformed to (whenever $m>1$)
  $$(-\infty e^{i\frac{3}{4}\pi},  \infty e^{i\frac{3}{4}\pi})^{m-1} \times (-\infty e^{\frac{i}{4}\pi+\frac{i}{m}(\frac{\pi}{2}-2\theta)},  \infty e^{\frac{i}{4}\pi+\frac{i}{m}(\frac{\pi}{2}-2\theta)})^{m}$$
  and $$(-\infty e^{i\frac{3}{4}\pi+\frac{i}{m}(2\theta-\frac{\pi}{2})},  \infty e^{i\frac{3}{4}\pi+\frac{i}{m}(2\theta-\frac{\pi}{2})})^{m} \times (-\infty e^{i\frac{1}{4}\pi},  \infty e^{i\frac{1}{4}\pi})^{m-1}.$$
Then, by changing the variables as in $n=2m$ case, we get
\begin{align} &I_{N,2m-1}\sim N^{-\beta'\! (m-1)^{2}/2-n/2}\exp\{2nNu \}\, (\frac{\pi \rho}{2})^{\beta'\! (m^{2}-1)/2-n/2} (2\sqrt{u})^{-\beta'\!(1+n\lambda_1)/2} \nonumber\\
&\times \tbinom{2m-1}{m} \Gamma_{\beta'\!,m-1} \Gamma_{\beta'\!,m} \frac{1}{i}
\Big(
e^{i(\theta_{N}-2l \theta)} E_{m-1}^{(2/\beta'\!)}(- i\pi s)-e^{-i(\theta_{N}-2l \theta)} E_{m-1}^{(2/\beta'\!)}( i\pi s)
\Big),\end{align}
where $$\theta_{N}=N(2\theta-\sin2\theta) +\beta'\! (\lambda_1+1)\theta/2+\beta' (m-1)(\theta-\pi/4)+\pi/4, \qquad \theta=\arccos \sqrt{u}.$$
One finally obtains  \begin{align} \varphi_{N-l}\big(4N(u+\frac{s}{\rho N})\big)&\sim \Psi^{\!_{(l)}}_{\!^{N,n}} \frac{1}{2\sqrt{2}i \sqrt[4]{u(1-u)} }\Big(
e^{i(\theta_{N}-2l \theta)} E_{m-1}^{(2/\beta'\!)}(- i\pi s)-e^{-i(\theta_{N}-2l \theta)} E_{m-1}^{(2/\beta'\!)}( i\pi s)
\Big) \label{oddbulkLE}
,\end{align}
where
\begin{align*}\Psi^{\!_{(l)}}_{\!^{N,2m-1}}&=\tbinom{2m-1}{m} \Gamma_{\beta'\!,m-1} \Gamma_{\beta'\!,m}(\Gamma_{\beta'\!,2m-1})^{-1}(\pi \rho/2)^{\beta'\!(m^{2}-1)/2- m+1} \sqrt{2}(2\sqrt{u})^{-\beta'\!/2+1}  \\ &\times  N^{\beta'\! m(m-1)/2+\beta'\! (2m-1)\lambda_1/4} \exp\{ -(2m-1) N(1-\ln N-i\pi)\}\, (-N)^{-(2m-1)l}. \end{align*}
\end{proof}

\begin{proof}[Proof of Theorem \ref{theosoft}: L$\beta$E case]
We now consider    the soft edge of the spectrum ($u=1$). In Eqs.\eqref{chLE1}--\eqref{chLE3}, let $u=1$, $l=0$ and $\rho=2(2N)^{-1/3}$.  This yields
  \be p(x)=\ln x-\ln(1-x)-4x, \qquad q(t)=i^{-n}\prod_{j=1}^n t_j^{-1}(1-t_j)^{-1+\beta' (\lambda_{1}+1)/2}\,{\!\!\!{\phantom{j}}_{0}\mathcal{F}_0^{(2/\beta'\!)}}(4s/\rho;t-1/2)\nonumber.\ee
At the double saddle point $x_0=1/2$, we have
$$p(x_0)=-2,\qquad  p'''(x_0)=32.$$ Thus, the angles of steepest descent   are $2\pi/3$ and $4\pi/3$.   Since on the circle $x=\tfrac{1}{2}e^{i\phi}$, we have
$$ \textrm{Re}\{p(x+x_0)-p(x_0)\}=-\ln\sqrt{5-4\cos\phi}-2\cos\phi+2>0$$
whenever $\phi\in (0,2\pi)$.
We choose $\mathscr{P}$ to be the following path: it starts at 1, arrives at $1/2+i0^{+}$ by following a straight line, along the centered  circle of radius $1/2$ counterclockwisely,   then follows a straight line from $ 1/2-i0^{+}$ to  $1$.  Applying Theorem \ref{mainintegral1} and Remark \ref{remarkmu},  noting that the line segments of integration near the saddle point can be chosen  to be  $(-\infty i,  \infty i)$,    after the change of variables: $w_j\mapsto 16^{-1/3}i w_j$ one obtains
 \begin{align} I_{N,n}&\sim  2^{-\beta'\! n(n-1)/6-\beta'\! n(\lambda_1+1)/2+2n/3} N^{-\beta'\! n(n-1)/6-n/3} \exp\{ 2nN\}\,(2\pi)^{n} \textrm{Ai}^{(2/\beta'\!)}(s).\end{align}
Thus,
  \begin{align} \varphi_{N}\big(4N(1+\frac{s}{\rho N})\big)&\sim \Phi_{\!^{N,n}}\, (2\pi)^{n}(\Gamma_{\beta'\!,n})^{-1} \textrm{Ai}^{(2/\beta'\!)}(s) ,\label{softLE}\end{align}
where
\be\Phi_{\!^{N,n}}=  2^{-\beta'\! n(n-1)/6-\beta'\! n/2+2n/3} N^{\beta'\! n(n-1)/12+\beta'\! n \lambda_1/4+ n/6}\exp\{ -n N(1-\ln N-i\pi)\}. \nonumber\ee
\end{proof}

\begin{proof}[Proof of Corollaries \ref{theobulkCF} and \ref{theosoftCF}: L$\beta$E case]
As mentioned previously, when $\beta$ is even,  the scaling limit of the correlation functions for the  L$\beta$E immediately follow  from  \eqref{evenbulkLE} and \eqref{softLE}. For the bulk case, let $n=2m=k\beta$ and $\rho=\frac{2}{\pi}\sqrt{\tfrac{1-u}{u}}$.  By making use of \be R_{k,N}(x_1,\ldots,x_k)=\frac{(k+N)!}{N!}\frac{W_{\lambda_1,\beta,N}}{W_{\lambda_1,\beta,k+N}}   \prod_{1\leq j<l\leq
k}(x_{j}-x_{l})^{\beta}\,\big[\varphi_N(s_1,\ldots,s_n)\big]_{\{s\}\mapsto \{x\}}
\ee
and  $$\frac{W_{\lambda_1,\beta,N}}{W_{\lambda_1,\beta,k+N}}\sim (2\pi\!)^{-k} (\beta/2)^{-\beta k/2} (\Gamma(1+\beta/2))^{k} e^{\beta kN}N^{-\beta kN -\beta k^{2}/2-(\lambda_1+1)k},$$
one can show that
\begin{align} \big(\frac{4}{\rho}\big)^{k} R_{k,N}\big(4N(u+\frac{x}{\rho N})\big)\sim b_{k}(\beta) \, |\Delta(2\pi x)|^{\beta} {\!\!\!{\phantom{j}}_{0}\mathcal{F}_0^{( \beta/2)}}( i\pi s;1^{m}\!,(-1)^{m})_{\{s\}\mapsto \{x\}}
.\end{align}
For   the soft edge,  one   lets $\rho=2(2N)^{-1\!/3}$ and gets
\begin{align} \big(\frac{4}{\rho}\big)^{k} R_{k,N}\big(4N(1+\frac{x}{\rho N})\big)\sim  a_{k}(\beta)\, |\Delta(x)|^{\beta} \textrm{Ai}^{\!(\beta/2)}(s)_{\{s\}\mapsto \{x\}}.\end{align}
Notice that the coefficients $ a_{k}(\beta)$ and $b_{k}(\beta)$ are the same as those in the H$\beta$E.
\end{proof}

\subsection{Jacobi $\beta$-ensemble } The J$\beta$E case is very similar to the L$\beta$E. First, Kaneko \cite{kaneko} proved that
\begin{multline} K_N(s_1,\ldots,s_n):=\Big\langle \prod_{i=1}^N\prod_{j=1}^n(x_i-s_j)\Big\rangle_{\JBE}\\
=\frac{
S_{N}(\lambda_{1}+n,\lambda_{2},\beta/2)
}{S_{N}(\lambda_{1},\lambda_{2},\beta/2)}\,
\!\!\!{\phantom{j}}_2 F_{1}^{(\beta/2)}\left(-N,(2/\beta)(\lambda_1+\lambda_2+n+1)+N-1;(2/\beta)(\lambda_1+n); s\right). \end{multline}
 By making the change $s_i\mapsto s_i/N^{2}$ and taking the limit $N\to\infty$, one readily proves the second formula of Theorem \ref{theohard}.

By using \eqref{forrester21} we get the following integral formula
\be   K_N(s)=  B_{N}\, i^{-n}
\int_{\mathbb{T}^{n}} \prod_{j=1}^n t_j^{-\beta'\!(\lambda_2+1)/2-N}(1-t_j)^{N-2+\beta'\! (\lambda_{1}+\lambda_{2}+2)/2} |\Delta(t)|^{\beta'} {\!\!\!{\phantom{j}}_1\mathcal{F}_{0}^{(2/\beta')}(-N;s;1-t)}\,d^{n}t
\ee
where
\be  B_{N}= \frac{(2\pi)^{-n} e^{i\pi(\beta'\! n(\lambda_{2}+1)/2+n(N-1))}}{M_n(\beta' (\lambda_{2}+1)/2+N-1,-1+\beta' (\lambda_{1}+1)/2,\beta'/2)} \frac{
S_{N}(\lambda_{1}+n,\lambda_{2},\beta/2)
}{S_{N}(\lambda_{1},\lambda_{2},\beta/2)}.\ee

For the weighted quantity
 $$\varphi_{N}(s_1,\ldots,s_n):= \prod_{1\leq j\leq n}\!s_{j}^{\lambda_{1}/\beta}(1-s_{j})^{\lambda_{2}/\beta} \,K_N(s_1,\ldots,s_n),$$
application of Proposition \ref{proppractical} gives  \be \varphi_{N-l}\big( u+\frac{s}{\rho N} \big)=B_{N-l}\, \prod_{1\leq j\leq n}\!\big(u+\frac{s_{j}}{\rho N}\big)^{\lambda_{1}/\beta}\big(1-u-\frac{s_{j}}{\rho N}\big)^{\lambda_{2}/\beta} \,I_{N,n},\label{chJE1}\ee
where \begin{align} I_{N,n} =
c_{\beta'\!} \int_{0}^{\infty}r^{-\beta'\! -1}\bigg(\int_{\mathbb{T}^n}
  \exp\{-N\sum\limits_{j=1}^{n}p(t_{j})\}\, q(t)\,H_{\beta'\!}\big(r \Delta(t)\prod_{j=1}^{n}(it_j)^{-(n-1)/2}\big)\, d^{n} t\bigg) d r,\label{chJE2}\end{align}
  and $p(x)=\ln x-\ln(1-x)-\ln(1-u+ux)$,
\be q(t)=i^{-n}\prod_{j=1}^n t_j^{-\beta'\!(\lambda_2+1)/2+l}(1-t_j)^{\beta'\! (\lambda_{1}+\lambda_{2}+2)/2-l-2}\,{\!\!\!{\phantom{j}}_{1}\mathcal{F}_0^{(2/\beta'\!)}}(-N;\frac{s}{\rho N};\frac{1-t}{1-u+ut})\label{chJE3}.\ee

Since
 $p'(x)=\tfrac{1}{x(1-x)}- \frac{u}{1-u+ux}$,    there are generically two simple saddle points:
 $$x_+=\sqrt{(1-u)/u}\,e^{i \pi/2}\quad \text{and}\quad   x_-=\sqrt{(1-u)/u}\,e^{3i\pi/2},$$
in the bulk of the spectrum with $u\in (0,1)$.

\begin{proof}[Proof of Theorem \ref{theobulk}: J$\beta$E case]
 Set $u=\cos^{2}\theta,\ \theta\in(0, \pi/2)$, then
$$p(x_\pm)= \pm 2i\theta,\qquad p_\pm:=p''(x_\pm)=2u^{2}/\sqrt{u(1-u)}\  e^{\pm i(2\theta- \pi/2)}.$$
This allows us to take the angles of steepest descent $\tfrac{5}{4}\pi-\theta$ at $x_+$ and  $\theta-\tfrac{1}{4}\pi$ at $x_-$. We choose the path of integration:
$$\mathscr{L}_{1}^{\tan\!\theta}\bigcup\mathscr{M}_{\tan\!\theta}^{-\!\tan\!\theta}
\bigcup\mathscr{M}_{-\!\tan\!\theta}^{\tan\!\theta}\bigcup \mathscr{L}_{\tan\!\theta}^{1}. $$
Actually, when  $x\in \mathscr{L}_{1}^{\tan\!\theta}$ or $\mathscr{L}_{\tan\!\theta}^{1}$, we see that
$$\textrm{Re}\{p(x)-p(x_\pm)\}=\ln\frac{x}{ |1-x|}-\ln(1-u+ux)>0,$$
while on the  circle $\{x:|x|=\tan\!\theta\}$, 
 setting $x=\tan\!\theta\, e^{i\phi}$ with $\phi\in [0,2\pi)$,
$$ \textrm{Re}\{p(x)-p(x_\pm)\}=-\tfrac{1}{2}\ln(1-4u(1-u)\cos^{2}\phi) $$
attains its minimum at $ \phi=\pm \pi/2$.

Let $\rho=\tfrac{1}{\pi}\tfrac{1}{\sqrt{u(1-u)}}$, notice the polynomial
$$
{\!\!\!{\phantom{j}}_{1}\mathcal{F}_0^{(2/\beta'\!)}}\big(-N;\frac{s}{\rho N};\frac{1-t}{1-u+ut}\big)={_{0}\mathcal{F}_0^{(2/\beta'\!)}} \big(\frac{s}{\rho };\frac{ t-1}{1-u+ut}\big) +O(N^{-1}),$$
and $$B_N \sim   (\Gamma_{\beta'\!,n})^{-1} 2^{-\beta'\! n(\lambda_{1}+\lambda_{2}+2)/2-\beta'\! n(n-1)/4-n(2N-3/2)} N^{\beta'\! n(n-1)/4+n/2}\exp\{i\pi\big(\beta'\! n(\lambda_{2}+1)/2+n(N-1)\big) \},$$
we are ready to compute the bulk scaling.

For  $n=2m$,
\begin{align} \varphi_{N}\big( u+\frac{s}{\rho N} \big)&\sim \Psi_{\!^{N,2m}} \gamma_{m}(\beta'\!)\ {\!\!\!{\phantom{j}}_{0}\mathcal{F}_0^{(2/\beta'\!)}}( i\pi s;1^{m},(-1)^{m}) \label{evenbulkJE}
,\end{align}
where
\be\Psi_{\!^{N,2m}}=(\pi \rho)^{\beta'\! m(m+1)/2-m} N^{\beta'\! m^{2}/2}2^{-\beta'\! m^{2}/2-\beta'\! m(\lambda_{1}+\lambda_{2}+1)+2m(1-2N)}, \nonumber\ee
while for $n=2m-1$ \begin{align} \varphi_{N-l}\big( u+\frac{s}{\rho N} \big)&\sim \Psi^{\!_{(l)}}_{\!^{N,n}} \frac{1}{2  \sqrt[4]{u } }\Big(
e^{i \theta_{N} +i(\frac{\pi}{2}-\theta)l} E_{m-1}^{(2/\beta'\!)}(- i\pi s)-e^{-i \theta_{N}-i(\frac{\pi}{2}-\theta)l} E_{m-1}^{(2/\beta'\!)}( i\pi s)
\Big) \label{oddbulkJE}
,\end{align}
where
\begin{align*}\Psi^{\!_{(l)}}_{\!^{N,2m-1}}&=\tbinom{2m-1}{m} \Gamma_{\beta'\!,m-1} \Gamma_{\beta'\!,m}(\Gamma_{\beta'\!,2m-1})^{-1}(\pi \rho)^{\beta'\!(m-1)^{2}/2 +(\beta'\!-2m+1)/2} \sqrt{2}(2\sqrt{u})^{-\beta'\!/2+1}  \\ &\times  N^{\beta'\! m(m-1)/2} i^{-1}(-1)^{n(N-1)}
2^{-\beta'\! m(m+1)/2-\beta'\! (2m-1)(\lambda_{1}+\lambda_{2})/2+(2m-1)(1-2N+2l)+1}
(1-u)^{nl/2} \sqrt[4]{u} \end{align*}
and $$\theta_{N}=2(N-1)\theta  +(1+\beta'\!(m+\lambda_2))(\theta-\pi/4)+\beta'\!(\lambda_1-\lambda_2)\theta/2, \qquad \theta=\arccos \sqrt{u}.$$
\end{proof}

\begin{proof}[Proof of Corollary \ref{theobulkCF}: J$\beta$E case]
As previously, we suppose that $\beta$ is even,  which implies that  the scaling limits of the eigenvalue correlation functions in the bulk of the   J$\beta$E immediately follow  from  \eqref{evenbulkJE}. Let $n=2m=k\beta$, we have
\be R_{k,N}(x_1,\ldots,x_k)=\frac{(k+N)!}{N!}\frac{S_{N}(\lambda_{1},\lambda_{2},\beta/2)}{S_{k+N}(\lambda_{1},\lambda_{2},\beta/2)}   \prod_{1\leq j<l\leq
k}(x_{j}-x_{l})^{\beta}\,\big[\varphi_N(s_1,\ldots,s_n)\big]_{\{s\}\mapsto \{x\}}.
\ee
Notice $$\frac{S_{N}(\lambda_{1},\lambda_{2},\beta/2)}{S_{k+N}(\lambda_{1},\lambda_{2},\beta/2)} \sim \pi^{-k} (\Gamma(1+\beta/2))^{k} 2^{\beta(2N-1+k)k+2(\lambda_1+\lambda_2+1)k}(\beta N)^{-\beta k/2},$$
in   the bulk   taking  $\rho=\tfrac{1}{\pi}\tfrac{1}{\sqrt{u(1-u)}}$, one obtains
\begin{align} \big(\frac{1}{\rho N}\big)^{k} R_{k,N}\big(u+\frac{x}{\rho N}\big)\sim b_{k}(\beta) \, |\Delta(2\pi x)|^{\beta} {\!\!\!{\phantom{j}}_{0}\mathcal{F}_0^{( \beta/2)}}( i\pi s;1^{m}\!,(-1)^{m})_{\{s\}\mapsto \{x\}}
.\end{align}
The use of  Corollary \ref{practical} finally  allows us to rewrite the right-hand side in terms of a ${}_1F_1$  hypergeometric function as in Corollary \ref{theobulkCF}.
\end{proof}

At last we remark that  for odd $n=2m-1$ there is also a universal pattern in the bulk for all the three ensembles: H$\beta$E, L$\beta$E and J$\beta$E. Actually, from  the asymptotic results contained in Eqs \eqref{oddbulkHE}, \eqref{oddbulkLE} and\eqref{oddbulkJE},  one easily establishes the following theorem.

\begin{theorem}[``Kernel'' in the bulk] \label{odduniversality} Assume  that  $n=2m-1$ is odd. Let $A=\sqrt{2N},\, 4N,\, 1$ and  $\rho=\tfrac{2}{\pi}\sqrt{1-u^{2}}$, $\tfrac{2}{\pi}\sqrt{\tfrac{1-u}{u}}$ and $\tfrac{1}{\pi}\tfrac{1}{\sqrt{u(1-u)}}$
for  the $\HBE$, $\LBE$,  $\JBE$, respectively.  Moreover,  let  $s=(s_1,\ldots,s_n)$ and $t=(t_1,\ldots,t_n)$.  Then as $N\to \infty$,
\begin{align}\frac{1}{\Psi^{\!_{(0)}}_{\!^{N,2m-1}}\Psi^{\!_{(1)}}_{\!^{N,2m-1}}}&\Big\{
\varphi_N\big(A\,u+\frac{A\,s}{\rho N}\big)\varphi_{N-1}\big(A\,u+\frac{A\,t}{\rho N}\big)-\varphi_N\big(A\,u+\frac{A\,t}{\rho N}\big)\varphi_{N-1}\big(A\,u+\frac{A\,s}{\rho N}\big)
\Big\}\nonumber\\
&\sim \frac{1}{2i}\Big\{E_{m-1}^{(\beta/2\!)}( i\pi s)\, E_{m-1}^{(\beta/2\!)}( -i\pi t)-E_{m-1}^{(\beta/2\!)}( i\pi t)\, E_{m-1}^{(\beta/2\!)}( -i\pi s)
\Big\}\label{limitsinkernel}\end{align}
where   $E_k^{(\alpha)}$ denotes for the generalized exponential defined in \eqref{defE} and
$\Psi^{\!_{(l)}}_{\!^{N,2m-1}}$ is given in \eqref{eqPsiodd}.
\end{theorem}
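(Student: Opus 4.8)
The plan is to read off the three odd-$n$ bulk asymptotics already established in \eqref{oddbulkHE}, \eqref{oddbulkLE}, and \eqref{oddbulkJE}, and to exploit the single structural feature they share. In every case the rescaled expectation factorizes as
\begin{equation*}
\varphi_{N-l}\Big(Au+\tfrac{As}{\rho N}\Big)\sim \Psi^{\!_{(l)}}_{\!^{N,2m-1}}\, c\,\Big(e^{i\psi_l}E_{m-1}^{(\beta/2)}(\sigma i\pi s)+\delta\, e^{-i\psi_l}E_{m-1}^{(\beta/2)}(-\sigma i\pi s)\Big),
\end{equation*}
where $\sigma=\pm1$ and $\delta=\pm1$ are ensemble-dependent signs ($\sigma=\delta=+1$ for the $\HBE$, $\sigma=\delta=-1$ for the $\LBE$ and $\JBE$), the constant $c$ depends only on $u$ (equivalently on $\theta$) and not on the argument, and the phase $\psi_l=\theta_N-\omega l$ splits into an argument-independent $N$-phase $\theta_N$ and a linear shift in $l$ with fixed increment $\omega$. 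The point that forces one to consider the antisymmetric (Wronskian-type) combination in \eqref{limitsinkernel}, rather than a limit of a single $\varphi_N$, is that the argument dependence sits entirely in the two generalized exponentials $E_{m-1}^{(\beta/2)}(\pm i\pi\,\cdot)$, while the rapidly oscillating phase $\theta_N$ is common to all four factors and would otherwise obstruct convergence.

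First I would substitute the two expansions with $l=0$ (for $\varphi_N$) and $l=1$ (for $\varphi_{N-1}$) into the bracketed combination and expand the two products $\varphi_N(As)\varphi_{N-1}(At)$ and $\varphi_N(At)\varphi_{N-1}(As)$. Each yields four terms carrying the phases $e^{\pm i(\theta_N+\psi_1)}$ and $e^{\pm i(\theta_N-\psi_1)}$. Upon subtracting, the two ``diagonal'' products $E_{m-1}^{(\beta/2)}(\sigma i\pi s)E_{m-1}^{(\beta/2)}(\sigma i\pi t)$ and $E_{m-1}^{(\beta/2)}(-\sigma i\pi s)E_{m-1}^{(\beta/2)}(-\sigma i\pi t)$ cancel identically, because their coefficients are symmetric in $s$ and $t$. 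What survives is the antisymmetric bilinear $E_{m-1}^{(\beta/2)}(\sigma i\pi s)E_{m-1}^{(\beta/2)}(-\sigma i\pi t)-E_{m-1}^{(\beta/2)}(\sigma i\pi t)E_{m-1}^{(\beta/2)}(-\sigma i\pi s)$, multiplied by $\delta\big(e^{i(\theta_N-\psi_1)}-e^{-i(\theta_N-\psi_1)}\big)=2i\delta\sin(\theta_N-\psi_1)$.

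The decisive cancellation is that $\theta_N-\psi_1=\theta_N-(\theta_N-\omega)=\omega$, so the $N$-dependent phase drops out and only the fixed angle $\omega$ remains: $\omega=\theta+\tfrac{\pi}{2}$ for the $\HBE$, $\omega=2\theta$ for the $\LBE$, and $\omega=\theta-\tfrac{\pi}{2}$ for the $\JBE$. The proof then reduces to checking, ensemble by ensemble, that after dividing by $\Psi^{\!_{(0)}}_{\!^{N,2m-1}}\Psi^{\!_{(1)}}_{\!^{N,2m-1}}$ the algebraic prefactor $c^2\,\delta\,2i\sin\omega$ (together with the sign produced by $\sigma=-1$, which turns the surviving bilinear into minus the one in \eqref{limitsinkernel}) collapses to the universal constant $1/(2i)$. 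For instance in the $\HBE$ one has $c=(2i\sqrt{\cos\theta})^{-1}$ and $\sin\omega=\cos\theta$, whence $c^2\,2i\sin\omega=2i\cdot(-1/(4\cos\theta))\cdot\cos\theta=-i/2=1/(2i)$; the $\LBE$ pits $\sin2\theta=2\sqrt{u(1-u)}$ against $c=(2\sqrt2\,i\,\sqrt[4]{u(1-u)})^{-1}$, and the $\JBE$ pits $\sin(\theta-\tfrac{\pi}{2})=-\sqrt u$ against $c=(2\sqrt[4]{u})^{-1}$, each case again yielding $1/(2i)$. There is no new analysis here, since the saddle-point asymptotics have already been carried out in the preceding subsections; the only real work is this bookkeeping of signs and of the fractional powers of $u$ hidden inside $c$. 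Accordingly, I expect the sole point demanding genuine care to be keeping the branch and sign conventions for $E_{m-1}^{(\beta/2)}(\pm i\pi\,\cdot)$ and for the pair $(\sigma,\delta)$ consistent across the three ensembles, so that the surviving antisymmetric bilinear is exactly the one displayed on the right-hand side of \eqref{limitsinkernel} and not its negative.
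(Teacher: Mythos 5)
Your proposal is correct and follows essentially the same route as the paper, which simply asserts that the theorem follows from the three odd-$n$ bulk asymptotics \eqref{oddbulkHE}, \eqref{oddbulkLE}, \eqref{oddbulkJE}; you supply the bookkeeping the paper leaves implicit (cancellation of the symmetric diagonal terms, the phase difference $\psi_0-\psi_1=\omega$ killing the $N$-dependent phase $\theta_N$, and the sign pair $(\sigma,\delta)$ with $\sigma\delta=+1$ in all three ensembles), and your ensemble-by-ensemble checks that $c^2\cdot 2i\sin\omega=1/(2i)$ are all accurate.
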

The last theorem obviously generalizes the standard result   valid for the unitary (i.e., $\beta=2$)  ensembles and according to which  the polynomial kernel\footnote{Following our notation, the polynomial kernel $K_N(x,y)$ of the unitary ensembles is given by $\frac{\varphi_N(x)\varphi_{N-1}(y)- \varphi_N(y)\varphi_{N-1}(x)}{x-y}$  if $n=m=1$ and $\beta=2$.  }  $K_N(x,y)$  asymptotically tends to the sine kernel $  \frac{\sin \pi (x-y)}{\pi(x-y)}$ when it is rescaled in the bulk of the spectrum (see for instance \cite{forrester,mehta}).  We remark that whenever $n=m=1$,  the limiting kernel on the RHS of \eqref{limitsinkernel} is equal to $\sin \pi(s_1-t_1)$ no matter what the value of $\beta$ is.

\subsection{Parameter-varying ensembles} \label{parametervarying}

 As mentioned in the introduction, the Laguerre and Jacobi $\beta$-ensembles are defined in terms of additional parameters, respectively denoted by $\lambda_{1}$, and by $\lambda_{1}$ and $\lambda_{2}$.  Our previous results are proved only for fixed constants $\lambda_{1},\ \lambda_{2}\in (-1,\infty)$.

 However, in Statistics, these parameters are usually  written  as
\be \label{eqparametersvary} \lambda_{1}=\beta(N_{1}-N+1)/2-1, \qquad \lambda_{2}=\beta(N_{2}-N+1)/2-1, \ee
with $N_{1}, N_{2}>N$.  
 When  the parameters $N_{1}$ and $N_{2}$ vary proportionally with $N$, i.e. $$N_{j}/N \rightarrow \gamma_{j}\in [1,\infty),\qquad  (j=1,2),$$ one can still get the scaled limit for the empirical eigenvalue distribution (e.g. Chapter 3 in \cite{forrester}). It is worth noting that when $\gamma_{1}, \gamma_{2}>1$,  both the lower and upper edges are so-called  soft edges for these two ensembles, so the soft-edge behavior is expected.  Moreover, as shown in  \cite{hmf,jv}, the point process limits hold for the Laguerre and Jacobi $\beta$-ensembles with a general family of parameters, so the scaling limits for the averages of characteristic polynomials hold in these cases as well.

  \begin{theorem}[Soft-edge limit for parameter-varying ensembles]
\label{generaledgescaling}  Assume that $N_{1}/N \rightarrow
\gamma_{1}\in (1,\infty)$ for the $\LBE$ and $N_{1}/N \rightarrow
\gamma_{1}\in (1,\infty),\   N_{2}/N \rightarrow \gamma_{2}\in
[1,\infty)$ for the $\JBE$.  Moreover, let
\be  A,\, B=\begin{cases}4N\Lambda_{-}, \qquad 
-(\sqrt{\gamma_{1}}-1)^{4/3}\gamma_{1}^{-1/6}N^{1/3} & \text{for the
$\LBE$}, \\  b_{-}, \qquad    N^{-2/3}g_{-} & \text{for the $\JBE$,}
\end{cases}\ee
where 
\be \label{constantsvaryingcase}\Lambda_{\pm}=\big(\frac{1\pm\sqrt{\gamma_{1}}}{2}\big)^{2},\qquad  b_{\pm}=\Big(\frac{\sqrt{\gamma_{1}(\gamma_{1}+\gamma_{2}-1)}\pm
\sqrt{\gamma_{2}}}{\gamma_{1}+\gamma_{2}}\Big)^{2}\ee
and where  $g_{-}$ denotes the numerical coefficient  defined below in \eqref{Jacobisoftscaling}.
    Then ,      \eqref{eqlimitingsoft} in Theorem \ref{theosoft} and
\eqref{softuniversal} in Corollary \ref{theosoftCF} hold for the
$\LBE$ and $\JBE$.
\end{theorem}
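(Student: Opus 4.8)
The plan is to mirror the soft-edge parts of the proofs of Theorem \ref{theosoft} for the $\LBE$ and $\JBE$ almost verbatim, the only genuine change being that the effective large-$N$ potential $p(x)$ now carries the parameters $\gamma_1$ (and $\gamma_2$), which shifts the location and the nature of its saddle points. First I would start from the exact integral representations already in hand, namely \eqref{forrester11} for the $\LBE$ and \eqref{forrester21} for the $\JBE$, and substitute the $N$-dependent values \eqref{eqparametersvary}. The decisive observation is that the factors carrying $\lambda_1$ (and $\lambda_2$), which were $O(1)$ and hence absorbed into $q(t)$ in the fixed-parameter proofs \eqref{chLE3}, \eqref{chJE3}, now grow linearly in $N$: for the $\LBE$ one has $\beta'(\lambda_1+1)/2=N_1-N+1\sim(\gamma_1-1)N$, so the factor $(1-t_j)^{N-1+\beta'(\lambda_1+1)/2}$ becomes $\exp\{N\gamma_1\ln(1-t_j)\}$ to leading order and must be moved into the exponent $\exp\{-N\sum_j p(t_j)\}$. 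The effective potential for the $\LBE$ therefore becomes $p(x)=\ln x-\gamma_1\ln(1-x)-4\Lambda_- x$, the linear term being produced by the translation formula of Proposition \ref{proppractical} applied to ${}_0\mathcal{F}_0$ together with the rescaling $s\mapsto A+Bs$; for the $\JBE$ an analogous three-logarithm potential with coefficients built from $\gamma_1$ and $\gamma_2$ arises from \eqref{forrester21}, \eqref{chJE3}.

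Next I would locate the saddle points of the new potential. The soft edge is by definition the place where two simple saddles coalesce into a double one, so I impose $p'(x_0)=p''(x_0)=0$ with $p'''(x_0)\neq0$, i.e.\ a saddle of order $\mu-1=2$. For the $\LBE$, $p''(x_0)=0$ gives $\gamma_1 x_0^2=(1-x_0)^2$, and $p'(x_0)=0$ then fixes the linear coefficient to be exactly $4\Lambda_-$ (resp.\ $4\Lambda_+$ at the other edge); this is the source of the constants $\Lambda_\pm$ in \eqref{constantsvaryingcase} and pins the centering $A=4N\Lambda_-$, while the $N^{1/3}$ in $B$ reflects the cube-root behaviour $\mu=3$ of an order-$2$ saddle. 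The $\JBE$ computation is the same in spirit but longer: solving $p'=p''=0$ for the three-logarithm potential produces the coalescence point $b_-$ of \eqref{constantsvaryingcase} and, after the local change of variable, the numerical scale $g_-$ entering $B$.

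With the saddle identified, I would verify hypotheses (i)--(v) of Section \ref{Laplacemethod} together with (vi): analyticity is immediate, the steepest-descent angles follow from $\mathrm{ph}(p'''(x_0))$, and one must exhibit a deformation of $\mathbb{T}$ (built from the straight and semicircular arcs $\mathscr{L}$ and $\mathscr{M}$ as in the fixed-parameter proofs) on which $\mathrm{Re}\{p(x)-p(x_0)\}>0$ away from $x_0$, plus condition (vi) ensuring the Vandermonde argument becomes real after the local rotation $w_j\mapsto e^{i\vartheta}v_j$. Corollary \ref{onesaddle} with $\mu=3$ and $\nu=\beta'$ then delivers the leading term as the generalized Kontsevich--Airy integral, i.e.\ $\mathrm{Ai}^{(\beta/2)}$, which is precisely \eqref{eqlimitingsoft}; the correlation-function statement \eqref{softuniversal} follows exactly as in the proof of Corollary \ref{theosoftCF}, once the ratio of normalization constants $W_{\lambda_1,\beta,N}/W_{\lambda_1,\beta,k+N}$ (resp.\ the Selberg-constant ratio) is re-estimated with $\lambda_1,\lambda_2$ now scaling with $N$.

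The hard part will be the global contour analysis (conditions (v) and (vi)) for the shifted double saddle, which no longer sits at the symmetric point $x_0=1/2$ and may even lie off the unit circle: I must check that $\mathbb{T}$ can be deformed so that $\mathrm{Re}\{p(x)-p(x_0)\}$ stays positive on the whole deformed path, and that the steepest-descent half-lines through the off-center saddle can be rotated onto the real axis so that the representation \eqref{integralrepofabsolutevaluepowerfunction1} of $|\Delta(t)|^{\beta'}$ may be reconstructed after integrating back over $r$. For the $\JBE$, the explicit extraction of $b_-$ and $g_-$ from the two-parameter potential, and the confirmation that the lower edge is genuinely soft when $\gamma_1,\gamma_2>1$, is the most calculation-heavy step; the point-process results of \cite{hmf,jv} guarantee that the limiting process is the expected soft-edge one, which serves as a useful consistency check but does not replace the direct saddle-point estimate.
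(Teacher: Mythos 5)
Your proposal is correct and coincides with the paper's own (sketched) proof: the authors likewise rerun the Sections~4.2--4.3 machinery after absorbing the now $N$-proportional exponents $\beta'(\lambda_1+1)/2=N_1-N+1\sim(\gamma_1-1)N$ (and its $\lambda_2$ analogue) into the effective potentials $p(x)=\ln x-\gamma_1\ln(1-x)-4ux$ and $p(x)=\gamma_2\ln x-(\gamma_1+\gamma_2-1)\ln(1-x)-\ln(1-u+ux)$, obtaining the double saddles $x_l=1/(1-\sqrt{\gamma_1})$, $x_r=1/(1+\sqrt{\gamma_1})$ (resp.\ their Jacobi counterparts) from $p'=p''=0$, which is exactly your coalescence computation producing $\Lambda_\pm$, $b_\pm$ and $g_\pm$. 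Your recipe indeed reproduces the stated constants (e.g.\ $p'''(x_l)=-2(\sqrt{\gamma_1}-1)^4\gamma_1^{-1/2}$ gives $B=N^{1/3}\bigl(p'''(x_l)/2\bigr)^{1/3}=-(\sqrt{\gamma_1}-1)^{4/3}\gamma_1^{-1/6}N^{1/3}$), and your remaining contour/condition-(vi) checks are precisely the steps the paper leaves implicit in its sketch.
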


 \begin{theorem}[Bulk limit for parameter-varying ensembles] \label{generaledgescaling2}  Let $\Lambda_{\pm}$ and $b_{\pm}$ be the constants defined in \eqref{constantsvaryingcase}.  For the $\LBE$, assume
that $A=4N$,  $N_{1}/N \rightarrow \gamma_{1}\in [1,\infty)$, and $$ \rho(u)=  \frac{2}{\pi u}\sqrt{(u-\Lambda_{-})(\Lambda_{+}-u)}, \qquad \Lambda_{-}<u<\Lambda_{+} .$$     For the $\JBE$, assume that $A=1$, 
$N_{i}/N \rightarrow \gamma_{j}\in [1,\infty)$ ($j=1,2$), and $$\rho(u)=
  \frac{ \gamma_{1}+\gamma_{2}}{2\pi
u(1-u)}\sqrt{(u-b_{-})(b_{+}-u)},\qquad  b_{-}<u<b_{+}. $$
Then,   \eqref{eqlimitingbulk}
in Theorem \ref{theobulk}, \eqref{bulkuniversal} in Corollary
\ref{theobulkCF} and \eqref{limitsinkernel} in Theorem
\ref{odduniversality} hold for the $\LBE$ and $\JBE$.
\end{theorem}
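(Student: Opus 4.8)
The plan is to re-run, with minimal changes, the two-saddle-point analysis used for the fixed-parameter $\LBE$ and $\JBE$ in the proofs of Theorems \ref{theobulk} and \ref{odduniversality}; the only structural difference is that under the parametrization \eqref{eqparametersvary} the exponents $\lambda_1$ (and $\lambda_2$) grow \emph{linearly} in $N$, so they now contribute at the same order as the factor $\exp\{-N\sum_j p(t_j)\}$ in the contour representations \eqref{forrester11} and \eqref{forrester21}. Starting from those same representations and collecting the $N$-linear part of the power-law weights $t_j^{\ast}(1-t_j)^{\ast}$, I would promote the potential to
\[
p(x)=\ln x-\gamma_1\ln(1-x)-4ux\qquad(\LBE),
\]
and to the analogous three-logarithm expression with coefficients $\gamma_2$ and $\gamma_1+\gamma_2-1$ in front of $\ln x$ and $\ln(1-x)$ (and the same $-\ln(1-u+ux)$ term) for the $\JBE$. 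Setting $\gamma_1=\gamma_2=1$ recovers exactly the potentials of Section \ref{rmt}, which is the necessary consistency check.

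Next I would locate the saddle points. For the $\LBE$ the equation $p'(x)=0$ is the quadratic $4u\,x^{2}-(4u-\gamma_1+1)x+1=0$, whose discriminant $(4u-\gamma_1+1)^2-16u$ equals $16(u-\Lambda_-)(u-\Lambda_+)$ with $\Lambda_\pm$ precisely as in \eqref{constantsvaryingcase}. Hence for $u\in(\Lambda_-,\Lambda_+)$ the roots form a complex-conjugate pair $x_\pm$ of \emph{simple} saddle points (the bulk), coalescing into a double saddle at the edges $u=\Lambda_\pm$, which is exactly what drives the companion soft-edge statement of Theorem \ref{generaledgescaling}. A short computation gives $\textrm{Im}\,x_+=\tfrac{\pi}{4}\rho(u)$, so the stated density $\rho(u)=\frac{2}{\pi u}\sqrt{(u-\Lambda_-)(\Lambda_+-u)}$ is the correct local rescaling; the identical discriminant analysis for the $\JBE$ produces the edges $b_\pm$ and the Jacobi density. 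With $x_\pm$, $p_\pm=p''(x_\pm)$ and the steepest-descent angles computed as in Section \ref{rmt}, the leading asymptotics follows by a direct application of Corollary \ref{twosaddle} (for $n=2m$) and of the $n=2m-1$ case of the same corollary feeding Theorem \ref{odduniversality}. Because Corollary \ref{twosaddle} delivers the universal limit $\gamma_m(4/\beta)\,{}_0\mathcal{F}_0(i\pi s;1^m,(-1)^m)$ through $q(x_+^m,x_-^m)$ irrespective of the precise numerical values of $p(x_\pm)$ and $p_\pm$, the limiting functions in \eqref{eqlimitingbulk}, \eqref{bulkuniversal} and \eqref{limitsinkernel} are unchanged, and only the $s$-independent prefactors $\Psi_{\!^{N,2m}}$ etc.\ absorb the $\gamma_j$-dependence; the final rewriting into a ${}_1F_1$ is again by Corollary \ref{practical}.

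Two bookkeeping points require genuine care. First, since $\textrm{Re}\,x_+\neq\tfrac12$ once $\gamma_1\neq1$, evaluating $q$ at the shifted saddle produces, via Proposition \ref{proppractical}(1), an extra factor $\exp\{\tfrac{1-\gamma_1}{2u\rho}\,p_1(s)\}$; this is precisely cancelled by the $s$-dependent part of the weight $\prod_j s_j^{\lambda_1/\beta}$ in the definition of $\varphi_N$ once $\lambda_1\sim N\beta(\gamma_1-1)/2$ is inserted, so the purely imaginary phase $e^{-i\pi p_1(s)}$ of the universal answer is restored. Second, the normalization constants $A_N$ and $B_N$ must be re-expanded by Stirling's formula with now $N$-linear parameters; this is routine but must be tracked to land the stated coefficients.

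The genuinely new analytic work, and the step I expect to be the main obstacle, is verifying hypotheses (iii)--(vi) of Section \ref{section2saddle} for the deformed $p(x)$: I must exhibit an explicit admissible contour, built as before from straight segments $\mathscr{L}$ and circular arcs $\mathscr{M}$ joining the endpoints through $x_\pm$, and prove that $\textrm{Re}\{p(x)-p(x_\pm)\}$ is nonnegative along it and bounded away from zero near the endpoints. For the $\LBE$ this is a mild perturbation of the estimate already carried out, but for the $\JBE$ the two independent ratios $\gamma_1,\gamma_2$ enlarge the admissible family of contours and make the monotonicity estimate on the arcs more delicate; I would again parametrize $x=r\,e^{i\phi}$ on the relevant arc and show the real part is extremized at $\phi=\pm\pi/2$. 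Once the contour is validated, the single invocation of Corollary \ref{twosaddle} closes \eqref{eqlimitingbulk}, \eqref{bulkuniversal} and \eqref{limitsinkernel} simultaneously, while the point-process results of \cite{hmf,jv} certify that the admissible range of each $\gamma_j$ is the full interval $[1,\infty)$ claimed.
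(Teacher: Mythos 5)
Your proposal matches the paper's own (sketched) proof essentially step for step: the paper likewise promotes the potential to $p(x)=\ln x-\gamma_1\ln(1-x)-4ux$ for the $\LBE$ (and the three-logarithm analogue $p(x)=\gamma_2\ln x-(\gamma_1+\gamma_2-1)\ln(1-x)-\ln(1-u+ux)$ for the $\JBE$), solves $p'(x)=0$ to get the complex-conjugate saddle pair $x_\pm$ with exactly your discriminant factorization $16(u-\Lambda_-)(u-\Lambda_+)$, reads off the new density $\rho(u)$ from $\mathrm{Im}\,x_\pm$, and then reruns the two-saddle-point analysis of Sections 4.2--4.3 unchanged. Your two bookkeeping observations --- the cancellation of the factor $\exp\{\tfrac{1-\gamma_1}{2u\rho}p_1(s)\}$ from the shifted saddle against the $N$-linear weight $\prod_j s_j^{\lambda_1/\beta}$, and the verification of the contour hypotheses (iii)--(vi) for the deformed $p$ --- are correct details that the paper's sketch leaves implicit.
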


\begin{proof}[Sketch of Proof of Theorems \ref{generaledgescaling}  and \ref{generaledgescaling2} ]
One has to follow the same steps as in Sections 4.2 and 4.3.   Here, we only point out some slight differences that appear in the computation when the parameters vary as in \eqref{eqparametersvary}.   

 First of all, for the Laguerre case, let $N_{1}/N
\rightarrow \gamma_{1}\in [1,\infty)$ as $N \rightarrow \infty$.  Then
the saddle point equation  becomes
$ p'(x)=1/x+ \gamma_{1}/(1-x)-4u=0$
since $p(x)=\ln x-\gamma_{1}\ln(1-x)-4u x$ (corresponding to
\eqref{chLE3}). Hence, one gets the solutions
\be x_\pm=\frac{4u+1-\gamma_{1}}{8u}\pm
i\frac{1}{2u}\sqrt{(u-\Lambda_{-})(\Lambda_{+}-u)},\ee
 where \be\Lambda_{-}=\big(\frac{1-\sqrt{\gamma_{1}}}{2}\big)^{2},
\qquad \Lambda_{+}=\big(\frac{1+\sqrt{\gamma_{1}}}{2}\big)^{2}.\ee
 It shows that there are two simple saddle points in the bulk
($\Lambda_{-}<u<\Lambda_{+}$).  When $\gamma_{1}=1$, corresponding to the case where $N_{1}-N$ is fixed, one recovers  the hard-edge behavior which has been
studied in the previous sections.  When $\gamma_{1}>1$,  there are two
double saddle points respectively at   the left  and right edges of the
spectrum:
 \be x_{l}=\frac{1}{1-\sqrt{\gamma_{1}}},\qquad
x_{r}=\frac{1}{1+\sqrt{\gamma_{1}}}.\ee
 Note that in this case the density function  becomes
\be\rho(u)=\frac{2}{\pi u}\sqrt{(u-\Lambda_{-})(\Lambda_{+}-u)}.\ee

Now, for the Jacobi case, let $N_{j}/N \rightarrow \gamma_{j}\in
[1,\infty)$ ($j=1,2$) as $N \rightarrow \infty$. Since in this case, \be
p(x)=\gamma_{2}\ln x-(\gamma_{1}+\gamma_{2}-1)\ln(1-x)-\ln(1-u+u
x),\label{Jacobip}\ee   the saddle point equation becomes
$$ p'(x)=\gamma_{2}/x+ (\gamma_{1}+\gamma_{2}-1)/(1-x)-u/(1-u+u x)=0. $$
Its solutions are
\be x_\pm=\frac{(\gamma_{1}-\gamma_{2})u+1-\gamma_{1}\pm i\,
(\gamma_{1}+\gamma_{2})\sqrt{(u-b_{-})(b_{+}-u)}}{2\gamma_{1}u},\ee
 where \be b_{\pm}=\Big(\frac{\sqrt{\gamma_{1}(\gamma_{1}+\gamma_{2}-1)}\pm
\sqrt{\gamma_{2}}}{\gamma_{1}+\gamma_{2}}\Big)^{2}.\ee
   In particular, if one assumes $\gamma_{1}, \gamma_{2}>1$,  then one finds the following  double
saddle points  at   the left  and right edges of the spectrum:
 \be x_{l}=\sqrt{\frac{\gamma_{2}}{\gamma_{1}}}\frac{
\sqrt{\gamma_{1}\gamma_{2}}+ \sqrt{ \gamma_{1}+\gamma_{2}-1 }
 }{1- \gamma_{1} },\qquad
x_{r}=\sqrt{\frac{\gamma_{2}}{\gamma_{1}}}\frac{
\sqrt{\gamma_{1}\gamma_{2}}- \sqrt{ \gamma_{1}+\gamma_{2}-1 }
 }{1- \gamma_{1} }.\ee
The density function now needs to be replaced by \be \rho(u)=\frac{
\gamma_{1}+\gamma_{2}}{2\pi u(1-u)}\sqrt{(u-b_{-})(b_{+}-u)}.\ee
Note finally that the exact scaling constants at the soft edge of the Jacobi $\beta$-ensemble 
 with $\gamma_{1}, \gamma_{2}>1$, are given in terms of 
 \be \label{Jacobisoftscaling}g_{-}=(1-b_-+b_-
x_{l})^{2}\big(p'''(x_l)/2\big)^{1/3},\qquad g_{+}=(1-b_{+}+b_+
x_{r})^{2}\big(p'''(x_r)/2\big)^{1/3},\ee
 where $p'''(x)$ denotes the third derivative of $p(x)$ in \eqref{Jacobip}.
\end{proof}

\section{PDEs at the  edges and in the bulk}\label{pdes}

In the introduction, we presented the asymptotic values  of the rescaled expectations $\varphi_N(A+Bs)$ as $N\to\infty$.  Three multivariate special functions were used:  the multivariate hypergeometric functions  ${ }_{0} F_{1}(s_1,\ldots, s_n)$ (Bessel type) and ${ }_{1} F_{1}(s_1,\ldots, s_n)$ (Trigonometric type), respectively  at the hard edge and in the bulk, and the multivariate Airy function at the soft edge.  The first two can be written explicitly in terms of Jack polynomials while the Airy function is known via an $n$-dimensional integral formula involving the hypergeometric function of exponential type ${ }_{0}\mathcal{F}_{0}(s_1,\ldots, s_n;w_1,\ldots,w_n)$. In this short section, we show that these three types of limiting expectation values can be seen as solutions of three simple holonomic systems of $ n$ second-order  partial differential equations.  This will allow us to find, in the case where $n=2$, a very simple expression for the bulk limiting expectation.   For the soft edge, we also find a very simple limiting expectation in the case where $n$ is arbitrary, but $\beta=\infty$.

\begin{proposition}[Limiting PDEs]  Let $F$ respectively denote the scaling limit at the hard edge  on the RHS of \eqref{eqlimitinghard},  in the bulk on the RHS of \eqref{eqlimitingbulk} ($n=2m$), and at the soft edge \footnote{
By summing up all the $n$ equations in \eqref{softedge},  one easily shows that  the  multivariate Airy function is one possible  solution of  the following (single) PDE:
$ D_{0} F=p_{1}(s)F \,$
where $D_0$ is the differential operators defined in \eqref{diffops}. This PDE was first given \cite[Eq. (5.17)]{des} as an equation satisfied by the Airy function defined in \eqref{Airydef}.} on the RHS of \eqref{eqlimitingsoft}.  Then, $F$ satisfies respectively one system of PDEs
\begin{equation}\label{hardedge}
s_{k}\frac{\partial^{2}F}{\partial s^{2}_{k}}+ \frac{2}{\beta}(1+\lambda_{1})\frac{\partial F}{\partial s_{k}}+F+
\frac{2}{\beta}\sum_{j=1, j\neq k}^{n}
\frac{1}{s_{k}-s_{j}}\left(s_{k}\frac{\partial F}{\partial
s_{k}}-s_{j}\frac{\partial F}{\partial s_{j}}\right)=0 \ (k=1,\ldots,n),
\end{equation}
\begin{equation}\label{bulk} \frac{\partial^{2}F}{\partial s^{2}_{k}}+ F+\frac{2}{\beta}\sum_{j=1, j\neq k}^{n}
\frac{1}{s_{k}-s_{j}}\left(\frac{\partial F}{\partial
s_{k}}-\frac{\partial F}{\partial s_{j}}\right)=0  \ (k=1,\ldots,n),
\end{equation}
\begin{equation}\label{softedge} \frac{\partial^{2}F}{\partial s^{2}_{k}}-s_{k}F+\frac{2}{\beta}\sum_{j=1, j\neq k}^{n}
\frac{1}{s_{k}-s_{j}}\left(\frac{\partial F}{\partial
s_{k}}-\frac{\partial F}{\partial s_{j}}\right)=0 \ (k=1,\ldots,n).
\end{equation}
\end{proposition}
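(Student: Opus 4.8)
The plan is to recognize each of the three limiting functions as a (possibly transformed) multivariate hypergeometric function and to transport to \eqref{hardedge}--\eqref{softedge} the holonomic systems of $n$ second-order PDEs that such functions satisfy. Recall that the series ${}_pF^{(\alpha)}_q$ are joint solutions of these systems, obtained from the eigenfunction property \eqref{Jackeigen} of the Jack polynomials under $D_2-\tfrac{2}{\alpha}(n-1)E_1$ together with the first-order relations quoted from \cite{bf} in the proof of Proposition \ref{proppractical} (for instance $E_0^{(y)}{}_0\mathcal{F}_0(x;y)=p_1(x)\,{}_0\mathcal{F}_0(x;y)$); the precise per-index systems for ${}_0F^{(\alpha)}_1$, ${}_1F^{(\alpha)}_1$ and ${}_2F^{(\alpha)}_1$ are those of Kaneko \cite{kaneko} and Yan \cite{yan}. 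Throughout I set $\alpha=\beta/2$, so that $2/\beta=1/\alpha$, and I treat the three regimes separately, with two routes available: a direct change of variables in the hypergeometric system, and a scaling limit of the finite-$N$ systems satisfied by the polynomials $\varphi_N$.

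The hard edge is the clean case. By Theorem \ref{theohard} the relevant $F$ is ${}_0F^{(\alpha)}_1(c;-s_1,\dots,-s_n)$ with $c=(2/\beta)(\lambda_1+n)$. Starting from the ${}_0F^{(\alpha)}_1$ system in the variable $x$ and performing $x_k=-s_k$, the pure second-order piece $x_k\partial^2_{x_k}$ becomes $s_k\partial^2_{s_k}$, the first-order coefficient becomes $c-(n-1)/\alpha=(2/\beta)(1+\lambda_1)$, the Vandermonde term retains its shape up to the sign carried by $x_k-x_j=-(s_k-s_j)$, and the inhomogeneous term supplies $+F$; collecting everything reproduces \eqref{hardedge} verbatim, with no analytic subtlety.

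For the bulk and the soft edge I would rely on the finite-$N$ level. The expectations $\varphi_N$ are, up to elementary product weights, the multivariate Laguerre, Jacobi and Hermite polynomials, hence special ${}_1F^{(\alpha)}_1$ and ${}_2F^{(\alpha)}_1$, and so satisfy Kaneko's per-index systems; conjugation by a product weight preserves the per-index form, and inserting the rescaling $s\mapsto A+Bs$ and letting $N\to\infty$ yields a per-index limiting system. As an independent check on the soft edge I would also differentiate \eqref{Airydef} under the integral and use the symmetry of ${}_0\mathcal{F}^{(\alpha)}_0(s;iw)$ in its two sets of variables to replace the $s$-operator by the corresponding $w$-operator acting on the kernel, integrating by parts against $e^{ip_3(w)/3}|\Delta(w)|^{4/\beta}$; summing the $n$ equations and invoking the self-adjointness of $D_0=\sum_k(\partial^2_{w_k}+\tfrac{2}{\alpha}\sum_{j\neq k}(w_k-w_j)^{-1}\partial_{w_k})$ with respect to $|\Delta(w)|^{4/\beta}$ recovers at once the single PDE $D_0F=p_1(s)F$ of \cite{des} recorded in the footnote. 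For the bulk the special relation $c=2a$ forced by $n=2m$ suggests a cleaner derivation via a multivariate Kummer transformation that cancels the prefactor $e^{-i\pi p_1(s)}$ and reduces the function to a Bessel-type ${}_0F^{(\alpha)}_1$ in arguments built from $s_k^2$, the change of variables $x_k\propto s_k^2$ then converting $x_k\partial^2_{x_k}$ into a constant-coefficient $\partial^2_{s_k}$; this is consistent with Remark \ref{airytosine}, which exhibits the bulk limit as a rescaled soft-edge limit.

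The main obstacle is the \emph{per-index} (as opposed to summed) character of \eqref{bulk} and \eqref{softedge}. The symmetric tools just mentioned — the reproducing-kernel identities for ${}_0\mathcal{F}_0$ and the self-adjointness of the full operator $D_0$ — deliver only the summed equation, because the individual operators $\partial^2_{w_k}+\tfrac{1}{\alpha}\sum_{j\neq k}(w_k-w_j)^{-1}(\partial_{w_k}-\partial_{w_j})$, whose sum is $D_0$, are not separately self-adjoint for the weight $|\Delta(w)|^{4/\beta}$; obtaining the refined systems therefore genuinely needs Kaneko's per-index holonomic systems at finite $N$ and the passage to the limit. The two technical points to secure are that the rescaled coefficients converge to those of \eqref{hardedge}--\eqref{softedge} under the correct power-of-$N$ balance, and that $\varphi_N(A+Bs)$ converges to $F$ locally uniformly together with its first two derivatives; the latter follows from the analyticity of all functions involved and Vitali's theorem, given the locally uniform convergence already established in Theorems \ref{theosoft}--\ref{theohard}. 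I expect the soft edge to be the most delicate, both because its limit is the integral \eqref{Airydef} rather than a hypergeometric series and because the Airy-type $N^{1/3}$ balance leaves the least slack in matching the surviving terms.
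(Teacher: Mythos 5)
Your proposal follows essentially the same route as the paper's proof: the hard-edge system \eqref{hardedge} is read off verbatim from the Kaneko--Yan holonomic system for ${}_0F_1^{(\beta/2)}$ under the sign change $s\mapsto -s$ (your coefficient check $c-(n-1)\cdot 2/\beta=(2/\beta)(1+\lambda_1)$ and the overall sign bookkeeping are correct), while \eqref{bulk} and \eqref{softedge} are obtained by passing Kaneko's finite-$N$ per-index systems for the ${}_1F_1^{(\beta/2)}$/${}_2F_1^{(\beta/2)}$ representations of $\varphi_N$ to the limit, with the existence of the scaling limits supplied by Theorems \ref{theobulk} and \ref{theosoft} --- which is precisely the paper's argument, the finite-$N$ computation being delegated there to \cite{liu}. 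Your supplementary points --- that convergence of derivatives follows from analyticity of the polynomials $\varphi_N$ via Vitali's theorem, and that self-adjointness of $D_0$ against $|\Delta(w)|^{4/\beta}$ can only yield the summed equation $D_0F=p_1(s)F$ of the footnote rather than the per-index systems --- are sound refinements of the same strategy rather than a different proof.
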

\begin{proof}
 Since Kaneko \cite{kaneko} and Yan \cite{yan},  we know that the hypergeometric  function ${\phantom{|}}_{0} F_{1}^{(\beta/2)}((2/\beta)(\lambda_1+n);s)$ satisfies the following  holonomic system of PDEs:
\begin{equation} s_{k}\frac{\partial^{2}F}{\partial s^{2}_{k}}+ \frac{2}{\beta}(1+\lambda_{1})\frac{\partial F}{\partial s_{k}}-F+
\frac{2}{\beta}\sum_{j=1, j\neq k}^{n}
\frac{1}{s_{k}-s_{j}}\left(s_{k}\frac{\partial F}{\partial
s_{k}}-s_{j}\frac{\partial F}{\partial s_{j}}\right)=0\  (k=1,\ldots,n),
\end{equation}
   which implies    the first   system of PDEs given above.

The systems for the bulk and  the soft-edge of the classical $\beta$-ensembles were first observed in \cite{liu} by exploiting Kaneko's system of PDEs for the hypergeometric function  ${}_2F_1$ \cite{kaneko}.     However, the derivation in \cite{liu} was incomplete since it was assumed without proof that the scaling limits of $\varphi_N$ should exist in the bulk and at the edge.  Theorems \ref{theobulk} and \ref{theosoft} now make this assumption superfluous and the proposition follows.
\end{proof}

In the one-dimensional case, \eqref{bulk} and \eqref{softedge} respectively reduce to
the well-known differential equations satisfied by  the sine (or cosine)   and Airy functions.  However, in the  higher-dimensional case, it seems difficult to find all solutions of \eqref{bulk} and \eqref{softedge}, one of which being  our limiting expectation value.

Let us now pay more attention on the $n=2$ case.

\begin{proposition}[Bulk limit when $n=2$] Let $F(s_1,s_2)$ be the bulk limit on the RHS of \eqref{eqlimitingbulk}, then
\be \label{besselfunction} F(s_1,s_2)=2^{\frac{2}{\beta}-\frac{1}{2}}\Gamma(\frac{2}{\beta}+\frac{1}{2}) \,( s_1-s_2 )^{\frac{1}{2}-\frac{2}{\beta}}J_{\frac{2}{\beta}-\frac{1}{2}}( s_1-s_2 ).\ee
\end{proposition}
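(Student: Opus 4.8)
The plan is to use the system of partial differential equations \eqref{bulk} satisfied by the bulk limit $F$ (established in the preceding proposition), together with the manifest symmetry of $F$ in $s_1,s_2$, to collapse the two-variable problem to a single ordinary differential equation in the gap variable $v=s_1-s_2$; this equation will turn out to be a disguised Bessel equation. Throughout I write $\nu=2/\beta-\tfrac12$.

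First I would write out the two equations of \eqref{bulk} for $n=2$ and $k=1,2$ and subtract them. Since the interaction term $\tfrac{2}{\beta}(s_k-s_j)^{-1}(F_{s_k}-F_{s_j})$ is invariant under $k\leftrightarrow j$, the subtraction leaves only $F_{s_1s_1}=F_{s_2s_2}$. In the coordinates $u=s_1+s_2$, $v=s_1-s_2$ this reads $F_{uv}=0$, so $F=g(u)+h(v)$, and the symmetry $F(s_1,s_2)=F(s_2,s_1)$ forces $h$ to be even. To see that the center-of-mass variable $u$ decouples entirely, I would invoke the explicit form of $F$: applying the second equality of Corollary \ref{practical} to $F=\gamma_1(4/\beta)\,{}_0\mathcal{F}_0^{(\beta/2)}(i\pi s;1,-1)$ rewrites it as $\gamma_1(4/\beta)\,e^{i\pi(s_1-s_2)}\,{}_1F_1^{(\beta/2)}(2/\beta;4/\beta;2\pi i(s_2-s_1))$, in which the univariate ${}_1F_1^{(\beta/2)}$ is the classical Kummer function because Jack polynomials in one variable are monomials. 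Thus $F$ is a function of $v$ alone, so $g$ is constant.

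With $F=F(v)$ and $F_{s_1}-F_{s_2}=2F'(v)$, either equation of \eqref{bulk} reduces to the scalar ODE
\[ F''(v)+\frac{4}{\beta v}F'(v)+F(v)=0 . \]
The substitution $F(v)=v^{-\nu}y(v)$ transforms this into Bessel's equation of order $\nu$, so $F(v)=v^{-\nu}\bigl(c_1J_\nu(v)+c_2Y_\nu(v)\bigr)$. Analyticity of $F$ at $v=0$—guaranteed because the defining Jack series converges to an entire function—eliminates the singular branch $c_2Y_\nu$. Finally I would fix $c_1$ by evaluating on the diagonal $s_1=s_2$, where the confluent series equals $1$, and comparing with the small-argument asymptotics $v^{-\nu}J_\nu(v)\to(2^{\nu}\Gamma(\nu+1))^{-1}$; this yields $c_1=2^{\nu}\Gamma(\nu+1)=2^{2/\beta-1/2}\Gamma(2/\beta+\tfrac12)$ and hence \eqref{besselfunction}.

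The main obstacle is the normalization bookkeeping in this last step. To express the diagonal value cleanly and match it against the Bessel normalization, one combines Kummer's identity ${}_1F_1(\nu+\tfrac12;2\nu+1;2iz)=\Gamma(\nu+1)e^{iz}(z/2)^{-\nu}J_\nu(z)$ with Legendre's duplication formula applied to $\gamma_1(4/\beta)=2\Gamma(1+2/\beta)/\Gamma(1+4/\beta)$, and one must also reconcile the scaling of the argument that makes \eqref{bulk} $\pi$-free. Everything upstream—the cancellation of the interaction terms, the reduction $F_{uv}=0$, and the passage to Bessel form—is routine.
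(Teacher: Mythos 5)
Your proposal is correct and follows essentially the same route as the paper's proof: both reduce $F$ to a function of the single variable $v=s_1-s_2$ via the translation identities of Section \ref{hypergeometric} (the paper applies Proposition \ref{proppractical} directly, you apply Corollary \ref{practical}; your preliminary reduction $F_{uv}=0$ is a harmless redundancy), then derive the scalar ODE $f''+\frac{4/\beta}{v}f'+f=0$ from \eqref{bulk}, solve it as a Bessel equation of order $\nu=2/\beta-1/2$, discard the singular branch by analyticity, and fix the constant by the value $1$ at coincident arguments. The one caveat---shared with the paper itself, which silently writes $F(s_1,s_2)={}_0\mathcal{F}_0^{(\beta/2)}(i,-i;s_1,s_2)$---is that \eqref{besselfunction} holds for the bulk limit with the coefficient $\gamma_1(4/\beta)$ and the factor $\pi$ in the argument stripped off, a normalization convention you correctly flag in your final paragraph.
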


 \begin{proof} First of all, according to \eqref{eqlimitingbulk}, we have $ F(s_1,s_2)=\!\!\!{\phantom{j}}_0\mathcal{F}^{(\beta/2)}_0(i ,-i ;s_1,s_2)$. But  formula  \eqref{vip1} allows us to write
$$F(s_1,s_2)=e^{-i ( s_1-s_2)}\!\!\!{\phantom{j}}_0\mathcal{F}^{(\beta/2)}_0(2i ,0;s_1-s_2,0).$$

We see from the last expression that $F(s_1,s_2)$ only depends on $s_1-s_2$, so we set $F(s_1,s_2)=f(s_1-s_2)$.  Hence,  $f(x)$ is an analytic  function with $f(0)=1$ and $f(x)=f(-x)$.  Moreover, we know from \eqref{bulk}   that
$f(x)$ satisfies
\be f''+\frac{(4/\beta)}{x}f'+ f=0, \label{bulkoneeq}\ee
which can be reduced to the Bessel equation (cf. (4.5.9) in \cite{aar}). From this, we get
$$f(x)=2^{\frac{2}{\beta}-\frac{1}{2}}\Gamma(\frac{2}{\beta}+\frac{1}{2})\,   x ^{\frac{1}{2}-\frac{2}{\beta}}J_{\frac{2}{\beta}-\frac{1}{2}}( x),$$
where $J_{\alpha}(x)$ is the Bessel function of first kind of order $\alpha$, and the proposition follows. \end{proof}

It is worth noting that the RHS of \eqref{besselfunction} was first obtained by Aomoto \cite{aom2} (at zero) and by Su \cite{su} in the bulk of the $\HBE$ with $0<\beta<4$.  For the circular $\beta$-ensemble,  a similar result was obtained by Forrester\cite{forrester92} where Eq.\eqref{bulkoneeq} was also given (see Eq.(4.9) therein).

The case $n=2$ at the soft edge was also previously studied.  Indeed,  Su \cite{su}   proved for the  H$\beta$E,  with the aid of
 Dumitriu and  Edelman's tri-diagonal matrix model,  that the scaling limit of  $\varphi_N(s)$ has   a single integral   representation:
 $$\frac{1}{4\pi^{3/2}i}\int_{1-i\infty}^{1+i\infty} \frac{e^{\frac{1}{12}z^{3}-\frac{1}{2}(s_1+s_2)-\frac{1}{4z}(s_1-s_2)^{2}}}{z^{\frac{2}{\beta}+\frac{1}{2}}}d z.$$
 This integral was first  defined by K\"{o}sters \cite{koster}; it is believed to be proportional to our 2-dimensional integral representation.

We end this section with a few remarks on the $\beta=\infty$ case. 
  It is well known that the parameter $\beta$ of Random Matrix Theory  can be interpreted in Statistical Mechanics as the inverse temperature of a log-gas system.  Thus $\beta=\infty$ corresponds to the completely frozen state of the system, which is the state where the particles no longer move.  The following proposition  is  consistent with this  physical  phenomenon.

	\begin{proposition}[Soft-edge limit at $\beta=\infty$]
	Let $\mathrm{Ai}(x)$ denote  the one-variable Airy function of first kind. 
	Then
	\be \mathrm{Ai}^{(\infty)}(s)=\prod_{j=1}^{n} \mathrm{Ai}(s_j)\, .\ee
	\end{proposition}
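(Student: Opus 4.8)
The plan is to compute the limit defining $\mathrm{Ai}^{(\alpha)}(s)$ as $\alpha\to\infty$ (equivalently $\beta\to\infty$, since $\alpha=\beta/2$) directly from the integral representation
\[
\mathrm{Ai}^{(\alpha)}(s)=\frac{1}{(2\pi)^n}\int_{\mathbb{R}^n}e^{ip_3(t)/3}|\Delta(t)|^{2/\alpha}\,{}_0\mathcal{F}_0^{(\alpha)}(s;it)\,dt,
\]
and to show that in this limit the integrand factorizes over the coordinates, so that the $n$-dimensional integral splits into a product of $n$ one-dimensional Airy integrals. The first step is to understand the $\alpha\to\infty$ behaviour of the two $\alpha$-dependent factors. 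The factor $|\Delta(t)|^{2/\alpha}$ tends pointwise to $1$ as $\alpha\to\infty$, and the crucial observation is the analogous decoupling of the hypergeometric factor: I would show that $\,{}_0\mathcal{F}_0^{(\alpha)}(s;it)\to \prod_{j=1}^n e^{is_jt_j}$ as $\alpha\to\infty$. This is the heart of the argument.

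To establish the limit of ${}_0\mathcal{F}_0^{(\alpha)}$, I would return to its defining Jack-polynomial series in Definition \ref{HFpqDef}, namely
\[
{}_0\mathcal{F}_0^{(\alpha)}(s;it)=\sum_{\kappa}\frac{1}{h_\kappa^{(\alpha)}}\frac{P_\kappa^{(\alpha)}(s)\,P_\kappa^{(\alpha)}(it)}{P_\kappa^{(\alpha)}(1^n)},
\]
and analyse each $\alpha$-dependent ingredient as $\alpha\to\infty$. As $\alpha\to\infty$ the Jack polynomials $P_\kappa^{(\alpha)}$ degenerate to the monomial symmetric functions $m_\kappa$ (this is the classical $\alpha\to\infty$ limit of Jack polynomials, recorded in Macdonald and Stanley), while from the hook-length formula \eqref{defhook} and the evaluation formula for $P_\kappa^{(\alpha)}(1^n)$ used in the proof of Corollary \ref{practical} one reads off the limiting normalizations. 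The key point is that in the limit the factor $P_\kappa^{(\alpha)}(1^n)$ and $h_\kappa^{(\alpha)}$ combine so that only the "single-row-per-variable" contributions survive with the right weights, and the sum collapses to $\prod_j\sum_{k\geq 0}\frac{(is_jt_j)^{k}}{k!}=\prod_j e^{is_jt_j}=e^{ip_1(\,\cdot\,)}$-type products. Concretely I would argue that the surviving combinatorial coefficients reproduce exactly the multinomial expansion of $\prod_{j=1}^n e^{is_jt_j}$; a clean way to see this is that for $\alpha=\infty$ the function ${}_0\mathcal{F}_0$ must factorize because the two-variable hypergeometric series becomes multiplicative when the Jack polynomials become monomials.

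Granting the two limits $|\Delta(t)|^{2/\alpha}\to 1$ and ${}_0\mathcal{F}_0^{(\alpha)}(s;it)\to\prod_j e^{is_jt_j}$, the integral factorizes:
\[
\mathrm{Ai}^{(\infty)}(s)=\frac{1}{(2\pi)^n}\int_{\mathbb{R}^n}\prod_{j=1}^n e^{i(t_j^3/3+s_jt_j)}\,dt
=\prod_{j=1}^n\frac{1}{2\pi}\int_{\mathbb{R}} e^{i(t^3/3+s_jt)}\,dt=\prod_{j=1}^n\mathrm{Ai}(s_j),
\]
using the standard contour-integral representation of the one-variable Airy function. The main obstacle I anticipate is the interchange of limit and integral needed to pass from the pointwise convergence of the integrand to convergence of the integral: the integrals are only conditionally convergent (oscillatory, via the cubic phase $ip_3(t)/3$), so dominated convergence does not apply naively. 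I would handle this either by deforming the $t_j$-contours into the complex plane along steepest-descent rays (where the integrand becomes absolutely convergent and one can invoke dominated convergence, using a uniform-in-$\alpha$ bound on $|\Delta(t)|^{2/\alpha}\leq\max(1,|\Delta(t)|)$ together with the absolute convergence of ${}_0\mathcal{F}_0^{(\alpha)}$ guaranteed after Definition \ref{HFpqDef}), or by appealing to the already-established asymptotics in Proposition \ref{propAiry}, whose leading coefficients are uniform enough to justify the exchange. The contour-deformation route seems the most self-contained and is consistent with the branch and steepest-descent conventions set up in Section \ref{laplace}.
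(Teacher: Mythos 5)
Your overall architecture (degenerate the Jack series at $\alpha=\infty$, then factorize the integral) is the paper's, but your central claim is wrong as stated. You assert that ${}_0\mathcal{F}_0^{(\alpha)}(s;it)\to\prod_{j=1}^n e^{is_jt_j}$ as $\alpha\to\infty$, arguing that ``the surviving combinatorial coefficients reproduce exactly the multinomial expansion'' and that ``the two-variable hypergeometric series becomes multiplicative when the Jack polynomials become monomials.'' This cannot happen: ${}_0\mathcal{F}_0^{(\alpha)}(s;it)$ is a symmetric function of $s$ (and of $t$ separately) for every $\alpha$, including $\alpha=\infty$, whereas $\prod_j e^{is_jt_j}$ is not symmetric in $s$ alone. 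Concretely, the degree-one term of the $\alpha=\infty$ series is $\tfrac{1}{n}\,p_1(s)\,p_1(it)$, not $i\sum_j s_jt_j$, so the claimed collapse fails already at first order. The correct degeneration, which is exactly what the paper establishes, is the symmetrized product
\[
{}_0\mathcal{F}_0^{(\infty)}(s;it)=\frac{1}{n!}\sum_{\sigma\in S_n}\prod_{j=1}^n e^{is_jt_{\sigma(j)}},
\]
obtained from $P_\kappa^{(\infty)}=m_\kappa$, $h_\kappa^{(\infty)}=\kappa_1!\cdots\kappa_n!$ (the $\alpha\to\infty$ limit of \eqref{defhook}), and the elementary identity $m_\kappa(y)/m_\kappa(1^n)=\frac{1}{n!}\sum_{\sigma\in S_n}\prod_j y_j^{\kappa_{\sigma(j)}}$. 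Your final conclusion survives, but for a different reason than the one you give: since $e^{ip_3(t)/3}$ and Lebesgue measure on $\mathbb{R}^n$ are permutation invariant, each of the $n!$ terms in the symmetrized sum contributes the same value $\prod_{j=1}^n\frac{1}{2\pi}\int_{\mathbb{R}}e^{i(t^3/3+s_jt)}\,dt=\prod_{j=1}^n\mathrm{Ai}(s_j)$, and the prefactor $1/n!$ cancels. Without this symmetrization-plus-invariance step, your factorization of the $n$-dimensional integral is unjustified.

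A secondary point: you make the analytic side harder than the statement requires. The paper does not compute a limit $\alpha\to\infty$ of $\mathrm{Ai}^{(\alpha)}(s)$ and interchange it with the integral; it reads $\mathrm{Ai}^{(\infty)}$ as the $\alpha=\infty$ member of the family, i.e. it sets $|\Delta(t)|^{2/\alpha}\equiv 1$ and inserts the exact $\alpha=\infty$ evaluation of ${}_0\mathcal{F}_0$ directly into \eqref{Airydef}, so the proof is purely algebraic modulo the classical contour representation of the one-variable Airy function. Your proposed dominated-convergence/steepest-descent program for exchanging $\lim_{\alpha\to\infty}$ with the oscillatory integral is therefore unnecessary under the intended reading (and would indeed require real work if pursued; note also that your bound $|\Delta(t)|^{2/\alpha}\le\max(1,|\Delta(t)|)$ is only uniform for $\alpha\ge 2$, which is harmless here but symptomatic of the extra care that route demands).
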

	\begin{proof}
	As is well known
 $P^{(\infty)}_\kappa(s)=m_\kappa(s)$ \cite{stanley},  so we have
 \be \label{infinitehf}\!\!\!{\phantom{j}}_0\mathcal{F}^{(\infty)}_0(y;s)=\sum_{\kappa}\frac{1}{\kappa_1 !\cdots \kappa_n ! }\frac{m_{\kappa}(y)m_{\kappa}(s)}{m_{\kappa}(1^{n})}.\ee Now,  it is a simple exercise to show that
  $$\frac{m_{\kappa}(y)}{m_{\kappa}(1^{n})}=\frac{1}{n!}
 \sum_{\sigma\in S_{n}}\prod_{j=1}^{n}y_{j}^{\kappa_{\sigma(j)}}, $$
which readily implies  that
 \be \label{identity}\!\!\!{\phantom{j}}_0\mathcal{F}^{(\infty)}_0(y;s)=\frac{1}{n!}
 \sum_{\sigma\in S_{n}}\prod_{j=1}^{n}e^{s_{j}y_{\sigma(j)}}.
 \ee
After substitution of the latter result in  the integral formula \eqref{Airydef}, which defines the multivariate Airy function, one finally  gets the desired formula.
\end{proof}

 Note that the situation is a little   more complicated in the bulk.  For instance,
  when $n=2$,  use of \eqref{besselfunction} implies that
  \be \!\!\!{\phantom{j}}_0\mathcal{F}^{(\infty)}_0(i ,-i ;s_1,s_2)=2^{-\frac{1}{2}}\Gamma(\frac{1}{2})\sqrt{ s_1-s_2 }\,J_{-\frac{1}{2}}(s_1-s_2 )=\cos(s_1-s_2),\ee
  in which the variables cannot separate. We refer to \cite{due2} for more information on large $\beta$ asymptotics.

Note also that when $\beta=\infty$,  the systems of PDEs such as \eqref{bulk} and \eqref{softedge}  can be directly solved. Indeed, the linearly
independent symmetric solutions
are respectively \be \label{solutionssine}\frac{1}{n!}
 \sum_{\sigma\in S_{n}} e^{i (s_{\sigma(1)}+\cdots +s_{\sigma(j)}-s_{\sigma(j+1)}-\cdots -s_{\sigma(n)})}
 =\!\!\!{\phantom{j}}_0\mathcal{F}^{(\infty)}_0(1^{j},(-1)^{n-j};i s), \quad j=0,1, \ldots,n,\ee
 and
 \be \label{solutionsairy}\frac{1}{n!}
 \sum_{\sigma\in S_{n}}\mathrm{Ai}(s_{\sigma(1)})\cdots \mathrm{Ai}(s_{\sigma(j)})\mathrm{Bi}(s_{\sigma(j+1)})\cdots \mathrm{Bi}(s_{\sigma(n)}), \quad j=0,1, \ldots,n.\ee
 where   the equality in \eqref{solutionssine} comes from \eqref{identity} and $\mathrm{Bi}(x)$ denotes the one-variable Airy function of second kind.

Now a natural question arises: Can we guess results  similar
 to \eqref{solutionssine} and \eqref{solutionsairy} for general $\beta$? In particular,  are $\!\!\!{\phantom{j}}_0\mathcal{F}^{(\beta/2)}_0((-1)^{j},1^{n-j};i s)$ where $j=0,1, \ldots,n$ all  linearly
independent symmetric solutions of \eqref{bulk}? Note that  if $j=0$ or $n$ then  $$\!\!\!{\phantom{j}}_0\mathcal{F}^{(\beta/2)}_0((-1)^{j},1^{n-j};i s)=e^{\pm i p_{1}(s)}$$
satisfies \eqref{bulk}. Moreover, according to our bulk scaling at zero for the H$\beta$E, and for $n=2m$ or $2m-1$, the functions
$$\!\!\!{\phantom{j}}_0\mathcal{F}^{(\beta/2)}_0((-1)^{m},1^{n-m};\pm i s)$$
are also solutions of \eqref{bulk} (see \eqref{evenbulkHE} and \eqref{oddbulkHE} for $\theta=0,\, l=0,1$).

\textbf{Conjecture}:\textit{ The set of symmetric solutions of the system of PDEs \eqref{bulk} is spanned  by the  $n+1$ linearly independent functions 
$$\!\!\!{\phantom{j}}_0\mathcal{F}^{(\beta/2)}_0((-1)^{j},1^{n-j};i s), \qquad  j=0,1, \ldots,n. $$}

\begin{acknow}

The work of P.~D.\ was  supported by FONDECYT grant \#1090034 and by CONICYT through the Anillo de Investigaci\'on ACT56.
The work of D.-Z.~L.\ was  supported by FONDECYT grant \#3110108 and partially by the National Natural Science Foundation of China (Grant No. 11171005).
\end{acknow}

 \begin{appendix}

\section{Notation and  constants}

Most of the constants used in the article can be derived from Selberg's integrals \cite{forrester,mehta}:
\begin{align} \label{Selbergformula} S_{N}(\lambda_{1},\lambda_{2},\lambda_3):&= \int_{[0,1]^N}\prod_{i=1}^N x_i^{\lambda_1}(1-x_i)^{\lambda_2}\prod_{1\leq k<j \leq N}|x_k-x_j|^{2\lambda_3}\,d^{N}x\nonumber\\
&=\prod_{j=0}^{N-1}
\frac{\Gamma(1+\lambda_3+j\lambda_3) \Gamma(1+\lambda_{1}+j\lambda_3) \Gamma(1+\lambda_{2}+j\lambda_3)}
{\Gamma(1+\lambda_3)\Gamma(2+\lambda_{1}+\lambda_{2}+(N+j-1)\lambda_3)}.
\end{align}
In particular, one readily shows that
\be \label{constantforl}W_{\lambda_{1}, \beta,N}=(2/\beta)^{(1+\lambda_{1})N+\beta N(N-1)/2}\prod_{j=0}^{N-1}
\frac{\Gamma(1+\beta/2+j\beta/2) \Gamma(1+\lambda_{1}+j\beta/2)}
{\Gamma(1+\beta/2)},\ee
 and \be \label{constantforg} G_{\beta,N}=\beta^{-N/2-\beta N(N-1)/4}(2\pi)^{N/2}\prod_{j=0}^{N-1}
\frac{\Gamma(1+\beta/2+j\beta/2)}
{\Gamma(1+\beta/2)}.\ee
For the Gaussian case, it is often more convenient to use the following integral:
\be\label{gaussintgen} \int_{\mathbb{R}^n}\prod_{i=1}^n e^{-zx_i^2/2}\prod_{1\leq k<j \leq n}|x_k-x_j|^{\beta}\,d^{n}x=\frac{1}{z^{(n+\beta n(n-1)/2)/2}}\Gamma_{\beta,n},\qquad \textrm{Re}\{z\}>0,
\ee
where
\be \label{constgamma}
 \Gamma_{\beta,n}=(2\pi)^{n/2}\prod_{j=1}^n\frac{\Gamma(1+j\beta/2)}{\Gamma(1+\beta/2)}.
\ee
The Morris normalization constant is
\be M_n(a,b, \alpha)= \prod_{j=0}^{n-1}
\frac{\Gamma(1+ \alpha+j \alpha) \Gamma(1+a+b+j \alpha)}
{\Gamma(1+ \alpha)\Gamma(1+a+j \alpha)\Gamma(1+b+j \alpha)}.\ee
The constants for the  soft edge are
\be\label{eqPhi}
\Phi_{\!^{N,n}}=\begin{cases}  N^{\beta'\! n(n-1)/12+n/6}\exp\{ -n N(1+\ln 2-\ln N-2i\pi)/2\}  &\HBE \\
   2^{-\beta'\! n(n-1)/6-\beta'\! n/2+2n/3} N^{\beta'\! n(n-1)/12+\beta'\! n \lambda_1/4+ n/6}\exp\{ -n N(1-\ln N-i\pi)\}&\LBE
\end{cases}
\ee
In the bulk with $n=2m$,  the constants are
\be\label{eqPsi}
\Psi_{\!^{N,2m}}=\begin{cases}(\pi \rho)^{\beta'\! m(m+1)/2-m} N^{\beta'\! m^{2}/2}\exp\{ -m N(1+\ln 2-\ln N)\} &\HBE \\
(\pi \rho/2)^{\beta'\! m(m+1)/2-m} N^{\beta'\! m(m+\lambda_1)/2}\exp\{ -2m N(1-\ln N)\} &\LBE \\
 (\pi \rho)^{\beta'\! m(m+1)/2-m} N^{\beta'\! m^{2}/2}2^{-\beta'\! m^{2}/2-\beta'\! m(\lambda_{1}+\lambda_{2}+1)+2m(1-2N)} & \JBE
\end{cases}
\ee where  $\rho=\tfrac{2}{\pi}\sqrt{1-u^{2}}$, $\tfrac{2}{\pi}\sqrt{\tfrac{1-u}{u}}$, $\tfrac{1}{\pi}\tfrac{1}{\sqrt{u(1-u)}}$,
respectively correspond to the $\HBE$, $\LBE$ and $\JBE$. \\
The constants for the bulk with $n=2m-1$ are
\be\label{eqPsiodd}
\Psi_{\!^{N,2m-1}}^{_{(l)}}=\begin{cases}&\tbinom{2m-1}{m} \Gamma_{\beta'\!,m-1} \Gamma_{\beta'\!,m}(\Gamma_{\beta'\!,2m-1})^{-1}(\pi \rho)^{\beta'\!(m^{2}-1)/2-(2m-1)/2} N^{\beta'\! m(m-1)/2} \\ &\times   \exp\{ -(2m-1) N(1+\ln 2-\ln N)/2\}\, (\sqrt{N/2})^{-(2m-1)l} (2i\sqrt{\pi \rho/2}) \hspace{2.28cm}\,\HBE \\
&\tbinom{2m-1}{m} \Gamma_{\beta'\!,m-1} \Gamma_{\beta'\!,m}(\Gamma_{\beta'\!,2m-1})^{-1}(\pi \rho/2)^{\beta'\!(m^{2}-1)/2- m+1} \sqrt{2}(2\sqrt{u})^{-\beta'\!/2+1}  \\ &\times  N^{\beta'\! m(m-1)/2+\beta'\! (2m-1)\lambda_1/4} \exp\{ -(2m-1) N(1-\ln N-i\pi)\}\, (-N)^{-(2m-1)l} \hspace{1.cm}  \LBE \\
& \tbinom{2m-1}{m} \Gamma_{\beta'\!,m-1} \Gamma_{\beta'\!,m}(\Gamma_{\beta'\!,2m-1})^{-1}(\pi \rho)^{\beta'\!(m-1)^{2}/2 +(\beta'\!-2m+1)/2} \sqrt{2}(2\sqrt{u})^{-\beta'\!/2+1} N^{\beta'\! m(m-1)/2}  \\ &\times  i^{-1}(-1)^{n(N-1)}
2^{-\beta'\! m(m+1)/2-\beta'\! (2m-1)(\lambda_{1}+\lambda_{2})/2+(2m-1)(1-2N+2l)+1}
(1-u)^{nl/2} \sqrt[4]{u}  \hspace{.3cm}\JBE
\end{cases}
\ee
For the hard edge, the constants are
\be\label{eqxi}
\xi_{N,n}=\begin{cases} \frac{W_{\lambda_1+n, \beta,N}}{W_{\lambda_1, \beta,N}} &\LBE \\  \frac{S_N(\lambda_1+n,\lambda_2;\beta/2)}{S_N(\lambda_1,\lambda_2;\beta/2)} & \JBE.
\end{cases}
\ee
Finally, the universal coefficients  are \be a_{k}(\beta)=  (\beta/2)^{(\beta k+1)k}(\Gamma(1+\beta/2))^{k} \prod_{j=1}^{2k} \frac{(\Gamma(1+2/\beta))^{\beta /2}}{\Gamma(1+\beta j/2) }\label{univcoefficienta}, \ee
  \be b_{k}(\beta)=  (\beta/2)^{\beta k(k-1)/2}(\Gamma(1+\beta/2))^{k} \prod_{j=0}^{k-1}\frac{\Gamma(1+\beta j/2)}{\Gamma(1+\beta (k+j)/2)} \label{univcoefficientb},\ee
and \be \gamma_{m}(\beta'\!)=\binom{2m}{m}\prod_{j=1}^m\frac{\Gamma(1+\beta'\!j/2)}{\Gamma(1+\beta'\!(m+j)/2)}.\label{univcoefficientevenbulk}\ee
 \end{appendix}

\end{document}